\documentclass[12pt]{article}
\usepackage{times,amssymb,latexsym,amsmath,setspace}
\usepackage[a4paper,top=1in,bottom=1in,left=0.8in,right=0.8in]{geometry}
\usepackage[auth-sc,affil-it]{authblk}
\usepackage{array,url}
\usepackage{enumitem}
\usepackage{multirow,tabularx}
\usepackage[comma]{natbib}
\usepackage{bibunits}
\defaultbibliographystyle{apalike}
\defaultbibliography{main}
\usepackage{bm} 
\usepackage{amsthm}
\usepackage{makecell}
\usepackage{xcolor}

\usepackage{booktabs}

\usepackage{caption} 

\usepackage{graphicx}
\usepackage[figuresright]{rotating}
\newcommand{\ind}{\rotatebox[origin=c]{90}{$\models$}}

\newcolumntype{M}[1]{>{\raggedright\arraybackslash}m{#1}}
\newcolumntype{C}[1]{>{\centering\arraybackslash}m{#1}}
\newcommand{\estcell}[1]{\begin{minipage}[c]{\linewidth}\raggedright\footnotesize #1\end{minipage}}

\usepackage{tikz}
\usetikzlibrary{arrows.meta, positioning}

\newtheoremstyle{note}
{2pt}
{2pt}
{}
{}
{\bfseries}
{:}
{.5em}
{}

\theoremstyle{note}
\newtheorem{theorem}{Theorem}
\newtheorem{lemma}{Lemma}

\newtheorem{remark}{Remark}
\newtheorem{example}{Example}
\newtheorem{assumption}{Assumption}

\newtheorem{prop}{Proposition}
\allowdisplaybreaks

\newtheorem*{assumption*}{Assumption}

\newcounter{suppsection}
\renewcommand{\thesuppsection}{S\arabic{suppsection}}
\newcommand{\suppsection}[2]{%
  \refstepcounter{suppsection}%
  \section*{\thesuppsection: #1}%
  \label{#2}%
  \renewcommand{\thetable}{\thesuppsection.\arabic{table}}%
  \renewcommand{\thefigure}{\thesuppsection.\arabic{figure}}%
  \setcounter{table}{0}%
  \setcounter{figure}{0}%
}

\def\bSig\mathbf{\Sigma}

\setlist[itemize]{topsep=2pt,itemsep=1pt,parsep=0pt,partopsep=0pt}

\begin{document}
\begin{bibunit}

\title{Causal Inference for All: Marginal Estimands for Outcomes Truncated by Death}
\author[1]{Ruixuan Zhao}
\author[2]{Mats Stensrud}
\author[1,3]{Linbo Wang}
\affil[1]{Department of Computer and Mathematical Sciences, University of Toronto Scarborough}
\affil[2]{Institute of Mathematics, \'Ecole Polytechnique F\'ed\'erale de Lausanne}
\affil[3]{Department of Statistical Sciences, University of Toronto}
\renewcommand\Authfont{\normalsize}
\renewcommand\Affilfont{\small}
\date{}
  \maketitle

\begin{abstract}
In longitudinal studies, outcomes of interest are often truncated by death, meaning that they are only observed or well-defined conditional on intercurrent events such as survival. Existing strategies face a trade-off: causally interpretable estimands, such as survivor average causal effects, target a latent subgroup, whereas while-alive and composite summaries apply to the full population but are difficult to interpret as causal effects on the non-mortality outcome. We address these challenges by introducing methodology for a new set of estimands that (i) concern the entire population, (ii) remain causally interpretable, and (iii) leverage the longitudinal data commonly available in studies with outcomes truncated by death. The set of estimands includes single-world marginal separable effects that generalize conditional separable effects to full-population summaries. We develop identification and estimation results for these estimands and apply the methodology in a reanalysis of a prostate cancer trial, highlighting how different estimands can yield different treatment conclusions.
\end{abstract}

\begin{keywords}
 Local average treatment effect; Longitudinal data; Selection bias; Separable effect; Survivor average causal effect.
\end{keywords}

\section{Introduction}

In longitudinal studies, outcomes are often only well-defined conditional on a post-treatment event, such as survival.  This creates a fundamental challenge for causal inference because contrasts restricted to those who survive need not have a causal interpretation. Truncation by death is the canonical example: in studies of quality of life (QoL), outcomes are well-defined only while participants are alive. In HIV vaccine studies, virologic and immunologic outcomes such as HIV viral load and CD4 cell count are often of interest only among individuals who become infected \citep{gilbert2003sensitivity, wang2017causal}. In labor economics, the effect of a job training program on salary is of interest only for individuals who obtain employment, with job attainment truncating the earnings outcome.

In studies with a single follow-up time, a naive yet common practice is to restrict comparisons to observed survivors in each treatment arm. In longitudinal settings, the while-alive strategy extends this survivor-restricted idea by aggregating each participant's outcomes over observed survival time.
Because the strategy includes outcomes only over observed survival time, it can be viewed as a longitudinal analogue of survivor-restricted comparisons, even though the International Council for Harmonisation (ICH) E9(R1) Addendum \citep{ich2019addendum} does not formally define such an estimand. The comparisons are not causally interpretable because they contrast outcomes from different subpopulations or from different time periods under different treatment arms, thereby introducing selection bias. Another common approach is to treat truncation by death as a missing-data problem, either by estimating outcomes under a hypothetical intervention on the intercurrent event, that is, under a counterfactual scenario in which the intercurrent event does not occur \citep[the hypothetical strategy;][]{ich2019addendum}, or by assigning a prespecified value or rank to outcomes after death \citep[the composite strategy;][]{ich2019addendum}. However, choosing an appropriate prespecified value or rank can be controversial and of limited interpretability \citep{lachin2020worst}. More fundamentally, outcomes such as QoL are arguably undefined, rather than missing, for individuals who die \citep{little2012prevention}.

To address these challenges, a strategy widely advocated in the causal inference literature is to focus on the always-survivor group, that is, the individuals who would survive under either treatment \citep[][\S 12.2]{robins1986new}. The resulting survivor average causal effect (SACE) \citep{rubin2006causal} has generated a large literature on identification, estimation, and sensitivity analysis \citep[e.g.][]{gilbert2003sensitivity,ding2011identifiability,wang2017causal,wang2017identification}, and is also included among the estimands discussed in the ICH E9(R1) Addendum \citep{ich2019addendum}. Yet the SACE has had limited practical uptake, partly because it targets an unobserved and often healthier principal stratum rather than the full population relevant for clinical or policy decisions \citep{bornkamp2021principal,stensrud2022translating}. This limitation is not merely technical: because principal-stratum membership is unknown, recommendations based on the SACE may be difficult to apply to individual patients and may expose individuals outside the principal stratum to an ineffective or even harmful treatment, raising ethical concerns \citep{robins1986new,hernan2018cautions,stensrud2022separable}.

Recently, \cite{stensrud2023conditional} proposed the conditional separable effect (CSE), which conceptualizes treatment as two components operating through distinct pathways on survival and the outcome. The CSE avoids the cross-world conditioning in the SACE, but still conditions on survival under an intervention on the survival-related component. It is therefore causally interpretable but remains a survival-conditioned estimand, rather than a full-population summary of longitudinal outcomes.

\begin{figure}[!htbp]
    \centering
\includegraphics[width=0.85\linewidth]{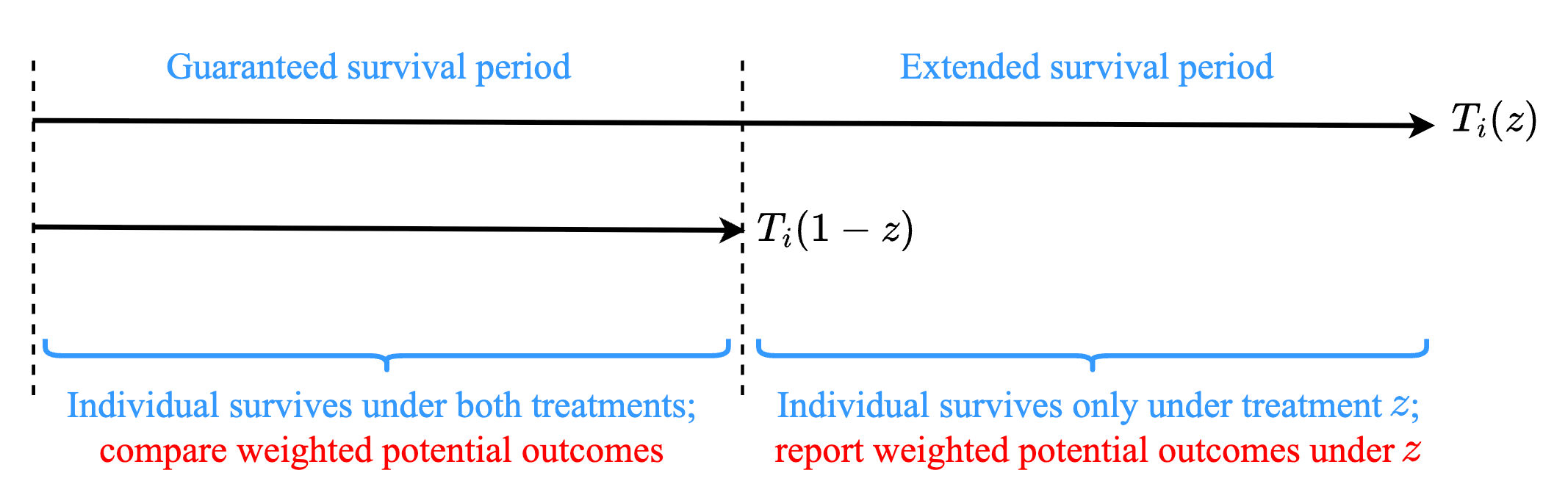}
    \caption{Illustration of the ``while guaranteed-survival'' and ``while extended-survival'' estimands, each associated with a distinct time period, when $T_i(z)\geq T_i(1-z)$ for individual $i$.
    }
    \label{fig:survival_timeline}
\end{figure}

To reconcile the discrepancy between causally interpretable estimands and the summaries commonly used in practice, we separate two questions that are often conflated. First, over the period during which the outcome would be well-defined under either treatment, what is the treatment effect on the non-mortality outcome? Second, when one treatment prolongs survival, what is the outcome experience during the additional survival time? Figure~\ref{fig:survival_timeline} illustrates this partition into guaranteed survival and extended survival periods. This separation yields full-population causal estimands while making explicit the cross-world quantities required to define and identify them.

Our first contribution is a class of ``while guaranteed-survival'' estimands. Let \(T(z)\) denote the potential follow-up time under treatment \(z\). These estimands summarize treatment effects over the guaranteed survival period \([0,T(0)\land T(1)]\), where the outcome would be well-defined regardless of treatment. They retain the full-population appeal of while-alive and composite summaries, while avoiding treatment-specific time domains and therefore defining a coherent causal contrast. By allowing exit-time, average, cumulative, or AUC-based weighting schemes, they can be tailored to different clinical questions about the outcome experience before death truncates comparability.

Second, we introduce ``while extended-survival'' estimands for the period during which an individual would survive under one treatment but not under the other. When \(T(z)> T(1-z)\), this period is the interval \((T(1-z),T(z)]\) shown in Figure~\ref{fig:survival_timeline}. This part of the trajectory is ignored by the SACE and guaranteed-survival contrasts, but is often implicitly captured by standard clinical summaries such as while-alive or composite strategies. For QoL outcomes, explicitly targeting this period distinguishes treatments that prolong survival with meaningful QoL from those that prolong survival in a severely compromised state, sometimes described as a ``living death.''

Third, we develop single-world marginal separable effects that generalize recent work on separable effects \citep{stensrud2022separable,stensrud2023conditional}. Under a conceptual decomposition of treatment into an outcome-related component \(Z_Y\) and a survival-related component \(Z_S\), these estimands compare interventions that vary \(Z_Y\) while holding \(Z_S\) fixed. They define population-level contrasts without relying on the cross-world survival times used by the while guaranteed-survival and while extended-survival estimands. This yields a complementary single-world framework, and we further clarify conditions under which the while guaranteed-survival and while extended-survival estimands correspond to marginal separable effects.

Beyond defining these estimands, we study how they can be learned from observed longitudinal trial data. For the while guaranteed-survival and while extended-survival estimands, we derive identification results using a substitution-variable approach that links the required cross-world quantities to observed longitudinal data, and we propose compatible plug-in estimators. For marginal separable effects, we give identification conditions under a treatment decomposition and construct influence-function-based estimators. Throughout, these assumptions are used for identification and estimation rather than for defining the estimands themselves.

The rest of the paper is organized as follows. Section 2 introduces the longitudinal setup and reviews existing strategies for outcomes truncated by death. Section 3 defines the while guaranteed-survival and while extended-survival estimands. Sections 4 and 5 develop identification and estimation results for these estimands and for marginal separable effects, respectively. Section 6 reanalyzes a prostate cancer trial, and Section 7 discusses implications and limitations.

\section{Preliminaries}

\subsection{Basic setup}\label{sec:setup}

Consider a longitudinal study, where individuals are assigned a binary treatment $Z\in \{0,1\}$ at baseline and baseline covariates $L^0$ are observed. By convention, we set $Z=1$ as active treatment and $Z=0$ as control. Then, for discrete follow-up time points $t=1,...,t_{\max}$, let $S^t$ denote the survival status at time point $t$, where $S^t=1$ if the individual was alive and $S^t=0$ otherwise. Similarly, we denote $Y^t$ as the non-mortality outcome at time $t$, where $Y^t$ is well-defined only if $S^t=1$. We assume $S^0=1$, and the baseline outcome $Y^0$ is well-defined for each individual, with $Y^0\in L^0$. Figure \ref{fig:longtbd} shows a diagram to illustrate the data structure introduced above. In addition, we define the observed follow-up time as $T=\max\{t: S^t=1\}\land t_{\max}$, where $t_{\max}$ is the maximum follow-up time in the study.
We use overbars to denote the history of a random variable over time $t$, such as $\bar{S}^t=(S^1,..., S^t)$, and underbars to denote its future, such as $\underline{S}^t = (S^t,..., S^{t_{\max}})$.

\begin{figure}[!htbp]
    \centering
\begin{tikzpicture}[node distance=2cm, >={Stealth[round]}]

    \node (A) at (0,0) {};
    \node (Z) [right=of A] {$Z$};
    \node (Y1) [right=of Z] {$Y^1$};
    \node (Y2) [right=of Y1] {$Y^2$};
    \node (Y3) [right=of Y2] {$Y^3$};
    \node[above=of Z, yshift=-2.2cm] {\begin{tabular}{c} Binary \\ treatment \end{tabular}};
    \node[above=of Y1, yshift=-2.2cm] {\begin{tabular}{c} Outcome \\
    at time 1 \end{tabular}};
     \node[above=of Y2, yshift=-2.2cm] {\begin{tabular}{c} Outcome \\
    at time 2 \end{tabular}};
    \node[above=of Y3, yshift=-2.2cm] {\begin{tabular}{c} Outcome \\
    at time 3 \end{tabular}};
    \node (S1) [below=of Y1] {$S^1$};
    \node (S2) [right=of S1] {$S^2$};
    \node (S3) [right=of S2] {$S^3$};
    \node[below=of S1, yshift=2.2cm] {\begin{tabular}{c} Survival status\\ at time 1 \end{tabular}};
    \node[below=of S2, yshift=2.2cm] {\begin{tabular}{c} Survival status\\ at time 2 \end{tabular}};
    \node[below=of S3, yshift=2.2cm] {\begin{tabular}{c} Survival status\\ at time 3 \end{tabular}};

    \draw[->] (Z) -- (Y1);
    \draw[->] (Y1) -- (Y2);
    \draw[->] (Y2) -- (Y3);

    \draw[->] (Z) -- (S1);
    \draw[->] (S1) -- (S2);
    \draw[->] (S1) -- (Y1);
    \draw[->] (S2) -- (Y2);
    \draw[->] (S2) -- (S3);
    \draw[->] (S3) -- (Y3);


\end{tikzpicture}
\caption{Longitudinal data structure with outcomes truncated by death. Survival status \(S^t\) determines whether the outcome \(Y^t\) is well-defined at each follow-up time.}
\label{fig:longtbd}
\end{figure}

In the potential outcomes framework, for $z=0,1$, let the potential survival status at time $t$ be $S^t(z)$, which is the survival status that would be observed were the individual assigned treatment $z$. Similarly, we denote $Y^t(z)$ and $T(z)$ as the potential outcome at time $t$ and potential follow-up time under treatment $z$, respectively. Also, note that $Y^t(z)$ is well-defined only if $S^t(z)=1$. For $z=0,1$, let $T(z)\land T(1-z)$ and $T(z)\lor T(1-z)$ denote, respectively, the minimal and maximal follow-up time across two treatments. Under the standard consistency condition, the observed survival statuses, the survival time, and the outcomes satisfy $S^t = ZS^t(1)+ (1-Z)S^t(0)$, $T=ZT(1)+(1-Z)T(0)$, and $Y^t = ZY^t(1)+ (1-Z)Y^t(0)$, respectively. Furthermore, the observed samples $\{L^0_i,Z_i, \bar{S}_i^{t_{\max}}, \bar{Y}_i^{t_{\max}}\}_{i=1}^n$ are independently drawn from the joint distribution of $\{L^0, Z,\bar{S}^{t_{\max}}, \bar{Y}^{t_{\max}}\}$. Throughout, for a random vector $V$, let $\mathcal{V}$ denote its support, and use lowercase letters to denote its realizations. For two random vectors $A$ and $B$, denote $p_{A|B}(a|b)$ as the conditional density or conditional mass function of $A=a$ given $B=b$.

\subsection{Survivor average causal effect  }

The survivor average causal effect (SACE) is a widely studied principal stratum estimand \citep{robins1986new,rubin2006causal}. Specifically, we use $G^t$ to denote the survival type at time $t$, for $t=1,...,t_{\max}$, which is defined as follows: $G^t=LL$ if $S^t(1)=S^t(0)=1$, $G^t=LD$ if $S^t(1)=1$ and $S^t(0)=0$, $G^t=DL$ if $S^t(1)=0$ and $S^t(0)=1$, and $G^t=DD$ if $S^t(1)=S^t(0)=0$. Then, the SACE at time $t$ is 
$\mathbb{E}[Y^t(1)-Y^t(0)|G^t=LL]$, or equivalently $\mathbb{E}[Y^t(1)-Y^t(0)|T(1)\geq t, T(0)\geq t]$. In longitudinal settings, \citet{grossi2023bayesian} evaluates treatment effects among always-survivor groups across a sequence of time points, using estimands of the form $\mathbb{E}[Y^r(1)-Y^r(0)\mid T(1)\ge t, T(0)\ge t]$ for $r \le t$. For fixed $t$, these estimands describe the treatment-effect trajectory over $r=1,\ldots,t$ within the same always-survivor subgroup. However, comparisons across $t$ are inappropriate because they involve different latent subgroups \citep{comment2025survivor}. Accordingly, \citet{grossi2023bayesian} also reports posterior membership probabilities for the longitudinal principal strata, clarifying their proportions and baseline characteristics.

\subsection{Separable effect}\label{sec:separable}

\citet{stensrud2023conditional} considered an extension of the setup in Section \ref{sec:setup} by accounting for observed time-varying covariates $L^t$ at times $t=1,\ldots,t_{\max}$, which may be treatment-induced confounders between $S^t$ and $Y^t$. This extended setting is shown in Figure \ref{fig:Longi-SE-2}(a). The separable effects start with the premise that the original treatment \(Z\) in the current trial can be conceptualized as comprising two binary components, \(Z_Y \in \{0,1\}\) and \(Z_S \in \{0,1\}\). The component \(Z_Y\) directly affects the longitudinal outcome process \(\bar{Y}^{t_{\max}}\) without influencing survival \(\bar{S}^{t_{\max}}\), whereas \(Z_S\) directly affects survival and may influence \(\bar{Y}^{t_{\max}}\) only through the covariate process \(\bar{L}^{t_{\max}}\) {and survival \(\bar{S}^{t_{\max}}\)}. More specifically, let  $Y^t(z_Y,z_S)$, $L^t(z_Y,z_S)$, $S^t(z_Y,z_S)$ and $T(z_Y,z_S)$ denote the potential outcome at time $t$, the potential time-varying covariate at time $t$, the potential survival status at time $t$, and the potential follow-up time, respectively, under treatment components $(z_Y,z_S)$.
The following assumptions formalize the assumptions on $Z_Y$ and $Z_S$.

\begin{assumption}[$Z_Y$ partial isolation]\label{ass:Z_yPI}
 $S^t(z_Y,z_S)=S^t(z_S)$
for $z_Y,z_S \in \{0,1\}$ and $t=1,...,t_{\max}$.
\end{assumption}

\begingroup
\renewcommand{\theassumption}{2*}
\begin{assumption}[Weakly $Z_S$ partial isolation]\label{ass:Z_S-weak}
 Let $Y^t(z_Y,\bar{l}^t,\bar{s}^t)$ denote the potential outcome at time $t$ under an intervention that sets $Z_Y=z_Y$, $\bar{L}^t=\bar{l}^t$, and $\bar{S}^t=\bar{s}^t$. Then
\(
Y^t(z_Y,z_S)=Y^t\bigl(z_Y, \bar{L}^t(z_Y,z_S), \bar{S}^t(z_S)\bigr), 
\) for $t=1,\ldots,t_{\max}.$
\end{assumption}
\addtocounter{assumption}{-1}
\endgroup

Assumption \ref{ass:Z_S-weak} can be controversial because it implicitly assumes that it is possible to intervene on $\bar{S}^t$. Instead, a common alternative in the separable effects literature \citep{stensrud2022separable,stensrud2023conditional} is to consider a hypothetical four-arm trial \(\mathcal{C}\) in which \(Z_Y\) and \(Z_S\) are jointly randomized in the same population. Let $V(\mathcal{C})$ denote the random variable $V$ under the treatment assignment in trial $\mathcal{C}$, to distinguish it from the observed $V$ in the current two-arm trial. Since $L^0$ is the baseline covariate measured prior to treatment assignment and is unaffected by treatment, we use $L^0$, rather than $L^0(\mathcal{C})$, under trial $\mathcal{C}$. Then, the following assumption captures the notion that $Z_S$ may influence $Y^t$ only through $(\bar{S}^t,\bar{L}^t)$ for $t=1,\ldots,t_{\max}$.

\begin{assumption}[$Z_S$ dismissible component condition]\label{ass:Z_S-DCC}
 For $t=1,...,t_{\max}$,  $$Z_S(\mathcal{C})\ind Y^t(\mathcal{C})\mid Z_Y(\mathcal{C}), S^t(\mathcal{C})=1, \bar{L}^t(\mathcal{C}), L^0.$$
\end{assumption}

\begin{figure}[!htbp]
    \centering
    \includegraphics[width=0.95\linewidth]{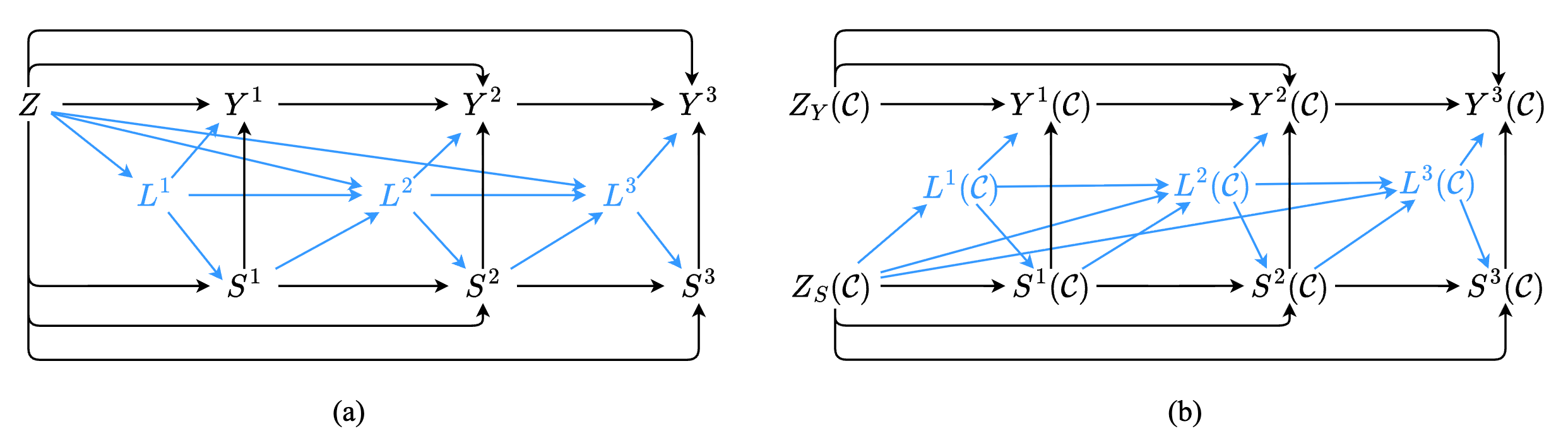}
    \caption{Two causal diagrams illustrating the separable effects framework: panel (a) shows the causal graph for the current study, and panel (b) shows the causal graph under the four-arm trial $\mathcal{C}$. Baseline covariates $L^0$ are omitted in both graphs for brevity.}
    \label{fig:Longi-SE-2}
\end{figure}

Assumption \ref{ass:Z_S-DCC} can be read off from a causal graph; see for example Figure \ref{fig:Longi-SE-2}(b). The effect of $Z_Y$ on $Y^t$ can be quantified by the conditional separable effect (CSE), $\mathbb{E}[Y^t(1, z_S) - Y^t(0,z_S)\mid S^t(z_S)=1]$,
which is the average causal effect of treatment $Z_Y$ on $Y^t$ had all individuals been assigned the treatment $Z_S=z_S$, among those who would be alive under $Z_S=z_S$.  
Compared with the SACE, the CSE is single-world because it is defined under a single intervention regime on the treatment components. The conditioning event \(\{S^t(z_S)=1\}\) could, in principle, be identified in the hypothetical four-arm trial, whereas always-survivor membership is generally not identifiable.

Next, we adopt a consistency assumption that links the potential outcomes under the decomposed treatment representation $(Z_Y, Z_S)$ to those under the observed treatment $Z$.

\begin{assumption}[Intervention consistency]\label{ass:MTA}
For $z' \in \{0,1\}$, $Y^t(z_Y=z', z_S=z') = Y^t(z=z')$, $S^t(z_S=z') = S^t(z=z')$ and $L^t(z_Y=z', z_S=z') = L^t(z=z')$.
\end{assumption}

We compare the CSE evaluated at $z_S=0$ with the SACE under the separable effects framework with Assumptions \ref{ass:Z_yPI}, \ref{ass:Z_S-DCC}, and \ref{ass:MTA}. Recall that at time $t$, the SACE can be expressed as 
\(
\mathbb{E}[Y^t(1,1)-Y^t(0,0)\mid S^t(1)=S^t(0)=1].
\)
Under monotonicity, $S^t(1)\geq S^t(0)$, a.s, both the CSE with $z_S=0$ and the SACE concern the same subpopulation 
$\{i: S_i^t(1)=S_i^t(0)=1\}$ for $t=1,\ldots,t_{\max}$. However, the SACE captures both the direct effect of \(Z_Y\) on \(Y^t\) and the effect of \(Z_S\) mediated through \(\bar{L}^t\), whereas the CSE isolates the effect of \(Z_Y\) by fixing \(z_S=0\). They coincide if \(\bar{L}^t\) is empty \citep{stensrud2023conditional}.

\subsection{Related work}\label{sec:related}

Among existing approaches for longitudinal outcomes truncated by death, the while-alive strategy is closely connected to the estimands considered in this paper because it is also designed to summarize longitudinal outcome trajectories. Building on \citet{wei2023properties,janvin2024causal}, we represent this class of ``while-alive'' estimands as follows:
\begin{align}\label{esimand-while-alive}
\lambda(z) := \mathbb{E}\Big[\sum_{r=0}^{T(z)} \zeta^r_{T(z)}Y^r(z)\Big],
\end{align}
where $\zeta^r_{T(z)}$ denotes a weight that may depend on $T(z)$, the last time point at which the individual would be alive under treatment $z$. Although $\lambda(z)$ is a population-level summary, the contrast $\lambda(1)-\lambda(0)$ compares outcomes accumulated over different time periods across treatment arms and therefore does not generally admit a causal interpretation. As shown by \citet{janvin2024causal}, even when the treatment has no direct effect on the outcome, such contrasts can differ from zero purely due to differences in survival times between arms.

Other estimands for longitudinal outcomes truncated by death, including the recently developed PLOT estimand of \citet{baklicharov2025weakening}, are discussed in Supplementary Material Section~\ref{supple:existing-estimands}, and Table \ref{tab:sum-est} summarizes their mathematical forms and main limitations.

\section{Full-population estimands for outcomes truncated by death}
\label{sec:Estimand}

We now introduce a set of estimands that, as illustrated in Figure~\ref{fig:survival_timeline}, characterizes the potential outcome trajectories of each individual \(i\) across two distinct survival periods: the guaranteed survival period and the extended survival period. The guaranteed survival period, \(0 \leq t \leq T_i(1)\wedge T_i(0)\), is the period during which the individual would be alive under both treatments. During this period, both \(Y_i^t(1)\) and \(Y_i^t(0)\) are well-defined, so treatment arms can be compared using common weights chosen to reflect the clinical or scientific objective. The extended survival period, \(T_i(1)\wedge T_i(0) < t \leq T_i(1)\vee T_i(0)\), is the additional survival time under the treatment that would prolong survival. During this period, only one of the two potential outcome paths, \(Y_i^t(1)\) or \(Y_i^t(0)\), is well-defined, and summaries over this period describe the quality (or burden) of the additional survival time.

\subsection{A class of {while guaranteed-survival} estimands}\label{sec:MargEstimand}

To develop estimands for the full population during the guaranteed survival period, in which both potential outcomes are well-defined, we first illustrate the key ideas using a toy example. Figure \ref{fig:toyEx} gives graphical illustrations for the example with two individuals $i=1,2$ and three discrete follow-up times ($t=1,2,3$) at 3, 6, and 12 months. We denote $Y^t_i(z)$ and $T_i(z)$ as the potential outcome at time $t$ and the potential follow-up time under treatment $z$ for individual $i$. Also, let $Y_i^0$ be the baseline outcome for individual $i$. Then, we use this example to motivate the estimands that align with different research aims.

\begin{figure}[!htbp]
    \centering
    \includegraphics[width=0.6\linewidth]{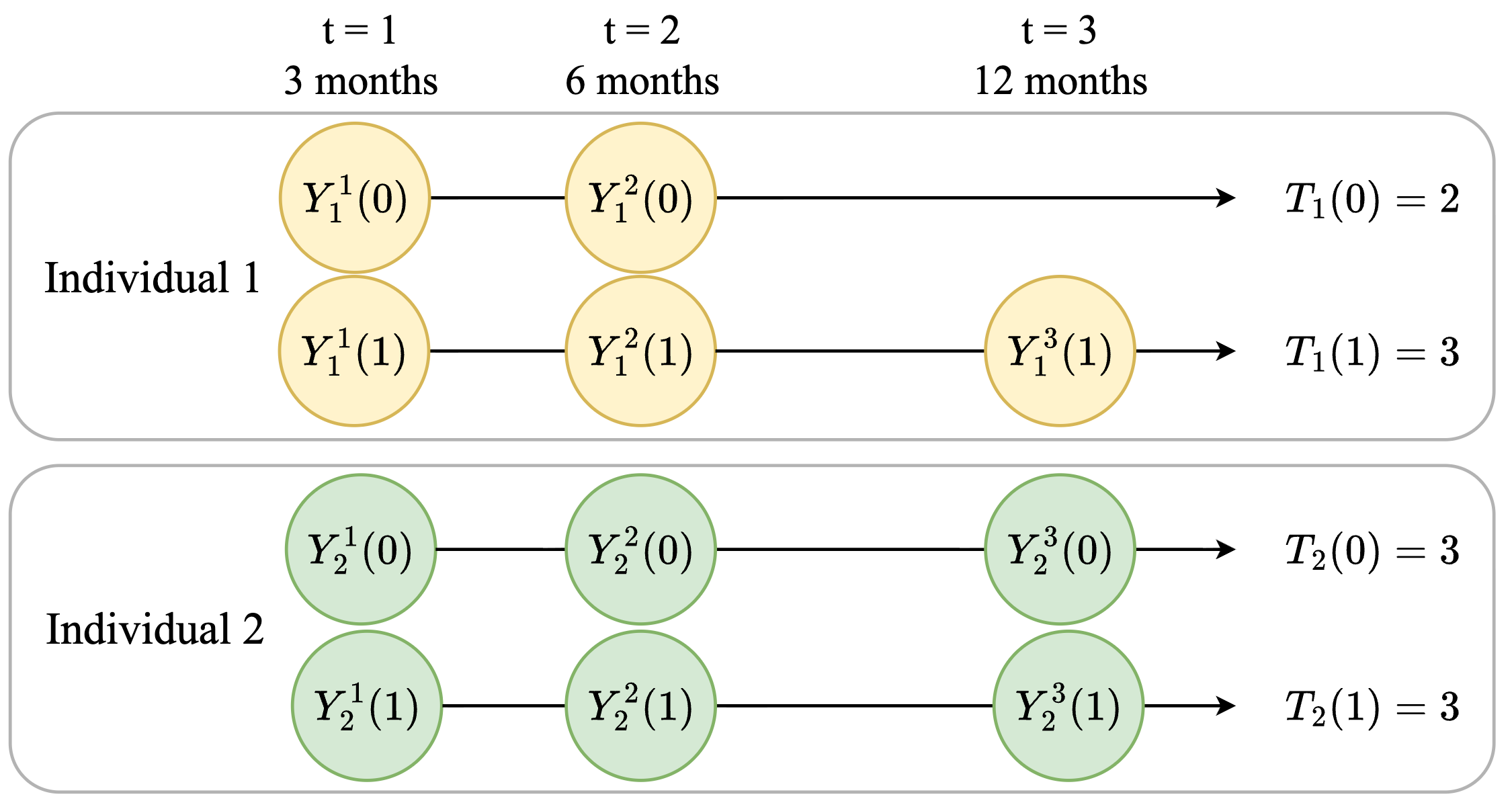}
    \caption{A toy example with two individuals to illustrate our proposed estimands.}
    \label{fig:toyEx}
\end{figure}

\begin{example}[Exit time] \label{exa:exit-w}
 Consider an estimand that summarizes outcomes before death or at the end of the study. In Figure \ref{fig:toyEx}, the estimand takes the form $
\mu(z) = \frac{1}{2}Y_1^2(z) + \frac{1}{2}Y_2^3(z),$
which averages the potential outcomes at the last time point at which each individual would survive under both treatment arms. For individual $1$, although $Y_1^3(1)$ is well-defined because the individual would survive to $12$ months under treatment $z=1$, $\mu(1)$ still uses $Y_1^2(1)$ to ensure comparability with $\mu(0)$. Note that this comparability comes with a caveat. When potential survival times differ between treatment arms, the longer-survival treatment is evaluated before the end of the individual's survival under that treatment. The shorter-survival treatment, by contrast, is evaluated at the final observation before death, when the individual's condition may already have deteriorated.

\end{example}

\begin{example}[Average]\label{exa:averge-w}
 Consider an estimand that quantifies a time-averaged treatment effect,
$\mu(z) = \frac{1}{2} \Big\{\frac{1}{3} Y_1^0+\frac{1}{3} Y_1^1(z)+\frac{1}{3} Y_1^2(z)\Big\} + \frac{1}{2} \Big\{ \frac{1}{4} Y_2^0+ \frac{1}{4} Y_2^1(z)+ \frac{1}{4} Y_2^2(z) + \frac{1}{4} Y_2^3(z)\Big\},$
where, for each individual, it first averages the potential outcomes over all time points during the guaranteed survival period, and then averages these individual-specific averages over all individuals. This estimand assigns equal weight to each individual, regardless of how long they survive. 

\end{example}

\begin{example}[Cumulative] \label{exa:cumu-mu}
In settings such as health technology assessment and cost-effectiveness analysis, the outcome of interest often accumulates over follow-up. Examples include inpatient bed days, dialysis hours, medical expenditures, symptom burden, toxicity, and QoL burden. Motivated by such settings, consider the cumulative estimand
$\mu(z) = \frac{1}{2} \Big\{Y_1^0 +Y_1^1(z)+Y_1^2(z)\Big\} + \frac{1}{2} \Big\{Y_2^0 +Y_2^1(z)+Y_2^2(z) + Y_2^3(z)\Big\},$
which sums potential outcomes over all time points in the guaranteed survival period. Because each time point enters the sum with equal unit weight, individuals with longer guaranteed-survival periods contribute more to this estimand.

\end{example}

\begin{example}[AUC-based] \label{exa:AUC}
	
For irregular visit schedules, let $\tau_t$ denote the time of visit $t$ since study entry, with $\tau=(\tau_0,\tau_1,\ldots,\tau_{t_{\max}})$ and $\tau_0=0$.	In Figure \ref{fig:toyEx}, $\tau=(0,1/4,1/2,1)$, with time measured in years. A time-weighted cumulative estimand is
$\mu(z) = \frac{1}{2} \Big \{ \frac{1}{4} \frac{Y_1^0 + Y_1^1(z)}{2} + \frac{1}{4} \frac{Y_1^1(z) + Y_1^2(z)}{2} \Big\}
+ \frac{1}{2} \Big \{ \frac{1}{4} \frac{Y_2^0 + Y_2^1(z)}{2} + \frac{1}{4} \frac{Y_2^1(z) + Y_2^2(z)}{2} +  \frac{1}{2} \frac{Y_2^2(z) + Y_2^3(z)}{2}\Big\},$
which incorporates the time intervals to construct an area-under-the-curve (AUC)-based cumulative summary over the guaranteed survival period, as illustrated by the shaded areas in Figure \ref{fig:AUC}.

\begin{figure}[!htbp]
    \centering
    \includegraphics[width=0.7\linewidth]{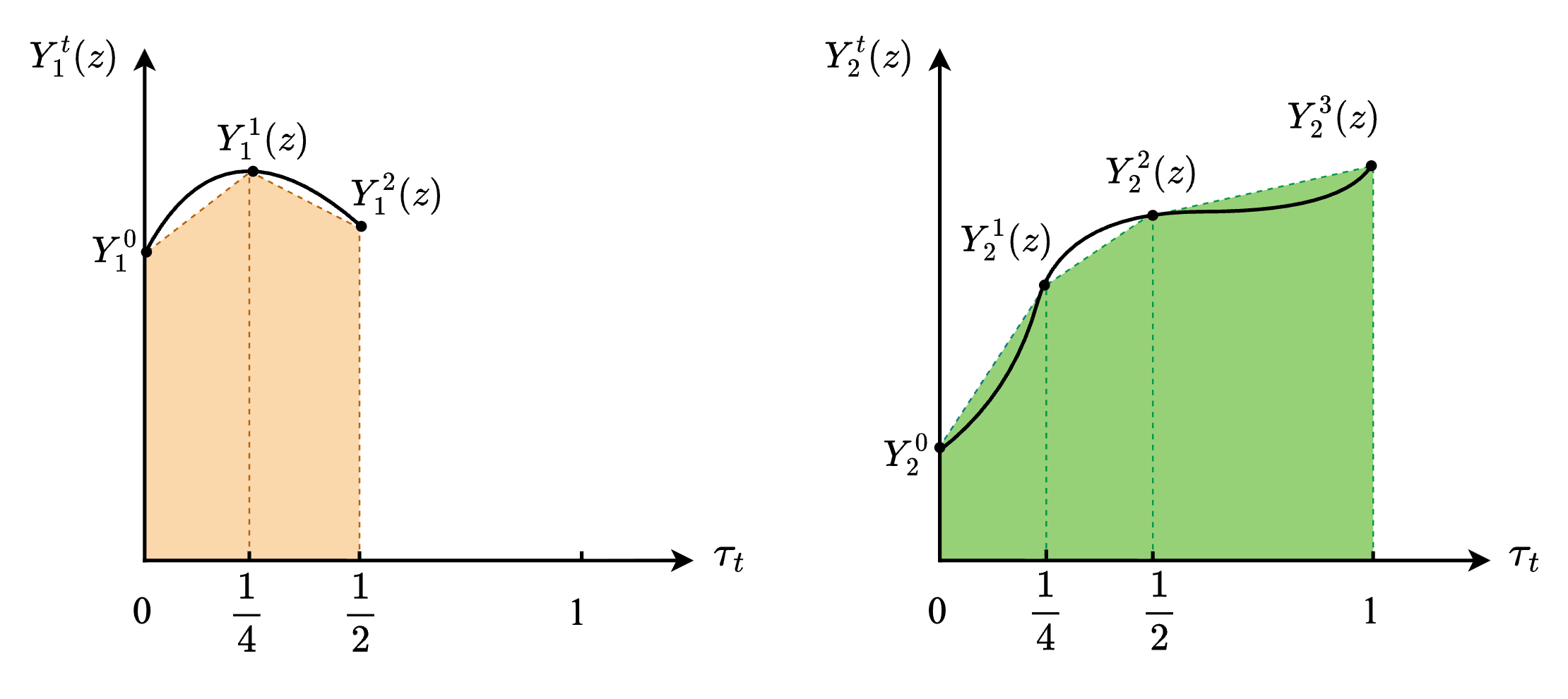}
    \caption{Illustration of the area under the potential outcome curves for individual $1$ (left panel) and individual $2$ (right panel).}
    \label{fig:AUC}
\end{figure}

\end{example}

The preceding examples are special cases of a broader class of while guaranteed-survival estimands. This class provides a unified framework for summarizing well-defined potential outcomes over the guaranteed survival period while allowing the weights to encode different clinical or scientific objectives:
\begin{align}\label{estimand:subA}
\mu(z) :=   \mathbb{E}\Big[  \sum_{r=0}^{T(0) \land T(1)}  w_{T(0) \land T(1)}^r Y^r(z) \Big],
\end{align}
Here \(w_{T(0) \land T(1)}^r\) is the weight assigned to time point \(r=0,1,\ldots,T(0)\land T(1)\). It may depend on \(T(0)\land T(1)\), the last time point at which the individual would be alive under both treatment arms. We define the corresponding treatment contrast as
$
\Delta^{\text{gua}} := \mu(1) - \mu(0),
$
which captures the causal effect of treatment on outcomes over the guaranteed survival period. In Table~\ref{tab:example-weight}, we present four possible weighting schemes, each corresponding to Examples \ref{exa:exit-w}–\ref{exa:AUC}.

\begin{table}[!htbp]
  \centering
  \caption{Four example types of weighting schemes for the while guaranteed-survival estimand}
  \label{tab:example-weight} 
  \renewcommand{\arraystretch}{1.2}
  \setlength{\tabcolsep}{4pt}
  \small
  \begin{tabular}{ll}
    \toprule
    Type &  Weight \\
    \midrule
    Exit time &  $w_{T(0) \land T(1)}^{\text{exit},r}=
\begin{cases}
1, & r=T(0) \land T(1),\\
0,  & r<T(0) \land T(1)
\end{cases}$ \\
    Average &  $w_{T(0) \land T(1)}^{\text{avg},r}=\frac{1}{\left\{T(0) \land T(1)\right\}+1} $ \\
    Cumulative &  $w_{T(0) \land T(1)}^{\text{cum},r}=1$ \\
    AUC-based &  $\begin{aligned}[t]
w_{T(0) \land T(1)}^{\text{AUC},r}
&=\mathbb{I}(T(0)\land T(1)\geq 1)\\
&\quad \times{}
\begin{cases}
\frac{1}{2}\left[\tau_{T(0) \land T(1)} - \tau_{\{T(0) \land T(1)\}-1}\right], & r=T(0) \land T(1)\\
\frac{1}{2}\left[\tau_{r+1}-\tau_{r-1}\right],  & 0<r< {T(0) \land T(1)}\\
\frac{1}{2}\tau_1, & r=0\\
\end{cases}
\end{aligned}$ \\
    \bottomrule
  \end{tabular}
\renewcommand{\arraystretch}{1}
\end{table}

The following remarks relate   our while guaranteed-survival estimands with principal-stratum estimands \citep{robins1986new,rubin2006causal,grossi2023bayesian}.

\begin{remark}[SACE at time $\dot{r}$]\label{remark:SACE}
For a fixed time point $\dot{r}$, when  $w_{T(0)\land T(1)}^r=\frac{\mathbb{I}(T(0)\land T(1)\geq \dot{r})}{\Pr(T(0)\land T(1)\geq \dot{r})} \mathbb{I}(r=\dot{r})$, the contrast of while guaranteed-survival estimands across two treatment arms is reduced to SACE at time $\dot{r}$, i.e., $\Delta^{\text{gua}}=\mathbb{E}[Y^{\dot{r}}(1)-Y^{\dot{r}}(0)|S^{\dot{r}}(1)=S^{\dot{r}}(0)=1]$. From this perspective, the SACE can be viewed as a special case of our while guaranteed-survival estimand with a time-dependent weight that may not be easy to interpret.
\end{remark}

\subsection{A class of {while extended-survival} estimands}\label{sec:Delta-1}

Suppose that an investigator aims to evaluate the potential outcomes during the extended survival period. For an individual with $T(1) > T(0)$, treatment $z=1$ prolongs survival relative to $z=0$. This extended survival period starts after the final time point at which the individual would survive under $z=0$ and ends at the final time point under $z=1$, namely \((T(0), T(1)]\). To assess their potential outcomes under $z=1$ during this period, we summarize them by $\sum_{r=T(0)}^{T(1)}\phi_{T(0),T(1)}^r Y^r(1)$, where the weights $\phi_{T(0),T(1)}^r$ are chosen to summarize this interval and may depend on its start and end points, namely $T(0)$ and $T(1)$. Analogously, for an individual with $T(0)> T(1)$, the potential outcomes under $z=0$ over \((T(1), T(0)]\) can be summarized as $\sum_{r=T(1)}^{T(0)}\phi_{T(1),T(0)}^r Y^r(0)$, where $\phi_{T(1),T(0)}^r$ is defined in the same way as before. Then, to compare the potential outcomes during the extended survival periods under different treatments, we propose the following class of while extended-survival estimands:
\begin{footnotesize}
\begin{align}\label{estimand:delta1}
\mu^{\text{ext}} := \mathbb{E}\left[  \mathbb{I}(T(1) > T(0))\left\{\sum_{r=T(0)}^{T(1)}\phi_{T(0),T(1)}^r Y^r(1)\right\} - \mathbb{I}(T(1)< T(0))\left\{ \sum_{r=T(1)}^{T(0)}\phi_{T(1),T(0)}^r Y^r(0)\right\} \right].
\end{align}
\end{footnotesize}

Define $z_{\min}:=\text{argmin}_z T(z)$ and $z_{\max}:=\text{argmax}_z T(z)$, and we present three examples for the weights $\phi_{T(z_{\min}), T(z_{\max})}^r$, for $r=T(z_{\min}),...,T(z_{\max})$, in Table \ref{tab:example-weight-phi} and illustrate the AUC-based weighting scheme in the following example through the toy example in Figure \ref{fig:toyEx}.

\begin{table}[!htbp]
  \centering
  \caption{Three example types of weighting schemes for the while extended-survival estimand}
  \label{tab:example-weight-phi} 
\renewcommand{\arraystretch}{1.2}
\small
  \begin{tabular}{ll}
    \toprule
    Type & Weight \\
    \midrule
    Average & $\phi_{T(z_{\min}), T(z_{\max})}^{\text{avg},r}= 
\frac{1}{T(z_{\max})-T(z_{\min})} \mathbb{I}(r>T(z_{\min})) $\\
    Cumulative & $\phi_{T(z_{\min}), T(z_{\max})}^{\text{cum},r}=\mathbb{I}(r>T(z_{\min}))$ \\
    AUC-based & $\phi_{T(z_{\min}), T(z_{\max})}^{\text{AUC},r}=
\begin{cases}
\frac{1}{2}\left[\tau_{T(z_{\max})} - \tau_{T(z_{\max})-1}\right], & r=T(z_{\max}),\\
\frac{1}{2}\left[\tau_{r+1}-\tau_{r-1}\right],  & T(z_{\min})<r<T(z_{\max}),\\
\frac{1}{2}\left[\tau_{T(z_{\min})+1} - \tau_{T(z_{\min})}\right],& r=T(z_{\min})\\
\end{cases}$ \\
    \bottomrule
  \end{tabular}
\renewcommand{\arraystretch}{1}
\end{table}

\begin{example}[AUC-based]\label{exa:Delta-AUC}
Figure \ref{fig:AUC-delta}  illustrates the quantities corresponding to the estimands $\Delta^{\text{gua}}$ and $\mu^{\text{ext}}$ using the toy example in Figure \ref{fig:toyEx}. Under the AUC-based weighting scheme $w_{T(0)\land T(1)}^{\text{AUC},r}$, $\Delta^{\text{gua}}=\frac{1}{2}\Delta_1^{\text{gua}}+\frac{1}{2}\Delta_2^{\text{gua}}$, where $\Delta_1^{\text{gua}}=\Big\{\frac{1}{4} \frac{Y_1^0 + Y_1^1(1)}{2} + \frac{1}{4} \frac{Y_1^1(1) + Y_1^2(1)}{2} \Big\} - \Big\{\frac{1}{4} \frac{Y_1^0 + Y_1^1(0)}{2} + \frac{1}{4} \frac{Y_1^1(0) + Y_1^2(0)}{2} \Big\}$ and $\Delta_2^{\text{gua}}=\Big \{ \frac{1}{4} \frac{Y_2^0 + Y_2^1(1)}{2} + \frac{1}{4} \frac{Y_2^1(1) + Y_2^2(1)}{2} +  \frac{1}{2} \frac{Y_2^2(1) + Y_2^3(1)}{2}\Big\}-\Big \{ \frac{1}{4} \frac{Y_2^0 + Y_2^1(0)}{2} + \frac{1}{4} \frac{Y_2^1(0) + Y_2^2(0)}{2} +  \frac{1}{2} \frac{Y_2^2(0) + Y_2^3(0)}{2}\Big\}$ denote, respectively, the areas between the potential outcome curves under treatment and control over the guaranteed survival period for individuals $1$ and $2$. Similarly, under the AUC-based weighting scheme $\phi_{T(z_{\min}),T(z_{\max})}^{\text{AUC},r}$, the while extended-survival estimand is given by $\mu^{\text{ext}}=\frac{1}{2}\mu_1^{\text{ext}}+\frac{1}{2}\mu_2^{\text{ext}}$. As shown in Figure \ref{fig:AUC-delta}, $\mu_1^{\text{ext}}=\frac{1}{2}\frac{Y_1^2(1)+Y_1^3(1)}{2}$ approximates the area under the potential outcome trajectory over the extended survival period from 6 months to 1 year for individual $1$. In contrast, $\mu_2^{\text{ext}}=0$ because, for individual $2$, the extended survival period is empty, as the potential survival time is the same under two treatment arms.

\end{example}

\begin{figure}[!htbp]
    \centering
    \includegraphics[width=0.75\linewidth]{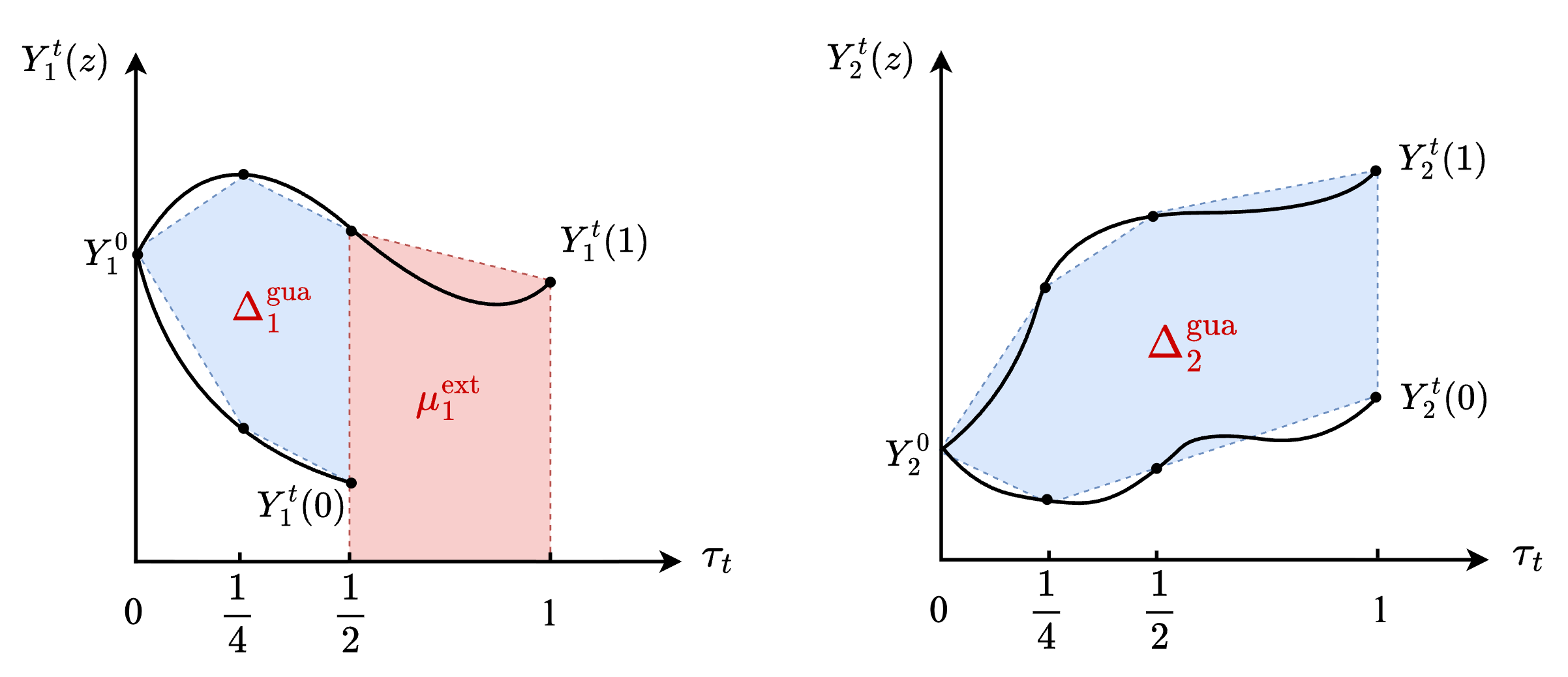}
\caption{Illustration of the areas under the potential outcome curves for two estimands. The blue regions correspond to $\Delta_i^{\mathrm{gua}}$ for $i=1,2$, whereas the red region corresponds to $\mu_1^{\mathrm{ext}}$. For individual 2, the corresponding extended survival region is empty.}
    \label{fig:AUC-delta}
\end{figure}

\section{Identification and estimation of the proposed full-population estimands}\label{sec:subA}

This section shows how the proposed survival-period estimands can be learned from observed longitudinal trial data. We first express the estimands as weighted sums of principal-stratum outcome components, then identify these components using a substitution-variable approach \citep{wang2017identification}, and finally construct plug-in estimators. The assumptions below are used for identification and estimation only and are not part of the estimand definitions.

\subsection{Identification of while guaranteed-survival estimands}\label{identify:mu}
To facilitate identification, we rewrite $\mu(z)$ in \eqref{estimand:subA} as
\begin{align}\label{estimand:subA2}
	\mu(z) = \sum_{t=0}^{t_{\max}} \sum_{r=0}^t w_t^r \, \mathbb{E}\!\left[\mathbb{I}\{T(0)\land T(1)=t\} Y^r(z)\right],
\end{align}
where $w_t^r$ denotes the weight $w_{T(0)\land T(1)}^r$ evaluated at $T(0)\land T(1)=t$, and is therefore nonrandom given $t$. Thus, identification of $\mu(z)$ reduces to identifying the component
$\mathbb{E}\!\left[\mathbb{I}\{T(0)\land T(1)=t\}Y^r(z)\right]$ for $0\leq r \leq t \leq t_{\max}$.
By iterated expectation, this component decomposes into a principal-stratum mean potential outcome and the corresponding conditional principal score:
\begin{align*}
&\mathbb{E}\!\left[\mathbb{I}\{T(0)\land T(1)=t\} Y^r(z)\right] \nonumber\\
=& \mathbb{E}\!\left[
\mathbb{E}\!\left\{Y^r(z)\mid T(0)\land T(1)=t,\bar{L}^r(z),L^0\right\}
\Pr\left\{T(0)\land T(1)=t\mid \bar{L}^r(z),L^0\right\}
\right].
\end{align*}

We first introduce the following assumptions commonly used for principal-score identification \citep{wang2017identification, grossi2023bayesian}.
\begin{assumption}[Monotonicity]\label{ass:Monoto-2}
$T(1)\geq T(0)$ almost surely.
\end{assumption}

\begin{assumption}[(G,L)-ignorability]\label{ass:S-ig-2}
	 $Z \ind \bar{G}^{t_{\max}}, \bar{L}^{t_{\max}}(z) \mid L^0$ for $z=0,1$.
\end{assumption}

\begin{assumption}[Positivity]\label{ass:postivity}
	$0<\Pr(Z=z\mid L^0)<1$ and
	$p_{\bar{L}^r,S^r\mid Z,L^0}(\bar{l}^r,1\mid z,L^0)>0$ almost surely, for any $z=0,1$ and $r=1,\ldots,t_{\max}$.
\end{assumption}

Under Assumptions \ref{ass:Monoto-2} - \ref{ass:postivity},
the conditional principal score \(
\Pr\{T(0) \land T(1)=t \mid \bar{L}^r(z), L^0\}
\)
is directly identifiable for \(z=0\), because monotonicity reduces \(\{T(0)\land T(1)=t\}\) to \(\{T(0)=t\}\). The case \(z=1\) is more difficult: even under monotonicity, the score \(\Pr\{T(0)\land T(1)=t\mid \bar L^r(1),L^0\}\) conditions on post-treatment covariates under \(z=1\). We therefore impose the following survival principal ignorability assumption.

\begin{assumption}[Survival principal ignorability]\label{ass:surprin-ig}
 \begin{align*}
 &S^r(0)
\ind
\bar L^r(1)
\mid
S^r(1)=1,L^0, \text{ for } r=1,..,t_{\max}, \\
&S^{t+1}(0)
\ind
\bar L^t(1)
\mid G^t=LL,L^0, \text{ for } t=1,..,t_{\max}-1.
\end{align*}
\end{assumption}

A latent-process rationale for Assumption \ref{ass:surprin-ig} in the cancer-study setting is provided in Remark~\ref{remark:latent-surprin} in Supplementary Material Section~\ref{supple:technical-details}. The identification result for the conditional principal score under \(z=1\) is then given in the following lemma.
\begin{lemma}\label{lem::CPS}
Under Assumptions \ref{ass:Monoto-2} - \ref{ass:surprin-ig}, for $1\leq r \leq t \leq t_{\max}$, any $\bar{l}^r\in \mathcal{\bar{L}}^r$ and $l^0\in \mathcal{L}^0$, $\Pr\{T(0) \land T(1)=t \mid \bar{L}^r(1)=\bar{l}^r, L^0=l^0\}$ is identified by
$\frac{\Pr\left(T=t\mid Z=0, L^0=l^0\right)}{\Pr\left(S^t=1\mid Z=1,L^0=l^0\right)}\Pr\left( S^{t}=1 \mid \bar{L}^r=\bar{l}^r,  Z=1, L^0=l^0\right).$
\end{lemma}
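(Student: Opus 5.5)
Our plan is to use monotonicity to reduce the minimum survival time to $T(0)$, and then split the conditional principal score into two factors. One factor involves only $z=1$ quantities and is identified directly by ignorability. The other is a ratio of survival probabilities, and Assumption~\ref{ass:surprin-ig} lets us drop the conditioning on $\bar L^r(1)$ from it.

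\textbf{Step 1 (reduce and factor).} By Assumption~\ref{ass:Monoto-2}, $T(0)\land T(1)=T(0)$ almost surely. Moreover $\{T(0)=t\}\subseteq\{S^t(0)=1\}\subseteq\{S^t(1)=1\}$. Hence, writing $\mathcal{H}=\{\bar L^r(1)=\bar l^r,L^0=l^0\}$,
\begin{align*}
\Pr\{T(0)\land T(1)=t\mid \mathcal{H}\}
=\Pr\{T(0)=t\mid S^t(1)=1,\mathcal{H}\}\,\Pr\{S^t(1)=1\mid \mathcal{H}\}.
\end{align*}
For the second factor we use Assumption~\ref{ass:S-ig-2}. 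Since $S^t(1)$ is a function of $G^t$, that assumption gives $Z\ind (S^t(1),\bar L^r(1))\mid L^0$. So we may condition on $Z=1$ and apply consistency, which yields $\Pr(S^t=1\mid \bar L^r=\bar l^r,Z=1,L^0=l^0)$. Assumption~\ref{ass:postivity} guarantees that this conditional probability is well defined. Because $r\le t$, conditioning on survival to $t$ means that $\bar L^r$ is observed in this event.

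\textbf{Step 2 (remove $\bar L^r(1)$ from the first factor).} This is the step we expect to be the main obstacle. For $t<t_{\max}$, write $\{T(0)=t\}=\{S^t(0)=1,S^{t+1}(0)=0\}$. Since $S^t(0)=1$ together with $S^t(1)=1$ is exactly the event $G^t=LL$, we obtain
\begin{align*}
\Pr\{T(0)=t\mid S^t(1)=1,\mathcal{H}\}=\Pr\{S^t(0)=1\mid S^t(1)=1,\mathcal{H}\}\,\Pr\{S^{t+1}(0)=0\mid G^t=LL,\mathcal{H}\}.
\end{align*}
For the first factor we apply the first line of Assumption~\ref{ass:surprin-ig} with index $t$, which removes $\bar L^t(1)$ from the conditioning. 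For the second factor we apply the second line. In both cases $\bar L^r(1)$ is a subvector of $\bar L^t(1)$ because $r\le t$, and conditional independence from a vector implies conditional independence from any subvector. Hence both factors no longer depend on $\bar l^r$. Recombining them gives $\Pr\{T(0)=t\mid S^t(1)=1,L^0=l^0\}$.

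When $t=t_{\max}$, the event $\{T(0)=t_{\max}\}$ equals $\{S^{t_{\max}}(0)=1\}$. In that case only the first line of Assumption~\ref{ass:surprin-ig} is needed. The care required here is in matching the events correctly, which is why we flag this step as the obstacle.

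\textbf{Step 3 (identify the remaining ratio).} By monotonicity again, $\{T(0)=t\}\subseteq\{S^t(1)=1\}$, so
\begin{align*}
\Pr\{T(0)=t\mid S^t(1)=1,L^0\}=\frac{\Pr\{T(0)=t\mid L^0\}}{\Pr\{S^t(1)=1\mid L^0\}}.
\end{align*}
Both $T(0)$ and $S^t(1)$ are functions of $\bar G^{t_{\max}}$. So by Assumption~\ref{ass:S-ig-2} and consistency, the numerator equals $\Pr(T=t\mid Z=0,L^0)$ and the denominator equals $\Pr(S^t=1\mid Z=1,L^0)$. Multiplying the result of Step~3 by the factor identified in Step~1 yields the displayed formula in Lemma~\ref{lem::CPS}.
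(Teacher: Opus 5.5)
Your proposal is correct and follows essentially the same route as the paper's proof. Both arguments use monotonicity to factor the event into $\Pr\{S^{t+1}(0)=0\mid G^t=LL,\cdot\}\,\Pr\{S^t(0)=1\mid S^t(1)=1,\cdot\}\,\Pr\{S^t(1)=1,\cdot\}$ (only the last two factors when $t=t_{\max}$), then survival principal ignorability to drop the covariate history, and then ignorability with positivity to pass to observed-data quantities.

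The only difference is cosmetic. The paper works with joint probabilities, applies Assumption~\ref{ass:surprin-ig} with the full $\bar L^t(1)$, and then sums over $\bar l^t\setminus\bar l^r$. You instead condition on $\bar L^r(1)$ directly and invoke the decomposition property of conditional independence, which is the same marginalization done implicitly.
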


We now turn to identification of the principal-stratum mean potential outcomes $\mathbb{E}\!\Big\{Y^r(z)\mid T(0)\land T(1)=t, \bar{L}^r(z), L^0\Big\}$,
for $1\leq r \leq t\leq t_{\max}.$ 

\begin{assumption}[Y-ignorability]\label{ass:Y-ig-2}
$Z\ind Y^r(z) \mid  T(z)=t,\bar{L}^r(z), L^0$ and $Z\ind Y^r(1) \mid  T(0)=t,\bar{L}^r(1), L^0$, for $z=0,1$ and $1\leq r\leq t \leq t_{\max}$.
\end{assumption}

\begin{assumption}[G-Markov sufficiency]\label{ass:noYtoS}
$Y^r \ind ~\underline{G}^{r+1} \mid G^r=LL, \bar{L}^r, Z=1, L^0$ for $r=1,..,t_{\max}-1$.
\end{assumption}

Assumptions \ref{ass:Y-ig-2} and \ref{ass:noYtoS} reduce the principal-stratum mean potential outcome to an observed-data outcome regression among individuals in \(G^r=LL\), conditional on \(\bar L^r\), \(Z\), and \(L^0\). Relative to \cite{grossi2023bayesian}, this formulation is less restrictive in that it allows \(Y^r\) and future survival to remain associated through the observed \(\bar L^r\), although the assumptions remain cross-world and untestable in principle \citep{robins2010alternative}. Further discussion is provided in Remark~\ref{remark:gmarkov} in Supplementary Material.

Following \cite{wang2017identification}, let \(A\in L^0\) be a substitution variable and let \(X=L^0\backslash A\). For \(z\in\{0,1\}\), \(g\in\{LL,LD\}\), \(\bar l^r\in\bar{\mathcal L}^r\), \(a\in\mathcal A\), and \(x\in\mathcal X\), define \(\mathcal E^r(\bar l^r,z,a,x):=\mathbb E(Y^r\mid S^r=1,\bar L^r=\bar l^r,Z=z,A=a,X=x)\), \(\mathcal M^r(\bar l^r,g,z,a,x):=\mathbb E(Y^r\mid \bar L^r=\bar l^r,G^r=g,Z=z,A=a,X=x)\), and
\(
{\pi}^r_{0/1}(a,x):={\Pr(S^r=1\mid Z=0,A=a,X=x)}/{\Pr(S^r=1\mid Z=1,A=a,X=x)}.
\)
The no-interaction and substitution-relevance assumptions below use \(A\) to recover the outcome regression within \(G^r=LL\), conditional on \(\bar L^r\), \(Z\), and \(L^0\).

\begin{assumption}[No interaction]\label{ass:no-inter}
 For $r=1,...,t_{\max}$, any $x\in \mathcal{X}$, $a_0,a_1\in \mathcal{A}$ and $\bar{l}^r\in \bar{\mathcal{L}}^r$, $\mathcal{M}^r(\bar{l}^r,LD,1,a_1,x) - \mathcal{M}^r(\bar{l}^r,LD,1,a_0,x)
=\mathcal{M}^r(\bar{l}^r,LL,1,a_1,x) - \mathcal{M}^r(\bar{l}^r,LL,1,a_0,x)
=\\ \mathcal{M}^r(\bar{l}^r,LL,0,a_1,x) - \mathcal{M}^r(\bar{l}^r,LL,0,a_0,x)$.
\end{assumption}
\begin{assumption}[Substitution relevance] \label{ass:SubRel-2}
 $A \not\perp\!\!\!\perp G^r \mid S^r=1, Z=1, X$  for $r=1,...,t_{\max}$.
\end{assumption}
We consider Assumption \ref{ass:no-inter} rather than an exclusion restriction–type assumption in \cite{wang2017identification}  to allow for unmeasured confounders between the contemporaneous variables $Y^r$ and $S^r$, such as organ function decline due to treatment toxicity or comorbid conditions in longitudinal clinical trial data.

\begin{theorem}\label{thm:identi:subA} Suppose Assumptions \ref{ass:Monoto-2} - \ref{ass:SubRel-2} hold. Then, for $1\leq r \leq t \leq t_{\max}$, any $x\in \mathcal{X}$, $a_0, a_1 \in \mathcal{A}$, $\bar{l}^r\in \bar{\mathcal{L}}^r$ and $i=0,1$, \(\mathcal{M}^r(\bar{l}^r,LL,1,a_i,x)\) is identified by solving
\begin{align}
&\mathcal{E}^r(\bar{l}^r,1,a_i,x) = \mathcal{M}^r(\bar{l}^r,LL,1,a_i,x) \pi^r_{0/1}(a_i,x) + \mathcal{M}^r(\bar{l}^r, LD,1,a_i,x)(1-\pi^r_{0/1}(a_i,x));\label{eq:identi-subA1}\\
&\mathcal{E}^r(\bar{l}^r,0,a_1,x) - \mathcal{E}^r(\bar{l}^r, 0,a_0,x)
= \mathcal{M}^r(\bar{l}^r,g,1,a_1,x) - \mathcal{M}^r(\bar{l}^r,g,1,a_0,x) \text{, for }g=LL,LD. \label{eq:identi-subA2}
\end{align}
Furthermore,
\begin{align}
&\mathbb{E}\!\left[\mathbb{I}\{T(0)\land T(1)=t\} Y^r(0)\right]
=\mathbb{E}\Big[ \frac{\mathbb{I}(T=t,Z=0)}{\Pr(Z=0|L^0)} Y^r \Big].\label{eq:identi-0}
\\
&\mathbb{E}\!\left[\mathbb{I}\{T(0)\land T(1)=t\} Y^r(1)\right]\nonumber\\
=&\mathbb{E}\Bigg[
\frac{\mathbb{I}(Z=1)}{\Pr(Z=1\mid L^0)}
\mathbb{I}(S^t=1)
\frac{
	\Pr(T=t\mid Z=0,L^0)
}{
	\Pr(S^t=1\mid Z=1,L^0)
}
\mathcal{M}^r(\bar{L}^r,LL,1,A,X)
\Bigg].\label{eq:identi-1}
\end{align}
\end{theorem}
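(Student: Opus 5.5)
The plan has three parts. First I derive the mixture equation \eqref{eq:identi-subA1} within the treated arm. Next I use the no-interaction assumption to pull the $A$-contrast from the control arm, which gives \eqref{eq:identi-subA2}; together these solve for $\mathcal M^r(\cdot,LL,1,\cdot)$. Finally I plug into the iterated-expectation decomposition of Section~\ref{identify:mu}, using Lemma~\ref{lem::CPS} for the score.

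\textbf{Step 1: the mixture equation \eqref{eq:identi-subA1}.} Under monotonicity, $\{S^r=1,Z=1\}$ is the union of $\{G^r=LL,Z=1\}$ and $\{G^r=LD,Z=1\}$. By the law of total expectation,
\[
\mathcal E^r(\bar l^r,1,a,x)=\mathcal M^r(\bar l^r,LL,1,a,x)\,P_{LL}+\mathcal M^r(\bar l^r,LD,1,a,x)\,(1-P_{LL}),
\]
where $P_{LL}=\Pr(G^r=LL\mid S^r=1,\bar L^r=\bar l^r,Z=1,A=a,X=x)$.

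\begin{itemize}
\item Bayes' rule rewrites $P_{LL}$ as $\Pr(S^r(0)=1\mid S^r(1)=1,\bar L^r(1),L^0,Z=1)$.
\item Ignorability (Assumption~\ref{ass:S-ig-2}) removes the conditioning on $Z=1$.
\item The first part of Assumption~\ref{ass:surprin-ig} removes $\bar L^r(1)$.
\item This leaves $\Pr(S^r(0)=1\mid L^0)/\Pr(S^r(1)=1\mid L^0)$, which by ignorability and consistency equals $\pi^r_{0/1}(a,x)$.
\end{itemize}

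\textbf{Step 2: the contrast equation \eqref{eq:identi-subA2}.} Under monotonicity, anyone with $Z=0$ and $S^r=1$ has $G^r=LL$. Hence $\mathcal E^r(\bar l^r,0,a,x)=\mathcal M^r(\bar l^r,LL,0,a,x)$. Assumption~\ref{ass:no-inter} then equates the $a_1$ versus $a_0$ contrast of this quantity with the same contrast of $\mathcal M^r(\cdot,g,1,\cdot)$, for $g=LL,LD$.

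\textbf{Step 3: solving the linear system.} Write $m_g(a_i)=\mathcal M^r(\bar l^r,g,1,a_i,x)$, for four unknowns in total. Equation \eqref{eq:identi-subA2} gives $m_g(a_1)=m_g(a_0)+\delta$, with $\delta$ identified. Substituting into \eqref{eq:identi-subA1} at $a_0$ and $a_1$ leaves two equations in $m_{LL}(a_0)$ and $m_{LD}(a_0)$. The coefficient matrix has rows $(\pi(a_i),1-\pi(a_i))$, so its determinant is $\pi(a_0)-\pi(a_1)$.

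This determinant step is the main obstacle. I must argue that $\pi^r_{0/1}(a_1,x)\neq\pi^r_{0/1}(a_0,x)$ for some pair $a_0,a_1$. Assumption~\ref{ass:SubRel-2} supplies this, because $\pi^r_{0/1}(a,x)=\Pr(G^r=LL\mid S^r=1,Z=1,A=a,X=x)$ (Step 1 with $\bar L^r$ marginalized out), and this probability varies in $a$. The difficulty is the conditioning on $\bar l^r$: I need the coefficient in Step 1 to be free of $\bar l^r$, which Assumption~\ref{ass:surprin-ig} ensures. Positivity makes all the conditional quantities well defined.

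\textbf{Step 4: formula \eqref{eq:identi-0}.} Monotonicity gives $\{T(0)\wedge T(1)=t\}=\{T(0)=t\}$. Inverse probability weighting with ignorability of $Z$ and consistency then yields the formula. The required $Z\ind (T(0),Y^r(0))\mid L^0$ follows from Assumptions~\ref{ass:S-ig-2} and \ref{ass:Y-ig-2}.

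\textbf{Step 5: formula \eqref{eq:identi-1}.} Start from the iterated-expectation display of Section~\ref{identify:mu} with $z=1$.
\begin{itemize}
\item The score factor is given by Lemma~\ref{lem::CPS}.
\item For the outcome mean, under monotonicity $\{T(0)=t\}$ implies $G^r=LL$. By the second part of Assumption~\ref{ass:Y-ig-2}, we can condition on $Z=1$.
\item I then write $\{T(0)=t\}$ as $\{G^r=LL\}$ together with future survival-type events. Assumption~\ref{ass:noYtoS} drops those future events, so the mean equals $\mathcal M^r(\bar L^r,LL,1,A,X)$.
\item Finally, take the outer expectation over $\bar L^r(1)$ given $L^0$. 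Ignorability and consistency rewrite this as an average over the $Z=1$ arm with inverse probability weight $\mathbb I(Z=1)/\Pr(Z=1\mid L^0)$.
\item The factor $\Pr(S^t=1\mid \bar L^r,Z=1,L^0)$ from the lemma is replaced by $\mathbb I(S^t=1)$ via the tower property, since $\mathcal M^r$ depends only on $(\bar L^r,L^0)$. This gives \eqref{eq:identi-1}.
\end{itemize}
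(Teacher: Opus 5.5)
Your proposal is correct and follows essentially the same route as the paper's proof. You use the mixture decomposition of $\mathcal{E}^r(\bar l^r,1,a,x)$, with the $LL$-probability reduced to $\pi^r_{0/1}(a,x)$ via Assumption~\ref{ass:S-ig-2}, monotonicity and the first part of Assumption~\ref{ass:surprin-ig}; the control-arm contrast via Assumption~\ref{ass:no-inter}; inverse probability weighting for $z=0$; and Lemma~\ref{lem::CPS} with Assumptions~\ref{ass:Y-ig-2} and~\ref{ass:noYtoS} for $z=1$. Your explicit determinant $\pi^r_{0/1}(a_0,x)-\pi^r_{0/1}(a_1,x)$ sharpens the solvability step, which the paper only asserts as ``four equations in four unknowns'' under Assumption~\ref{ass:SubRel-2}.
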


Note that when \(r=0\), because \(Y^0\in L^0\), under Assumptions \ref{ass:Monoto-2} - \ref{ass:postivity}, 
	\begin{align}\label{eq:identi-baseline}
	\mathbb{E}\!\left[\mathbb{I}\{T(0)\land T(1)=t\}Y^0\right]
	=
	\mathbb{E}\Big[
	\frac{\mathbb{I}(T=t,Z=0)}{\Pr(Z=0\mid L^0)}Y^0
	\Big].
	\end{align}
Together with Theorem \ref{thm:identi:subA}, this identifies all components in \eqref{estimand:subA2}, and hence identifies $\mu(z)$ for $z=0,1$.

\subsection{Identification of while extended-survival estimands}

Under Assumption \ref{ass:Monoto-2}, the while extended-survival estimand in
\eqref{estimand:delta1} reduces to 
\begin{align*}
\mu^{\text{ext}} = \mathbb{E}\left[ \mathbb{I}(T(1) > T(0)) \sum_{r=T(0)}^{T(1)} \phi_{T(0),T(1)}^r Y^r(1)\right].
\end{align*}
A complication is that the start and end points of the extended survival period, \(T(0)\) and \(T(1)\), are random, so the weight \(\phi_{T(0),T(1)}^r\) is random as well. For the cumulative and AUC-based weighting schemes, this dependence can be handled by rewriting \(\mu^{\text{ext}}\) as the difference between two weighted summaries under treatment:
\begin{align}\label{estimand:delta1-2}
\mu^{\text{ext}} = \mathbb{E}\left[\left\{\sum_{r=0}^{T(1)} \phi_{T(1)}^r Y^r(1)\right\} - \left\{\sum_{r=0}^{T(0)} \phi_{T(0)}^r Y^r(1)\right\}\right],
\end{align}
where $\phi_{T(z)}^r$ denotes a weight that may depend on $T(z)$. The choices of \(\phi_{T(z)}^r\) corresponding to the cumulative and AUC-based weighting schemes are given in Supplementary Table~\ref{tab:example-weight-phi-re}. Under the AUC-based scheme, this representation equals the difference between the areas under the trajectory over \([0,T(1)]\) and \([0,T(0)]\), as illustrated by the red shaded region in Figure \ref{fig:AUC-delta}.

Similar to Section \ref{identify:mu}, we rewrite $\mu^{\text{ext}}$ in \eqref{estimand:delta1} as:
\begin{align}\label{estimand:delta1-3}
\mu^{\text{ext}} 
= \sum_{t=0}^{t_{\max}}\sum_{r=0}^t \phi_t^r \, \mathbb{E}\!\left[\mathbb{I}\{T(1)=t\}Y^r(1)\right]
- \sum_{t=0}^{t_{\max}}\sum_{r=0}^t \phi_t^r \, \mathbb{E}\!\left[\mathbb{I}\{T(0)=t\}Y^r(1)\right],
\end{align}
where $\phi_t^r$ denotes the weight $\phi_{T(z)}^r$ evaluated at $T(z)=t$, and is therefore nonrandom conditional on $T(z)=t$.
Under Assumptions \ref{ass:S-ig-2}, \ref{ass:postivity} and \ref{ass:Y-ig-2}, the first term in \eqref{estimand:delta1-3} is identified as
$
\mathbb{E}\!\left[\mathbb{I}\{T(1)=t\}Y^r(1)\right]
= \mathbb{E}\!\left[ \frac{\mathbb{I}(T=t, Z=1)}{\Pr(Z=1\mid L^0)} Y^r \right],
$
while identification of the second term in \eqref{estimand:delta1-3} follows from results in Section \ref{identify:mu}.

So far, we have shown that under the cumulative and AUC-based weighting schemes 
$\phi_{T(0),T(1)}^r$, the estimand $\mu^{\text{ext}}$ admits the decomposition in 
\eqref{estimand:delta1-2} and is therefore identifiable. More generally, we say a weighting scheme 
$\{\phi_{T(0),T(1)}^r: r=T(0),\ldots,T(1)\}$ is \emph{decomposable} if there exist weights 
$\{\phi_{T(0)}^r: r=0,\ldots,T(0)\}$ and 
$\{\phi_{T(1)}^r: r=0,\ldots,T(1)\}$ such that 
$\phi_{T(1)}^r=\phi_{T(0)}^r$ for $r=0,\ldots,T(0)-1$, and
$\phi_{T(0),T(1)}^r=\phi_{T(1)}^r-\phi_{T(0)}^{r}\mathbb{I}\{r=T(0)\}$
for $r=T(0),\ldots,T(1)$. Under any decomposable weighting scheme, $\mu^{\text{ext}}$ admits the representation in \eqref{estimand:delta1-2}.

\begin{remark}
Not all weighting schemes $\phi_{T(0),T(1)}^r$ are decomposable, and decomposability is required for the identification result above. For example, the average weighting scheme $\phi_{T(z_{\min}), T(z_{\max})}^{\mathrm{avg},r}$ is not decomposable because, for $T(0)<r\leq T(1)$, the weight depends on both $T(0)$ and $T(1)$. Therefore, the identification results above do not directly apply to $\mu^{\text{ext}}$ under the average weighting scheme.
\end{remark}

\subsection{Estimation}\label{sec:subA-est}

The previous sections established that $\mu(z)$ and $\mu^{\text{ext}}$ are identifiable under the stated identification assumptions. 
When $L^0$ is high-dimensional and continuous, however, additional model parameterization is required. This step is not merely computational. It translates the identification formulas into estimators while preserving the monotonicity and survival-ordering restrictions implied by the assumptions in Section \ref{identify:mu}. We impose the following constraints:
\begin{align}
&\Pr(S^r(0)=1 \mid X,A) \leq \Pr(S^r(1)=1 \mid X,A), \quad r=1,\ldots,t_{\max}; \label{cons:mon}\\
&\Pr(S^{r+1}(z)=1 \mid X,A) \leq \Pr(S^r(z)=1 \mid X,A), \quad r=1,\ldots,t_{\max}-1,\; z=0,1; \label{cons:surviv}\\
&\mathcal{E}^r(\bar{l}^r,1,a,x) - \mathcal{E}^r(\bar{l}^r,0,a,x)
\text{ is bounded as a function of } a \in \mathcal{A}, \text{ for } x \in \mathcal{X}, \bar{l}^r\in \bar{\mathcal{L}}^r, \; r=1,\ldots,t_{\max}. \label{cons:3}
\end{align}
Constraint \eqref{cons:mon} follows from Assumption \ref{ass:Monoto-2}, and constraint \eqref{cons:3} is implied by \eqref{eq:identi-subA1}--\eqref{eq:identi-subA2}. In addition, because $S^r$ is non-increasing in $r$, constraint \eqref{cons:surviv} is required in addition to the estimation procedure of \citet{wang2017identification}.

We next propose parameterizations that satisfy constraints \eqref{cons:mon}--\eqref{cons:3} and are compatible with the identifiability assumptions in Section \ref{identify:mu}. The parameterizations are chosen to encode the restrictions needed for the observed-data likelihood to align with the principal-stratum identification argument. For $r=1,\ldots,t_{\max}$,

\begin{itemize}
\item[$\mathcal{P}_1:$] $\mathcal{M}^r(\bar{L}^r,G^r,Z,A,X)=m(\bar{L}^r,G^r,Z,A,X;\alpha^r)$, where $\alpha^r$ is an unknown finite-dimensional parameter and $G^r$ takes values in $\{LL,LD\}$;

\item[$\mathcal{P}_2:$] $\Pr(S^r(1)=1 \mid S^{r-1}(1)=1, L^0)=l_1(L^0;\beta^r)$, where $\beta^r$ is an unknown finite-dimensional parameter;
\item[$\mathcal{P}_3:$] $\frac{\Pr(S^r(0)=1 \mid S^{r-1}(0)=1, L^0)}{\Pr(S^r(1)=1 \mid S^{r-1}(1)=1,L^0)}=l_{0/1}(L^0;\gamma^r)$, where $\gamma^r$ is an unknown finite-dimensional parameter;
\item[$\mathcal{P}_4:$] $ \Pr(Z=1\mid L^0) = l_z(L^0;\theta)$, where $\theta$ is an unknown finite-dimensional parameter.
\end{itemize}
For simulation and real data applications, we use linear models for $m(\cdot)$ and logistic regression models for $l_1(\cdot)$, $l_{0/1}(\cdot)$, and $l_z(\cdot)$, all without interaction terms. 
The observed-data modeling constraints implied by these parameterizations are provided in Remark~\ref{remark:obs-model-constraints} in Supplementary Material Section~\ref{supple:technical-details}. Let $\widehat{\alpha}^r$, $\widehat{\beta}^r$, $\widehat{\gamma}^r$, and $\widehat{\theta}$ denote the maximum likelihood estimators of ${\alpha}^r$, ${\beta}^r$, ${\gamma}^r$, and ${\theta}$, respectively. Using \eqref{eq:identi-0}, \eqref{eq:identi-1}, and \eqref{eq:identi-baseline}, we define the following plug-in quantities:
\begin{align*}
\widehat{Q}_{0,t}^r
&= \mathbb{P}_n \left\{\frac{\mathbb{I}(T=t, Z=0)}{1-l_z(L^0;\widehat{\theta})}Y^r \right\},\qquad
\widehat{Q}_{1,t}^r =
\begin{cases}
\widehat{Q}_{0,t}^0, & r=0,\\
\mathbb{P}_n \left\{\frac{\mathbb{I}(S^t=1, Z=1)}{l_z(L^0;\widehat{\theta})} \widehat{\pi}^t m(\bar{L}^r,LL,1,A,X;\widehat{\alpha}^r) \right\}, & r\ge 1,
\end{cases}
\end{align*}
where $\mathbb{P}_n$ is the empirical average, and $\widehat{\pi}^t$ is defined as $\prod_{k=1}^t l_{0/1}(L^0;\widehat{\gamma}^k) \left(1-l_1(L^0;\widehat{\beta}^{t+1})l_{0/1}(L^0;\widehat{\gamma}^{t+1})\right)$ if $t<t_{\max}$ and $\prod_{k=1}^{t_{\max}} l_{0/1}(L^0;\widehat{\gamma}^k)$ if $t=t_{\max}$.
Finally, the estimators for $\mu(z)$ and $\mu^{\text{ext}}$, for $z=0,1$, are given by
$\widehat{\mu}(0)= \sum_{t=0}^{t_{\max}} \sum_{r=0}^t w_t^r \widehat{Q}_{0,t}^r, 
\widehat{\mu}(1)= \sum_{t=0}^{t_{\max}} \sum_{r=0}^t w_t^r \widehat{Q}_{1,t}^r,$ and 
$$\widehat{\mu}^{\text{ext}} 
= \sum_{t=0}^{t_{\max}}\sum_{r=0}^t \phi_t^r \mathbb{P}_n \left\{\frac{\mathbb{I}(T=t, Z=1)}{l_z(L^0;\widehat{\theta})} Y^r \right\}
- \sum_{t=0}^{t_{\max}}\sum_{r=0}^t \phi_t^r \widehat{Q}_{1,t}^r.
$$

\section{Population-level summaries of separable effects}\label{sec:se}

The while guaranteed-survival and while extended-survival estimands target the full population, but they remain cross-world because they depend on the joint distribution of $T(1)$ and $T(0)$. In this section, we revisit the separable effects introduced in Section \ref{sec:separable}, which conceptually decompose treatment into two components. We develop a new class of marginal separable effect estimands that extend the CSE to the full population.

\subsection{A class of marginal separable estimands}\label{sec:alt-estimand}
Inspired by the CSE and applying an analogous construction to Section \ref{sec:MargEstimand}, we define a class of marginal separable estimands for the full population
\begin{align}\label{estimand:CSE}
\Gamma(z_Y,z_S):=\mathbb{E}\left[\sum_{r=0}^{T(z_S)} \psi_{T(z_S)}^r Y^r(z_Y,z_S)\right],
\end{align}
where $\psi_{T(z_S)}^r$ is a weight, for $r=0,1,...,T(z_S)$, that may depend on $T(z_S)$, the last time point at which the individual would be alive under $Z_S=z_S$. In general, $\Gamma(z_Y,z_S)$ collapses the longitudinal potential outcomes under treatment components $(Z_Y, Z_S)=(z_Y, z_S)$ during the time period $[0,T(z_S)]$, and the contrast $\Delta^{\text{sep}}(z_S):=\Gamma(1,z_S)-\Gamma(0,z_S)$ is the corresponding average causal effect of treatment $Z_Y$ on outcomes had all individuals been assigned $Z_S=z_S$. Unlike the cross-world survival time \(T(0)\land T(1)\), we are now considering a single-world quantity \(T(z_S)\). The event \(\{T(z_S)=t\}\) could therefore, in principle, be learned from a future four-arm trial of the treatment components described in Section \ref{sec:separable}.

\begin{table}[!htbp]
  \centering
  \caption{Four example types of weighting schemes for the marginal separable effect estimands}
  \label{tab:example-weight-se} 
  \renewcommand{\arraystretch}{1.2}
  \small
  \begin{tabular}{lll}
    \toprule
    Type &  Weight \\
    \midrule
    Exit time &  $\psi_{T(z_S)}^{\text{exit},r}=
\begin{cases}
1, & r=T(z_S),\\
0,  & r<T(z_S)
\end{cases}$ \\
    Average &  $\psi_{T(z_S)}^{\text{avg},r}=\frac{1}{T(z_S)+1} $ \\
    Cumulative &  $\psi_{T(z_S)}^{\text{cum},r}=1$ \\
    AUC-based &  $\psi_{T(z_S)}^{\text{AUC},r}=\mathbb{I}(T(z_S)\geq 1) \times
\begin{cases}
\frac{1}{2}\left[\tau_{T(z_S)} - \tau_{T(z_S)-1}\right], & r=T(z_S),\\
\frac{1}{2}\left[\tau_{r+1}-\tau_{r-1}\right],  & 0<r< {T(z_S)},\\
\frac{1}{2}\tau_1, & r=0
\end{cases}$ \\
    \bottomrule
  \end{tabular}
\renewcommand{\arraystretch}{1}
\end{table}

The weighting scheme in \eqref{estimand:CSE} may be chosen to reflect different clinical goals, and Table~\ref{tab:example-weight-se} presents four illustrative examples.   In addition, similar to Remark \ref{remark:SACE}, we provide the following remark to illustrate the relationship between our marginal separable estimands and the CSE. 

\begin{remark}[CSE at time $\dot{r}$]\label{remark:CSE} For a fixed time point $\dot{r}$, when $\psi_{T(z_S)}^r=\frac{\mathbb{I}(T(z_S)\geq \dot{r})}{\Pr(T(z_S)\geq \dot{r})}\mathbb{I}(r=\dot{r})$, we have $\Gamma(z_Y,z_S)=\mathbb{E}[Y^{\dot{r}}(z_Y,z_S)\mid S^{\dot{r}}(z_S)=1]$ and $\Delta^{\text{sep}}(z_S)$ coincides with the CSE at time $\dot{r}$. Thus, the CSE can be interpreted as a special case of our marginal separable effect estimand.
\end{remark}

\subsection{Identification}
To identify $\Gamma(z_Y,z_S)$ for $z_Y,z_S\in \{0,1\}$, we first rewrite \eqref{estimand:CSE} as
\begin{align}\label{estimand:CSE-2}
\Gamma(z_Y, z_S) =  \sum_{t=0}^{t_{\max}} \sum_{r=0}^t \psi_t^r 
\mathbb{E}\left[ \mathbb{I}\{T(z_S)=t\}Y^r(z_Y,z_S)\right],
\end{align}
where $\psi_t^r$ denotes the weight $\psi_{T(z_S)}^r$ evaluated at $T(z_S)=t$. 
Thus, identification of $\Gamma(z_Y,z_S)$ reduces to identifying 
$\mathbb{E}[\mathbb{I}\{T(z_S)=t\}Y^r(z_Y,z_S)]$ for $0 \le r \le t \le t_{\max}$.

Motivated by existing work on conditional separable effects, and in addition to Assumption \ref{ass:Z_S-DCC} on the treatment component $Z_S(\mathcal{C})$, we impose the following dismissible component conditions for $Z_Y$ and an S-Markov sufficiency assumption under the four-arm trial $\mathcal{C}$.

\begin{assumption}
[$Z_Y$ dismissible component condition] \label{ass:DCC_LongSE-2} For $r=1,\ldots,t_{\max}$,
\begin{itemize}
    \item[(1)] $Z_Y(\mathcal{C}) \ind S^r(\mathcal{C}) \mid Z_S(\mathcal{C}),\, S^{r-1}(\mathcal{C})=1,\, \bar{L}^r(\mathcal{C}),\, L^0$;
    \item[(2)] $Z_Y(\mathcal{C}) \ind L^r(\mathcal{C}) \mid Z_S(\mathcal{C}),\, S^{r-1}(\mathcal{C})=1,\, \bar{L}^{r-1}(\mathcal{C}),\, L^0$.
\end{itemize}
\end{assumption}

\begin{assumption}[S-Markov sufficiency]\label{ass:S-markov}

\(
\underline{S}^{r+1}(\mathcal{C}) \ind Y^{r}(\mathcal{C})
\mid Z_Y(\mathcal{C}),\, Z_S(\mathcal{C}),\, S^{r}(\mathcal{C})=1,\, \bar{L}^{r}(\mathcal{C}),\, L^0,
\) for $r=1,\ldots,t_{\max}-1$.
\end{assumption}

Assumptions \ref{ass:DCC_LongSE-2} and \ref{ass:S-markov} can be assessed from the causal graph under trial $\mathcal{C}$ in Figure \ref{fig:Longi-SE-2}(b) using d-separation. Assumption \ref{ass:DCC_LongSE-2} reflects the conceptual decomposition of treatment, under which the outcome-related component $Z_Y$ does not directly affect survival or survival-related time-varying covariates $\bar{L}^{t_{\max}}$. For example, in a cancer clinical trial, these time-varying covariates often characterize tumor progression or disease status, which $Z_Y$ is not expected to directly affect. Similarly to Assumption \ref{ass:noYtoS} in Section \ref{identify:mu}, Assumption \ref{ass:S-markov} rules out directed paths from $Y^r(\mathcal{C})$ to $(L^t(\mathcal{C}), S^t(\mathcal{C}))$ for any $r < t$. The plausibility of this assumption depends on the context, in particular the length of the follow-up window. For example, quality of life is unlikely to have direct effects on death over short time periods, say months or a few years, although these variables likely have strong common causes. Over longer time horizons, however, such direct effects become more plausible. We further impose the following standard weak ignorability assumption.

\begin{assumption}[Ignorability]\label{ass:ranIgn}
$ Z \ind \bar{Y}^{t_{\max}}(z),\bar{S}^{t_{\max}}(z), \bar{L}^{t_{\max}}(z) \mid L^0$.
\end{assumption}

\begin{theorem}\label{thm:identi-SE}
Suppose Assumptions \ref{ass:Z_yPI}, \ref{ass:Z_S-DCC}, \ref{ass:MTA}, \ref{ass:postivity}, and \ref{ass:DCC_LongSE-2} - \ref{ass:ranIgn} hold. For \(r=0\) and \(t=0,\ldots,t_{\max}\), the components $\mathbb{E}[\mathbb{I}\{T(z_S)=t\}Y^r(z_Y, z_S)]$ in \eqref{estimand:CSE-2} are identified as 
\begin{align*}
\mathbb{E}\!\left[\mathbb{I}\{T(z_S)=t\}Y^0\right]
= \mathbb{E}\left[\frac{\mathbb{I}(T=t, Z=z_S)}{\Pr(Z=z_S\mid L^0)}Y^0\right].
\end{align*}
For $1\leq r \leq t\leq t_{\max}$,
\begin{align}\label{eq:iden-SE}
\mathbb{E}\left[\mathbb{I}\{T(z_S)=t\}Y^r(z_Y, z_S)\right]
= \mathbb{E}\left[ \mathbb{E}\left\{ \mathbb{I}(T=t)\, \mathbb{E}(Y^r \mid S^r=1, \bar{L}^r, Z=z_Y, L^0)\Big | Z=z_S, L^0\right\} \right].
\end{align}
\end{theorem}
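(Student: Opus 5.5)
The plan is to move back and forth between the current two-arm trial and the hypothetical four-arm trial $\mathcal{C}$. The quantity $\mathbb{E}[\mathbb{I}\{T(z_S)=t\}Y^r(z_Y,z_S)]$ is first written as a functional of trial $\mathcal{C}$ in arm $(Z_Y,Z_S)=(z_Y,z_S)$. The dismissible component conditions then let us swap arms piece by piece: survival and $\bar L$ factors are taken from arm $z_S$, and outcome regressions from arm $z_Y$. Intervention consistency (Assumption \ref{ass:MTA}) and ignorability (Assumption \ref{ass:ranIgn}) map the resulting "matched" arms $(z,z)$ back to the observed arm $Z=z$.

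\textbf{Case $r=0$.} Since $Y^0\in L^0$, partial isolation (Assumption \ref{ass:Z_yPI}) and consistency give $T(z_S)=T(z=z_S)$. Ignorability and positivity then yield the IPW formula directly, exactly as in \eqref{eq:identi-baseline}.

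\textbf{Case $r\ge 1$.} Because $T(z_S)=t$ implies $S^r(z_S)=1$, I will condition on $L^0$ and on the covariate history, writing the component in trial $\mathcal{C}$ arm $(z_Y,z_S)$ as
\begin{align*}
\mathbb{E}\Big[\sum_{\bar l^t}\Pr\{T=t,\bar L^t=\bar l^t\mid z_Y,z_S,L^0\}\;\mathbb{E}\{Y^r\mid T=t,\bar L^t=\bar l^t,z_Y,z_S,L^0\}\Big].
\end{align*}
The argument then proceeds in four steps.
\begin{enumerate}
\item By S-Markov sufficiency (Assumption \ref{ass:S-markov}), applied together with the $Z_Y$ dismissible conditions, the inner outcome mean reduces to $\mathbb{E}\{Y^r\mid S^r=1,\bar L^r,z_Y,z_S,L^0\}$. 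This removes the dependence on future survival and on $L^{r+1},\dots,L^t$.
\item By the $Z_S$ dismissible condition (Assumption \ref{ass:Z_S-DCC}), this outcome mean equals the same quantity in arm $(z_Y,z_Y)$.
\item Writing the joint law of $(\bar S^t,\bar L^t)$ as a sequential product of hazard terms $\Pr(S^k=1\mid S^{k-1}=1,\bar L^k,\cdot)$ and covariate terms $p(L^k\mid S^{k-1}=1,\bar L^{k-1},\cdot)$, Assumption \ref{ass:DCC_LongSE-2}(1)--(2) lets me replace $z_Y$ by $z_S$ in every factor. The joint law therefore equals that of arm $(z_S,z_S)$.
\item Assumption \ref{ass:MTA} maps both matched arms $(z_Y,z_Y)$ and $(z_S,z_S)$ to potential outcomes under $Z=z_Y$ and $Z=z_S$. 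Ignorability identifies these by observed conditionals given $Z$ and $L^0$, and positivity (Assumption \ref{ass:postivity}) ensures the conditioning events have positive probability.
\end{enumerate}
Recombining the reduced outcome mean, which depends only on $\bar L^r\subset\bar L^t$, with the observed joint law under $Z=z_S$ gives the nested expectation in \eqref{eq:iden-SE}.

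A preliminary step is to justify that the potential outcome $\mathbb{E}[\mathbb{I}\{T(z_S)=t\}Y^r(z_Y,z_S)]$ equals its counterpart in trial $\mathcal{C}$ arm $(z_Y,z_S)$. This uses randomization of $(Z_Y,Z_S)$ in $\mathcal{C}$, taken as implicit in its definition, and Assumption \ref{ass:Z_yPI} so that $T(z_Y,z_S)=T(z_S)$.

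The \textbf{main obstacle} is step 1. We condition on the whole future survival path $\{T=t\}=\{S^t=1,S^{t+1}=0\}$, so S-Markov sufficiency must be used with care: it is stated given only $\bar L^r$, while the event $\{T=t\}$ is generated jointly with the later covariates $L^{r+1},\dots,L^t$. I would first integrate out $L^{r+1},\dots,L^t$, so that the relevant event is $\{T=t\}$ given $(S^r=1,\bar L^r)$. Assumption \ref{ass:S-markov} then applies directly to $\underline S^{r+1}$, giving
\begin{align*}
\mathbb{E}[\mathbb{I}(T=t)Y^r\mid S^r=1,\bar L^r,\cdot]=\Pr(T=t\mid S^r=1,\bar L^r,\cdot)\,\mathbb{E}[Y^r\mid S^r=1,\bar L^r,\cdot].
\end{align*}
After that, only the survival and covariate law up to $r$, together with the conditional law of future survival given $(S^r=1,\bar L^r)$, need the arm swap in step 3. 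That swap in turn requires the dismissible conditions to propagate past time $r$ via the same sequential factorization.
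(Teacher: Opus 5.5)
This is essentially the paper's proof. The paper writes the component as $\sum_{\bar l^r}\mathbb{E}[Y^r\mid T=t,\bar L^r,\cdot]\,\Pr(T=t,\bar L^r\mid\cdot)$ in arm $(z_Y,z_S)$ of trial $\mathcal{C}$, moves the outcome mean to arm $Z=z_Y$ via S-Markov sufficiency and the $Z_S$ dismissible condition, and moves the joint law of $(T,\bar L^r)$ to arm $Z=z_S$ via the sequential hazard/covariate factorization and the $Z_Y$ dismissible conditions, finishing with intervention consistency and ignorability.

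The one thing to fix is your initial display and step 1, which condition on $\bar L^t$. Neither S-Markov sufficiency (which concerns only $\underline S^{r+1}$) nor the $Z_Y$ dismissible conditions lets you drop $L^{r+1},\dots,L^t$ from $\mathbb{E}\{Y^r\mid T=t,\bar L^t,\cdot\}$. The version in your last paragraph, which conditions only on $\bar L^r$ from the start and integrates the later covariates into the probability factor, is the correct one and is exactly the paper's decomposition.
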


Theorem \ref{thm:identi-SE} identifies all components in \eqref{estimand:CSE-2}, and hence identifies $\Gamma(z_Y, z_S)$.

\subsection{Connection to while guaranteed-survival and while extended-survival estimands}\label{sec:connect}

In general, the while guaranteed-survival and while extended-survival estimands differ from the marginal separable effect estimands. Conceptually, the former characterizes the effects by partitioning the survival time into two distinct periods. In contrast, the marginal separable effects assess the effect of $Z_Y$ on outcomes while holding the survival-related component fixed at $Z_S=z_S$, over $[0, T(z_S)]$. Nevertheless, these estimands can be connected under additional assumptions.

In particular, under Assumptions \ref{ass:Z_yPI} - \ref{ass:Monoto-2}, both $\mu(z)$ and $\Gamma(z_Y,0)$ summarize the potential outcomes during the time period $[0,T(0)]$. However, 
$\Delta^{\text{sep}}(0)$ isolates the effect of $Z_Y$ on $\{Y^r: r=1,...,T(0)\}$ over $[0,T(0)]$ by fixing $z_S=0$. In contrast, $\Delta^{\text{gua}}$ captures not only the effect of $Z_Y$ on $\{Y^r: r=1,...,T(0)\}$, but also the effect of $Z_S$ on $\{Y^r: r=1,...,T(0)\}$ mediated through $\{L^r: r=1,...,T(0)\}$. This distinction is similar in spirit to our discussion of the SACE and CSE in Section \ref{sec:separable}. Therefore, under the following assumption that rules out any additional effect of $Z_S$ on $Y^t$ that does not pass through $\bar{S}^t$, we have $\Delta^{\text{gua}}=\Delta^{\text{sep}}(0)$ under appropriate choices of the weighting schemes.

\begin{assumption}[$Z_S$ partial isolation] \label{ass:Z_sPI}
For $z_Y\in \{0,1\}$, $Y^t(z_Y,1)=Y^t(z_Y,0)$ if ${S}^t(1)={S}^t(0)=1$.
\end{assumption}

On the other hand, under Assumptions \ref{ass:Monoto-2} and \ref{ass:Z_sPI}, $\Gamma(1,0)=\mathbb{E}\left[\sum_{r=0}^{T(0)} \psi_{T(0)}^r Y^r(1,1)\right]$. Hence, with appropriate choices of the weighting schemes, $\Gamma(1,1)-\Gamma(1,0)$ could summarize the potential outcomes under $(z_Y,z_S)=(1,1)$ over \((T(0),T(1)]\), aligning with the quantity targeted by $\mu^{\text{ext}}$. Then the following proposition specifies the conditions under which $\Delta^{\text{gua}}=\Delta^{\text{sep}}(0)$ and $\mu^{\text{ext}}=\Gamma(1,1)-\Gamma(1,0)$. When choosing appropriate weighting schemes to establish the equivalence, we focus on fixed-index weights $w_t^r$ in \eqref{estimand:subA2}, $\phi_t^r$ in \eqref{estimand:delta1-3}, and $\psi_t^r$ in \eqref{estimand:CSE-2}.

\begin{prop}\label{prop:connection}
Under Assumptions \ref{ass:Z_yPI}, \ref{ass:MTA}, \ref{ass:Monoto-2}, and \ref{ass:Z_sPI}, 
if $w_t^r=\psi_t^r$, then $\Delta^{\text{gua}}=\Delta^{\text{sep}}(0)$; 
if $\phi_t^r=\psi_t^r$, then $\mu^{\text{ext}} = \Gamma(1,1) - \Gamma(1,0)$.
\end{prop}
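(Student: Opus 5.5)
The plan is to work pointwise at the level of each individual and then take expectations. We use the rewritten forms \eqref{estimand:subA2}, \eqref{estimand:delta1-3} and \eqref{estimand:CSE-2}, with fixed-index weights $w_t^r$, $\phi_t^r$ and $\psi_t^r$.

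First, we translate between the two potential-outcome systems. Assumption \ref{ass:MTA} gives $Y^r(z)=Y^r(z,z)$ and $S^r(z)=S^r(z_S=z)$. Combined with Assumption \ref{ass:Z_yPI}, this yields $T(z)=T(z_S=z)$, so the survival times appearing in $\mu$, $\mu^{\text{ext}}$ and $\Gamma$ are the same random variables. Assumption \ref{ass:Monoto-2} then gives $T(0)\land T(1)=T(0)$, and also $S^r(1)=S^r(0)=1$ for every $r\le T(0)$, because survival status is nonincreasing in $r$.

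For the first claim, fix an individual and any $r\le T(0)$. Since both survival statuses equal one, Assumption \ref{ass:Z_sPI} gives $Y^r(1,1)=Y^r(1,0)$ and $Y^r(0,1)=Y^r(0,0)$. Hence $Y^r(1)=Y^r(1,1)=Y^r(1,0)$ and $Y^r(0)=Y^r(0,0)$. Substituting into \eqref{estimand:subA2}, each summand $\mathbb{E}[\mathbb{I}\{T(0)=t\}Y^r(z)]$ with $r\le t$ equals $\mathbb{E}[\mathbb{I}\{T(z_S=0)=t\}Y^r(z,0)]$. When $w_t^r=\psi_t^r$ this is exactly the corresponding summand of \eqref{estimand:CSE-2} with $z_S=0$. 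It follows that $\mu(z)=\Gamma(z,0)$ for $z=0,1$, and therefore $\Delta^{\text{gua}}=\Delta^{\text{sep}}(0)$. For $r=0$, $Y^0$ is baseline and treatment-invariant, so that case is trivial.

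For the second claim, we use representation \eqref{estimand:delta1-3}, which implicitly assumes a decomposable scheme. Its first term is $\sum_t\sum_{r\le t}\phi_t^r\mathbb{E}[\mathbb{I}\{T(1)=t\}Y^r(1)]$. Using $Y^r(1)=Y^r(1,1)$ and $T(1)=T(z_S=1)$, this matches $\Gamma(1,1)$ when $\phi_t^r=\psi_t^r$. The second term involves $\mathbb{I}\{T(0)=t\}Y^r(1)$ with $r\le t$. By the argument above, $Y^r(1)=Y^r(1,0)$ on this event, so the second term equals $\Gamma(1,0)$. Subtracting gives $\mu^{\text{ext}}=\Gamma(1,1)-\Gamma(1,0)$.

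The main subtlety is the well-definedness bookkeeping. Assumption \ref{ass:Z_sPI} applies only when both $S^r(1)=S^r(0)=1$, and it must be invoked only for $r\le T(0)$. Monotonicity combined with the nonincreasing survival process is what guarantees this on every summand. A second point to check is that $\mu^{\text{ext}}$ in the form \eqref{estimand:delta1-3} presupposes decomposable weights, with $\phi_t^r$ being the same function for $T(1)$ and for $T(0)$. This is what allows a single $\psi_t^r$ to match both terms, so the statement should be read with that convention.
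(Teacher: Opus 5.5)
Your proposal is correct and follows essentially the same route as the paper's proof. Both arguments translate the potential outcomes via Assumptions~\ref{ass:MTA} and~\ref{ass:Z_yPI}, use monotonicity to replace $T(0)\land T(1)$ by $T(0)=T(z_S=0)$, and invoke Assumption~\ref{ass:Z_sPI} for $r\le T(0)$ to swap $Y^r(1,1)$ for $Y^r(1,0)$, giving $\mu(z)=\Gamma(z,0)$; they then match the two terms of \eqref{estimand:delta1-3} to $\Gamma(1,1)$ and $\Gamma(1,0)$. The paper leaves two points implicit that you check explicitly and correctly: that $S^r(1)=S^r(0)=1$ holds on every summand, and that the second claim presupposes a decomposable weighting scheme.
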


\subsection{Estimation}\label{sec:estimator-SE}

Let \(\Lambda_t^r(z_Y,z_S)\) denote the corresponding component identified in Theorem \ref{thm:identi-SE} for \(0\le r \le t\). The influence function in Theorem \ref{thm:triplyRobust-SE} of Supplementary Material Section~\ref{supple:technical-details} motivates estimation under \(\mathcal P_4\), defined in Section \ref{sec:subA-est}, and two additional model classes. For \(r=1,\ldots,t_{\max}\),
\begin{itemize}
\item[$\mathcal{P}_5$:] $\mathbb{E}[Y^r \mid S^r=1,\bar{L}^r,Z, L^0]=\kappa(\bar{L}^r,Z,L^0;\eta^r)$, where $\eta^r$ is an unknown finite-dimensional parameter;

\item[$\mathcal{P}_6$:] 
$p_{L^r\mid S^{r-1},\bar L^{r-1},Z,L^0}(l^r \mid 1,\bar L^{r-1},Z,L^0) = l_L(\bar L^{r-1},Z,L^0;\rho_L^r)$, 
and $\Pr(S^r=1 \mid S^{r-1}=1,\bar L^r,Z,L^0) = l_S(\bar L^r,Z,L^0;\rho_S^r)$, 
where $\rho^r=((\rho_L^r)^\top,(\rho_S^r)^\top)^\top$ is an unknown finite-dimensional parameter.
\end{itemize}

Solving the corresponding estimating equation under \(\mathcal P_4\cap\mathcal P_5\cap\mathcal P_6\) yields a triply robust estimator of \(\Lambda_t^r(z_Y,z_S)\), consistent if any two model classes are correctly specified. Because \(\mathcal P_6\) requires modeling $p_{L^r\mid S^{r-1},\bar L^{r-1},Z,L^0}$, we use the simpler estimator obtained by omitting the \(\mathcal P_6\)-dependent terms. For $r=1,\ldots,t$,
\begin{align*}
\widehat{\Lambda}_t^r(z_Y,z_S) 
= \mathbb{P}_n \left\{ 
\frac{\mathbb{I}(T=t, Z=z_S)}{z_S l_z(L^0;\widehat{\theta})+(1-z_S)(1-l_z(L^0;\widehat{\theta}))} 
\kappa(\bar{L}^r,z_Y,L^0;\widehat{\eta}^r)
\right\}.
\end{align*}
For \(r=0\), \(\widehat{\Lambda}_t^0(z_Y,z_S)\) uses the same inverse-probability weighted form, with the outcome-regression term replaced by \(Y^0\).

The resulting estimator is
$
\widehat{\Gamma}(z_Y,z_S) 
= \sum_{t=0}^{t_{\max}} \sum_{r=0}^t \psi_t^r \widehat{\Lambda}_t^r(z_Y,z_S).
$
In simulations and the application, we use the same model for $l_z(\cdot)$ as in Section \ref{sec:subA-est} and specify \(\kappa(\cdot)\) using linear models without interaction terms.

Simulation studies assessing the finite-sample performance of the proposed estimators in Sections~\ref{sec:subA-est} and~\ref{sec:estimator-SE} are reported in Supplementary Material Section~\ref{supple:simulation}.

\section{The Southwest Oncology Group Trial}

We illustrate our approaches by reanalyzing a randomized phase III prostate cancer trial conducted by the Southwest Oncology Group (SWOG) between October 1999 and January 2003, comparing Docetaxel plus Estramustine (DE) with Mitoxantrone plus Prednisone (MP) in men with metastatic androgen-independent prostate cancer \citep{petrylak2004docetaxel}. 
The dataset created by \citet{ding2011identifiability} contains 487 men aged $47-88$ years, with $258$ randomly assigned to  DE ($Z=1$) and $229$ to MP ($Z=0$). Baseline covariates ($L^0$) include age, race, type of prognosis, bone pain, performance status, and baseline QoL. Health-related QoL, scored from 0 to 100, was assessed at 3 months, 6 months, and 1 year among survivors. We set $\tau=(0,1/4,1/2,1)$, corresponding to the elapsed times in years from baseline to the scheduled assessment times, and derive a time-varying disease progression indicator $L^t$ $(t=1,2,3)$ from the recorded progression date.

\begin{figure}[!htbp]
    \centering
    \includegraphics[width=0.55\linewidth]{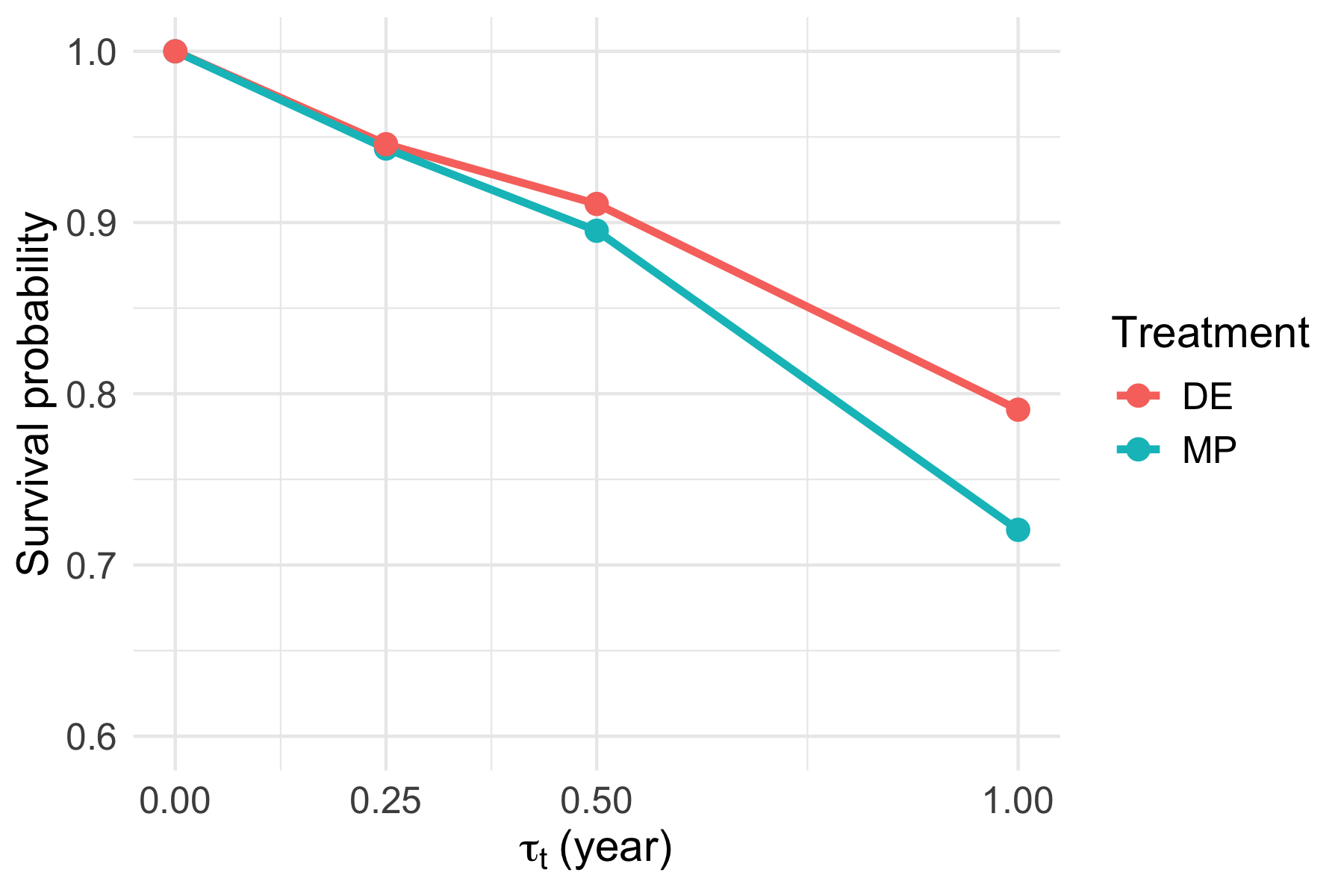}
    \caption{Estimated survival curve under two chemotherapies.}
    \label{fig:data-survival}
\end{figure}

Following \cite{park2024proximal}, we impute missing QoL values among survivors using k-nearest neighbors (KNN) imputation \citep{troyanskaya2001missing}. Death is treated as a truncating event rather than as a missing QoL value. Figure \ref{fig:data-survival} shows that patients receiving DE have higher survival probabilities than those receiving MP, consistent with prior findings reported in \citet{petrylak2004docetaxel}. However, \citet{petrylak2004docetaxel} reported higher rates of grade 3 or 4 neutropenic fevers, nausea and vomiting, and cardiovascular events among patients receiving DE than among those receiving MP. The survival benefit and toxicity profile together motivate an analysis of treatment effects on changes in QoL from baseline.

Several analyses have compared changes in QoL between treatment groups among the always-survivor group at the 6-month or 1-year follow-up \citep{ding2011identifiability,wang2017identification,yang2018using}. Following \cite{ding2011identifiability} and \cite{wang2017identification}, we use baseline QoL as a substitution variable to estimate the SACE at each follow-up time, denoted by SACE(\(t\)) for \(t=1,2,3\).  As shown in the left panel of Figure \ref{fig:data-sace}, the conclusions depend on the endpoint: the 6-month and 1-year endpoints suggest a QoL benefit of DE relative to MP among the corresponding always-survivor group, whereas the 3-month endpoint suggests the opposite. However, SACE(\(t\)) should not be compared directly across \(t\), because as shown in the right panel of Figure \ref{fig:data-sace}, the always-survivor group changes and shrinks over time. Focusing on these typically healthier subgroups also limits the interpretability and practical relevance of the findings for the full trial population.

\begin{figure}[!htbp]
    \centering
    \includegraphics[width=0.95\linewidth]{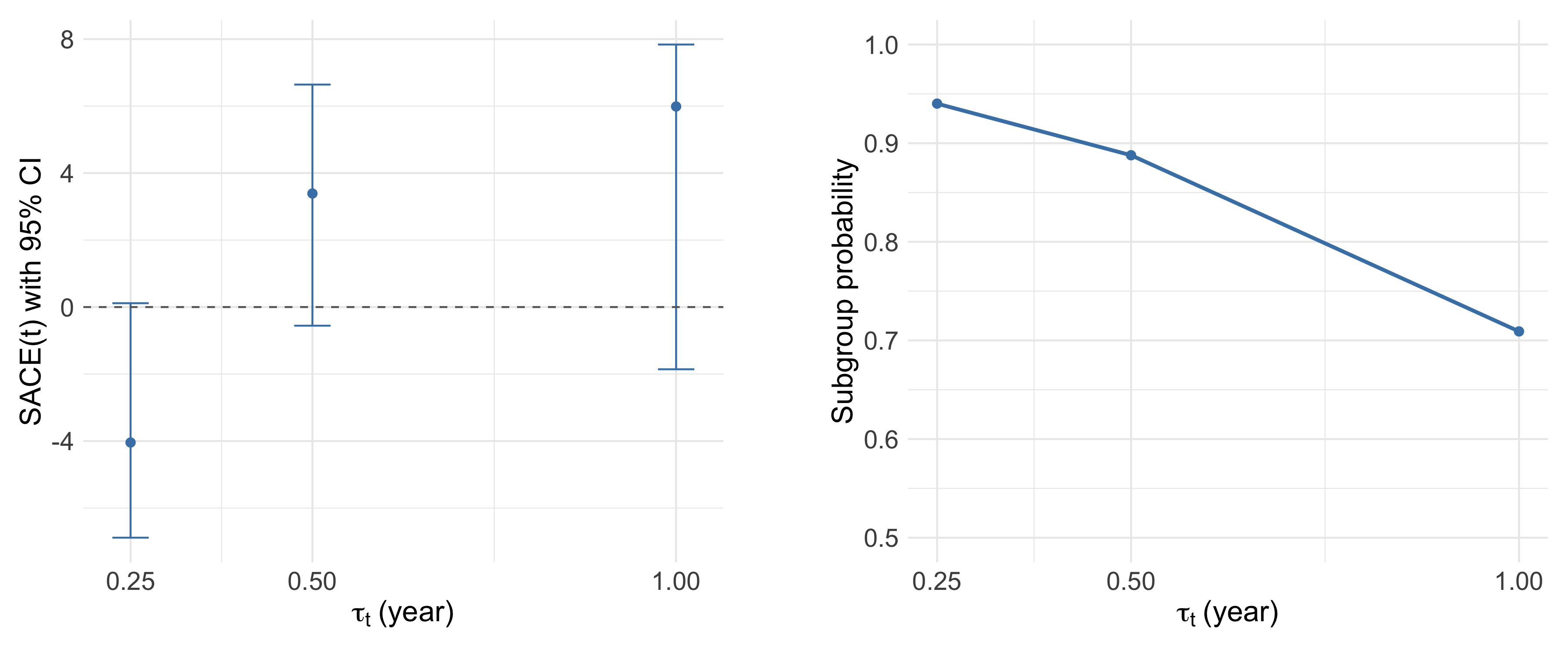}
    \caption{The left panel shows the estimated SACE(t) at three time points together with the corresponding 95\% confidence intervals, and the right panel shows the estimated subgroup probability for the always-survivor group at each time point.}
    \label{fig:data-sace}
\end{figure}

We then apply the while guaranteed-survival estimands to evaluate the effect of chemotherapy on changes in QoL during the guaranteed survival period, again using baseline QoL as the substitution variable. For comparison, we report the while-alive contrast \(\lambda(1)-\lambda(0)\), which is commonly used in longitudinal analyses but does not admit a causal interpretation in this setting. Details of the estimation procedure for \(\lambda(z), z=0,1\) are provided in Supplementary Material Section~\ref{supple:while-alive}.

Table \ref{tab:comparison-guaranteedperiod} reports estimates of \(\lambda(1)-\lambda(0)\) and \(\Delta^{\text{gua}}\) under several weighting schemes. The while-alive estimates are small and positive, with 95\% confidence intervals covering zero. The while guaranteed-survival estimates are also positive, but the exit-time estimate is close to the commonly used \([-5,5]\) benchmark for a clinically meaningful difference in QoL \citep{cocks2012evidence}. This variation is informative because the weighting schemes target different clinical questions.

Under the exit-time weighting scheme, the while guaranteed-survival estimate is much larger than the corresponding while-alive estimate and is slightly smaller than the SACE at 1 year. This pattern is consistent with the concern that while-alive summaries compare outcomes at different exit times under DE and MP, which may attenuate the estimated effect when DE prolongs survival; the SACE at 1 year, by contrast, targets a healthier always-survivor subgroup. Comparing \(\Delta^{\text{gua}}\) across weighting schemes suggests that the estimated benefit of DE relative to MP depends on the clinical goal of interest: DE appears more favorable when emphasis is placed on exit-time or end-of-life outcomes, whereas the difference is small when the focus is average QoL over time or the longitudinal trajectory.

\begin{table}[!htbp]
    \centering
    \caption{Estimates for the contrasts $\lambda(1)-\lambda(0)$, $\Delta^{\text{gua}}$ and $\Delta^{\text{sep}}(0)$ under different weighting schemes. The bootstrap standard errors and quantile-based confidence intervals are derived from $500$ bootstrap samples}
    \small
    \begin{tabular}{clccc}
        \toprule
        Approach & Weight & Point estimate & Bootstrapped SE &  95\% CI \\
        \midrule
        \multirow{3}{*}{\makecell{While-alive\\
        $\lambda(1)-\lambda(0)$}} 
        & Exit time & 0.91 & 1.66 & ($-$2.31,  4.30)\\
        & Average & 0.40 & 0.92 & ($-$1.50, 2.09)\\
        & AUC-based & 1.05 & 1.02 & ($-$0.92,  2.92)\\
        \midrule
        \multirow{4}{*}{\makecell{While guaranteed-survival\\
        $\Delta^{\text{gua}}$}} 
        & SACE at 1 year &  5.99 & 2.51 & ($-$1.86, 7.84) \\
        & Exit time & 5.10 & 2.17 & ($-$0.44, 7.94) \\
        & Average & 1.83 & 1.11  & ($-$1.05,  3.15)  \\
        & AUC-based & 1.81 & 1.16  &  ($-$1.23, 3.23)\\
        \midrule
        \multirow{4}{*}{\makecell{Marginal separable effect\\
        $\Delta^{\text{sep}}(0)$}} 
        & CSE at 1 year & 1.19 & 1.86 & ($-$2.13, 5.00) \\
        & Exit time & 0.68 & 1.46& ($-$1.89, 3.75)\\
        & Average & $-$0.66 & 0.86 & ($-$2.36, 0.94)\\
        & AUC-based & $-$0.54 & 0.90 & ($-$2.34, 1.12)\\
    \bottomrule
    \end{tabular}
    \label{tab:comparison-guaranteedperiod}
\end{table}

We next consider marginal separable effects by conceptualizing the chemotherapy regimen as having a survival-related component \(Z_S\), which affects survival through antitumor activity, and a QoL-related component \(Z_Y\), which affects QoL more directly through symptom relief or treatment toxicity \citep{stensrud2023conditional}. This decomposition is conceptual rather than experimentally implemented in the SWOG trial, so the estimates should be interpreted through the assumptions of the separable effects framework. We estimate \(\Gamma(z_Y,z_S)\) by adjusting for baseline covariates and the time-varying indicator of cancer progression, and Table \ref{tab:comparison-guaranteedperiod} reports \(\Delta^{\text{sep}}(0)\).

The estimates of \(\Delta^{\text{sep}}(0)\) are close to zero, with confidence intervals covering zero, and are smaller than the corresponding \(\Delta^{\text{gua}}\) estimates. This pattern is consistent with the distinction in Section \ref{sec:connect}: \(\Delta^{\text{gua}}\) may include pathways through which the survival-related component affects QoL via cancer progression, whereas \(\Delta^{\text{sep}}(0)\) isolates the QoL-related component while fixing \(Z_S=0\). The larger \(\Delta^{\text{gua}}\) estimates are therefore compatible with an indirect QoL benefit through the survival-related component of DE, consistent with the stronger antitumor role of DE and the more palliative role historically associated with MP \citep{petrylak2004docetaxel,tannock1996chemotherapy}.

Finally, we evaluate QoL during the extended survival period under DE relative to MP using the AUC-based weighting scheme, which accounts for the unequal spacing of the follow-up assessments. Table \ref{tab:ext_qol_survival} reports the restricted survival-time contrast, the while extended-survival estimate, and the corresponding cumulative QoL values during the extended survival period. Results suggest that DE increased 1-year restricted survival time by \(0.05\) years, or about \(0.6\) months, relative to MP. Over this additional survival time, \(\mu^{\mathrm{ext}}=-0.77\) indicates a cumulative QoL loss of \(0.77\) QoL score-years relative to baseline. After adding back baseline QoL, the corresponding AUC-based cumulative QoL was \(1.45\) QoL score-years. Because the estimated survival gain is small, normalizing this cumulative QoL change by the additional survival time would be unstable and could imply a large average QoL decrement over the added time. We therefore interpret these results on the cumulative scale: the point estimate suggests some QoL decline during the modest additional survival period under DE, but the confidence intervals are wide and include values compatible with little change or improvement.

According to Table \ref{tab:ext_qol_survival}, the marginal separable contrast, \(\Gamma(1,1)-\Gamma(1,0)\), is larger than the corresponding estimate of \(\mu^{\mathrm{ext}}\). This agrees with the interpretations of the two estimands: \(\mu^{\mathrm{ext}}\) summarizes QoL during the extended survival period under the full treatment regimens, whereas \(\Gamma(1,1)-\Gamma(1,0)\) isolates the contribution of the survival-related component while holding the QoL-related component fixed. Thus, the larger marginal separable contrast may reflect that, by improving tumor control, the survival-related component of DE may reduce cancer progression and thereby indirectly improve QoL.

\begin{table}[htbp]
		\centering
		\caption{Summary measures for evaluating QoL during the extended survival time under DE, using AUC-based weighting. Entries are reported as estimates \((95\%\ \text{bootstrap CI})\)}
		\small
		\setlength{\tabcolsep}{4pt}
		\begin{tabular}{@{}M{0.40\textwidth} C{0.23\textwidth} C{0.29\textwidth}@{}}
			\toprule
			Measure & Primary estimate & \begin{tabular}[c]{@{}c@{}}Corresponding AUC-based\\ cumulative QoL\end{tabular} \\
			\midrule
			Increase in restricted 1-year survival time under DE relative to MP (years)
			&  \(0.05\; (0,\;0.10)\) & \textemdash \\
			\addlinespace[2pt]
			While extended-survival \(\mu^{\mathrm{ext}}\)
			& $-$0.77 ($-$1.49, 1.98) & \(1.45\; (0,\;5.67)\) \\
			\addlinespace[2pt]
			Marginal separable effect \(\Gamma(1,1)-\Gamma(1,0)\)
			& \(1.60\; (0.78,\;2.58)\) & \(3.82\; (0.68,\; 6.72)\) \\
		\bottomrule
	\end{tabular}
	\label{tab:ext_qol_survival}
\end{table}

\section{Discussion}

When death truncates a longitudinal outcome, a causal estimand must specify not only the target population and treatment contrast, but also which part of each potential outcome trajectory is being summarized. This paper gives a unified theory and methodology for such estimands. The results can be used to formulate full-population summaries and to extend two widely used classes of causal estimands: principal stratification, represented by the SACE, and separable effects. The prostate cancer application demonstrates how these choices lead to estimands that answer distinct causal questions.

Several limitations remain. The while guaranteed-survival and while extended-survival estimands are full-population estimands but remain cross-world, and their identification relies on assumptions such as monotonicity, principal ignorability, and substitution-variable conditions. In addition, our identification strategy assumes that the outcome process does not directly affect subsequent survival after conditioning on the measured history. Some of these restrictions may be relaxable with additional auxiliary information. Proxy-variable approaches \citep{park2024proximal} may help extend the separable-effect formulation to settings with unmeasured survival-outcome confounding. Separately, alternative substitution-variable strategies could exploit baseline or early post-baseline variables that help distinguish latent survival histories, thereby targeting components such as \(\mathbb{E}\{Y^r(z)\mid T(0)\land T(1)=t\}\) under different identifying restrictions. Future work may also develop sensitivity analyses for these assumptions and extend the framework to continuous-time measurements, competing intercurrent events, and dynamic treatment regimes.

\section*{Acknowledgements}

We thank Stijn Vansteelandt for helpful discussions that motivated this work, and Zhen Luo for insightful comments that led to the development of the while extended-survival estimands. We are also grateful to Gregory Chen, Vanessa Didelez, Guowen Huang, Fabrizia Mealli, Thomas Richardson, James Robins, Jessica Young, Xin Zhang, and Xiao-Hua Zhou for helpful comments. Zhao was partially supported by a CANSSI Ontario Postdoctoral Fellowship in Statistical Sciences. Stensrud and Wang would like to thank the Isaac Newton Institute for Mathematical Sciences, Cambridge, for support and hospitality during the programme Foundations of causal inference, where work on this paper was completed. 
{We used OpenAI ChatGPT to assist with grammar checks and code organization. All substantive content, analyses, code, and conclusions were developed, reviewed, and verified by the authors.}

\section*{Data availability}
Code for the simulation studies and real-data analysis is available at \url{https://github.com/Ruixuan-Zhao/marginal-estimands-truncated-by-death}. The real data are not publicly available, so a simulated dataset is provided for running the real-data analysis code.

\begingroup
\setlength{\bibsep}{0pt}
\putbib
\endgroup
\end{bibunit}

\clearpage
\begin{bibunit}

\begin{center}
{\Large\bfseries Supplementary Material for\\[0.35em]
``Causal Inference for All: Marginal Estimands for Outcomes Truncated by Death''\par}
\vspace{1em}
Ruixuan Zhao\textsuperscript{1}, Mats Stensrud\textsuperscript{2}, and Linbo Wang\textsuperscript{1,3}\par
\vspace{0.6em}

\textsuperscript{1}Department of Computer and Mathematical Sciences, University of Toronto Scarborough\\
\textsuperscript{2}Institute of Mathematics, \'Ecole Polytechnique F\'ed\'erale de Lausanne\\
\textsuperscript{3}Department of Statistical Sciences, University of Toronto
\end{center}

\vspace{1em}
\noindent\textbf{Abstract.}
This supplement provides additional material for the manuscript
``Causal Inference for All: Marginal Estimands for Outcomes Truncated by Death.''
Section~\ref{supple:existing-estimands} reviews existing estimand strategies for longitudinal outcomes truncated by death.
Section~\ref{supple:technical-details} gives additional technical details for the identification and estimation results in the main text.
Section~\ref{supple:while-alive} describes identification and estimation of while-alive estimands, and Section~\ref{supple:simulation} reports additional simulation details.

\bigskip

\suppsection{Existing estimands for longitudinal outcomes truncated by death}{supple:existing-estimands}

This section provides additional discussion on existing estimands for longitudinal outcomes truncated by death. Section \ref{sec:related} focuses on the while-alive strategy because it is most closely connected to the proposed estimands. Here, we briefly review other commonly used strategies.

Early approaches focused on naive comparisons at a given time point, such as $\mathbb{E}[Y^t(z)\mid S^t(z)=1]$, or equivalently $\mathbb{E}[Y^t(z)\mid T(z)\ge t]$ \citep{kurland2009longitudinal}. These estimands compare outcomes among individuals who would be alive under treatment $z$ at time $t$, but contrasts across treatment arms generally involve different survivor subpopulations and are therefore subject to selection bias.

A related line of work defines estimands under hypothetical interventions that eliminate the truncating event, sometimes referred to as hypothetical estimands \citep{ich2019addendum}. The hypothetical estimand can be written as $\mathbb{E}(Y^t(z,s^t=1))$, which quantifies the effect of treatment on the outcome at time $t$ if the individual had survived. However, this intervention is usually infeasible and therefore has limited practical relevance \citep{young2020causal}.

Composite estimands have also been used to summarize joint survival and outcome behavior at the population level. For example, \citet{diehr1995including} considered estimands, $\mathbb{E}[S^t(z)\mathbb{I}(Y^t(z)\geq y)]$, defined as the probability of being alive at a given time with QoL exceeding a prespecified threshold, directly addressing questions such as the chance of being alive at three years with acceptable QoL. Related formulations assign a prespecified value to the outcome following truncation \citep{lu2025estimating} or rank death relative to observed outcomes \citep{lok2010long, xiang2023survival}. However, the selection of a prespecified value is not always straightforward. For example, under a composite estimand such as $\mathbb{E}[S^t(z)Y^t(z)]$ \citep{lu2025estimating}, QoL after death is often set to zero to represent the worst state, yet it is not always clear whether death itself or survival with severely compromised QoL should be regarded as worse \citep{rosenbaum2006comment}. Moreover, the composite outcome arguably does not shed light on the mechanistic effect of treatment on the outcome of interest that is not mediated by the intercurrent event.

More recently, to test the null hypothesis of no treatment effect on longitudinal outcomes truncated by death, \citet{baklicharov2025weakening} proposed a time-dependent Pairwise Last Observation
Time (PLOT) estimand, which contrasts outcomes between pairs of individuals receiving different
treatments at a common time point where both individuals would survive.
In particular, they considered two independent random individuals, where one is assigned to treatment $z=1$ with observed $(\bar{Y}^{t_{\max}}(1),T(1))$ and the other (indexed with $*$) is assigned to $z=0$ with observed $(\bar{Y}^{t_{\max}}_*(0), T_*(0))$. Then, for a fixed time point $t$, the PLOT estimand is defined as $\mathbb{E}[Y^{T(1)\land T_*(0)\land t}(1) - Y_*^{T(1)\land T_*(0)\land t}(0)]$, where the expectation is taken over all independently drawn pairs of individuals assigned to different treatments. As discussed further in Remark~\ref{remark:PLOT}, this estimand may still be subject to residual selection bias. \citet{baklicharov2025weakening} further showed that this bias can be eliminated under additional conditions at the cost of stronger assumptions.

\begin{sidewaystable}[!htbp]
	\centering
	\caption{Summary of estimands for longitudinal data with outcomes truncated by death}
	\label{tab:sum-est}
	\small
	\setlength{\tabcolsep}{4pt}
	\begingroup
	\renewcommand{\arraystretch}{1.2}
	\begin{tabular}{M{0.10\textheight} M{0.25\textheight} M{0.35\textheight} M{0.23\textheight}}
		\toprule
		\textbf{Strategy} & \textbf{Description} & \textbf{Estimand} & \textbf{Caveats}  \\
		\midrule
		Naive & (Dynamic) stratification by survival status or follow-up time \citep{kurland2009longitudinal} & \estcell{\(\mathbb{E}[Y^t(z)\mid S^t(z)=1]\ \text{or}\ \mathbb{E}[Y^t(z)\mid T(z)\geq t]\)}   & Subject to selection bias \\
		\midrule
		While-alive
		& While-alive estimand \citep{wei2023properties}  & \estcell{\(\lambda(z)\) in \eqref{esimand-while-alive}} & Subject to selection bias \\
		\midrule
		Hypothetical & Hypothetical estimand under the elimination of death \citep{young2020causal} & \estcell{\(\mathbb{E}[Y^t(z,{s}^t=1)]\)}   & Assume values for undefined outcomes \\
		\midrule
		\multirow[c]{4}{=}[0.6\baselineskip]{Composite} & Joint estimand of survival and outcome \citep{diehr1995including}  &  \estcell{\(\mathbb{E}[S^t(z)\mathbb{I}(Y^t(z)\geq y)]\)} & Threshold-based dichotomization loses information  \\
		& Composite estimand \citep{lu2025estimating} & \estcell{\(\mathbb{E}[S^t(z)Y^t(z)]\)} & Assigning a prespecified value can be controversial  \\
		\midrule
		\multirow[c]{4}{=}[0.6\baselineskip]{Principal stratum} & Principal stratum \citep{robins1986new,rubin2006causal, frangakis2002principal} & \estcell{\(\mathbb{E}[Y^t(z)\mid G^t=LL]\ \text{or}\ \mathbb{E}[Y^t(z)\mid T(1)\geq t,\ T(0)\geq t]\)}  &  Defined on an unknown subpopulation and inapplicable to individuals outside \\[10pt]
		& Principal stratum in longitudinal setting \citep{grossi2023bayesian} & \estcell{\(\mathbb{E}[Y^r(z)\mid T(1)\geq t,\ T(0)\geq t]\ \text{for } r\leq t\)}  & Subgroup compositions change over time \\
		\midrule
		\multirow[c]{4}{=}[0.6\baselineskip]{Others} & PLOT estimand \citep{baklicharov2025weakening}  & \estcell{\(\mathbb{E}\big[Y^{T(1)\land {T}_*(0)\land t}(1)\big]\ \text{vs.}\ \mathbb{E}\big[{Y}_*^{T(1)\land {T}_*(0)\land t}(0)\big]\)} &  Subject to residual selection bias unless further particular assumptions hold \\
		& Conditional separable effect \citep{stensrud2023conditional}  & \estcell{\(\mathbb{E}[Y^t(z_Y,z_S)\mid S^t(z_S)=1]\)}  & Defined on a subpopulation  \\
		\bottomrule
	\end{tabular}
	\endgroup
\end{sidewaystable}

For ease of reference, Table \ref{tab:sum-est} summarizes the strategies reviewed above and in Sections 2.2--2.4. For each estimand, the table provides the mathematical formulation, a brief description, and a caveat highlighting key limitations, with representative references.

\begin{remark}[Comparison to the PLOT Estimand] \label{remark:PLOT}
	Our estimand in Example \ref{exa:exit-w} shares a feature with the PLOT estimand in \cite{baklicharov2025weakening}: both ensure that comparisons across treatment arms are made at a common time point. However, the PLOT estimand $$\mathbb{E}\left [Y^{T(1)\land T_*(0)\land t}(1)-Y_*^{T(1)\land T_*(0)\land t}(0)\right ]$$ may exhibit residual selection bias relative to the treatment effect at the same time point (i.e.,\\ \(\mathbb{E}\Big [Y^{T(1)\land{T}(0)\land t}(1) -\allowbreak {Y}^{T(1)\land {T}(0)\land t}(0)\Big ]\)). The effects we consider, like $\mathbb{E}\left [Y^{T(1)\land{T}(0)}(1) - {Y}^{T(1)\land {T}(0)}(0)\right ]$, avoid such bias.
	
	We use the toy examples in Figure \ref{fig:toyEx} to illustrate that the PLOT estimand does not always preserve the null of no treatment effect. Suppose that the potential outcomes take the values shown in the left panel of Figure \ref{fig:PLOT_est}, then the null that $Y_i^t(1)=Y_i^t(0)$ for any $t\leq T_i(1)\land T_i(0)$ and $i=1,2$ holds. The contrast of our estimand in Example \ref{exa:exit-w} is equal to zero, aligning with the null. However, the PLOT estimand with $t=3$ is equal to $\frac{1}{2}\left \{Y_2^2(1)-Y_1^2(0)\right \}+\frac{1}{2}\left\{Y_1^3(1)-Y_2^3(0)\right\}=-20$, which implies that this test is not valid in this scenario. Conversely, a zero PLOT estimand does not necessarily imply the absence of a treatment effect. For example, as shown in the right panel of Figure \ref{fig:PLOT_est}, the null fails to hold, but the PLOT estimand with $t=3$ is equal to zero. 
\end{remark}

\begin{figure}[!htbp]
	\centering
	\includegraphics[width=0.65\linewidth]{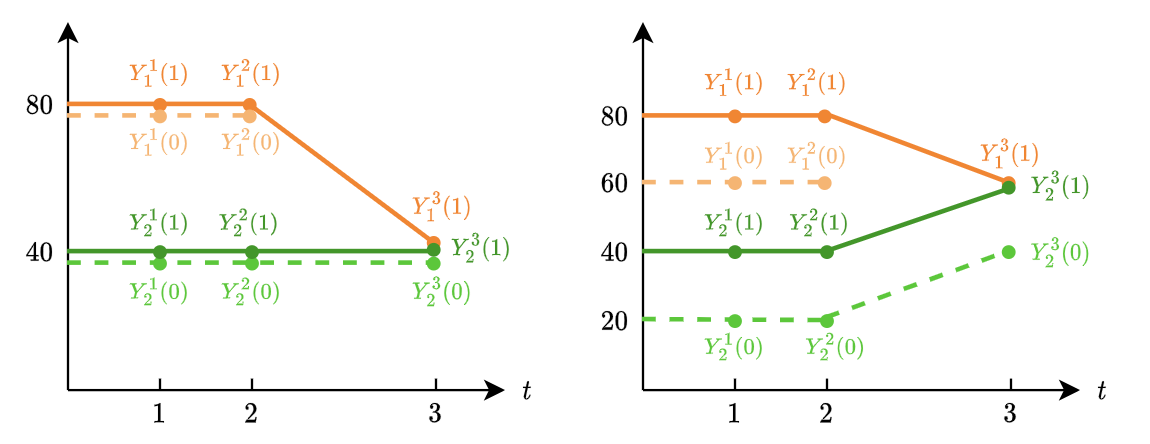}
	\caption{Two examples with the values of the potential outcomes to illustrate the potential pitfalls of the PLOT estimand. For individual $1$, the orange solid line denotes the longitudinal potential outcomes under treatment $z=1$, and the orange dashed line denotes the potential outcomes under $z=0$. For individual $2$, the green solid and dashed lines denote the potential outcomes under $z=1$ and $z=0$, respectively.}
	\label{fig:PLOT_est}
\end{figure}

\suppsection{Additional Technical Details for Identification and Estimation in Sections \ref{sec:subA} and \ref{sec:se}}{supple:technical-details}

\begin{table}[!htbp]
  \centering
  \caption{Choices of $\phi_{T(z)}^r$ in \eqref{estimand:delta1-2} corresponding to the cumulative and AUC-based weighting schemes listed in Table \ref{tab:example-weight-phi}}
  \label{tab:example-weight-phi-re} 
\renewcommand{\arraystretch}{1.1}
\small
  \begin{tabular}{ll}
    \toprule
    Type & Weight \\
    \midrule
    Cumulative & $\phi_{T(z)}^{\text{cum},r}=1$ \\
    AUC-based & $\phi_{T(z)}^{\text{AUC},r}=\mathbb{I}(T(z)\geq 1)\times
\begin{cases}
\frac{1}{2}(\tau_{T(z)} - \tau_{{T(z)}-1}), & r=T(z),\\
\frac{1}{2}(\tau_{r+1}-\tau_{r-1}),  & 0<r< T(z),\\
\frac{1}{2}\tau_1, & r=0
\end{cases}$ \\
    \bottomrule
  \end{tabular}
\renewcommand{\arraystretch}{1}
\end{table}

\begin{remark}[A latent-process rationale for survival principal ignorability]\label{remark:latent-surprin}
	In the cancer study, $L^r(z)$ can be an indicator of potential disease progression by time $r$ under treatment $z$. Suppose that disease progression and death may arise through three underlying cancer-related processes. Let $T_{\mathrm{H}}$ denote the failure time associated with a treatment-sensitive harmful process, which may lead to disease progression or death under control but can be eliminated by active treatment. Let $T_{\mathrm{P}}$ denote the failure time associated with a treatment-resistant disease progression process, which may lead to disease progression under either control or active treatment. Let $T_{\mathrm{D}}$ denote the failure time associated with a treatment-resistant death process, which may lead to death under either control or active treatment. Then, we have the following latent process formulation, for $r=1,\ldots,t_{\max}$,
	\begin{align*}
		L^r(0) 
		&= \mathbb{I}\left\{\min(T_{\mathrm{H}},T_{\mathrm{P}},T_{\mathrm{D}})\le r\right\},
		&
		L^r(1) 
		&= \mathbb{I}\left\{\min(T_{\mathrm{P}},T_{\mathrm{D}})\le r\right\},\\
		S^r(0) 
		&= \mathbb{I}\left\{\min(T_{\mathrm{H}},T_{\mathrm{D}})> r\right\},
		&
		S^r(1) 
		&= \mathbb{I}\left\{T_{\mathrm{D}}> r\right\}.
	\end{align*}
	If $T_{\mathrm H}$, $T_{\mathrm P}$, and $T_{\mathrm D}$ are mutually independent conditional on baseline covariates $L^0$, then the above latent process formulation implies Assumption~\ref{ass:surprin-ig}. 
\end{remark}

\begin{remark}[Discussion on G-Markov sufficiency]\label{remark:gmarkov}
	As shown in Figure \ref{fig:Longi-SE-2}(a), Assumption \ref{ass:noYtoS} precludes directed paths from $Y^r$ to $S^t$, and guarantees that all back-door paths between $Y^r$ and $S^t$ are blocked by $L^0$, $\bar{L}^r$ and $\bar{S}^r$ for any $r<t$. It may hold for QoL outcomes if changes in QoL primarily reflect current disease progression, and subsequent survival is driven by disease progression rather than QoL itself. The assumption may also be plausible for medical cost outcomes, which are broadly better interpreted as a marker of disease severity and healthcare utilization than as a direct cause of subsequent survival.
\end{remark}

\begin{proof}[Proof of Lemma \ref{lem::CPS}]
	For $z=1$ and $t<t_{\max}$, under Assumptions \ref{ass:Monoto-2} and \ref{ass:surprin-ig}, 
	\begin{align*}
		&\Pr\left(T(0)\land T(1)=t, \bar{L}^r(1)=\bar{l}^r \mid L^0=l^0\right) = \sum_{\bar{l}^t \backslash \bar{l}^r} \Pr\left(T(0)\land T(1)=t, \bar{L}^t(1)=\bar{l}^t \mid L^0=l^0\right)\\
		=& \sum_{\bar{l}^t \backslash \bar{l}^r} \Pr\left(S^{t+1}(0)=0 \mid G^t=LL, \bar{L}^t(1)=\bar{l}^t, L^0=l^0\right) \Pr\left(S^{t}(0)=1 \mid S^{t}(1)=1, \bar{L}^t(1)=\bar{l}^t, L^0=l^0\right) \\
		& \times \Pr\left( S^{t}(1)=1, \bar{L}^t(1)=\bar{l}^t \mid L^0=l^0\right) \\
		=&  \Pr\left(S^{t+1}(0)=0 \mid G^t=LL,  L^0=l^0\right) \Pr\left(S^{t}(0)=1 \mid S^{t}(1)=1, L^0=l^0\right) \\
		& \times \Pr\left( S^{t}(1)=1, \bar{L}^r(1)=\bar{l}^r \mid L^0=l^0\right) \\
		=& \frac{\Pr\left(T(0)=t\mid L^0\right)}{\Pr\left(S^t(1)=1\mid L^0\right)}\Pr\left( S^{t}(1)=1, \bar{L}^r(1)=\bar{l}^r \mid L^0=l^0\right).
	\end{align*}
	When $t=t_{\max}$, we have
	\begin{align*}
		&\Pr\left(T(0)\land T(1)=t_{\max}, \bar{L}^r(1)=\bar{l}^r \mid L^0=l^0\right) = \sum_{\bar{l}^{t_{\max}} \backslash \bar{l}^r} \Pr\left(S^{t_{\max}}(0)=1, \bar{L}^{t_{\max}}(1)=\bar{l}^{t_{\max}} \mid L^0=l^0\right)\\
		=& \sum_{\bar{l}^{t_{\max}} \backslash \bar{l}^r}  \Pr\left(S^{t_{\max}}(0)=1 \mid S^{t_{\max}}(1)=1, \bar{L}^{t_{\max}}(1)=\bar{l}^{t_{\max}}, L^0=l^0\right) \\
		& \times \Pr\left( S^{t_{\max}}(1)=1, \bar{L}^{t_{\max}}(1)=\bar{l}^{t_{\max}} \mid L^0=l^0\right) \\
		=& \Pr\left(S^{t_{\max}}(0)=1 \mid S^{t_{\max}}(1)=1, L^0=l^0\right) \times \Pr\left( S^{t_{\max}}(1)=1, \bar{L}^r(1)=\bar{l}^r \mid L^0=l^0\right) \\
		=& \frac{\Pr\left(T(0)=t_{\max}\mid L^0\right)}{\Pr\left(S^{t_{\max}}(1)=1\mid L^0\right)}\Pr\left( S^{t_{\max}}(1)=1, \bar{L}^r(1)=\bar{l}^r \mid L^0=l^0\right),
	\end{align*}
    where the third equation follows from Assumption \ref{ass:surprin-ig}.
	
	Then, since 
	\begin{align*}
	&\Pr\left(T(0)\land T(1)=t, \bar{L}^r(1)=\bar{l}^r \mid L^0=l^0\right) \\
	=& \Pr\left(T(0)\land T(1)=t \mid \bar{L}^r(1)=\bar{l}^r, L^0=l^0\right)\Pr\left( \bar{L}^r(1)=\bar{l}^r \mid L^0=l^0\right),
	\end{align*}
	we have 
	\begin{align*}
		&\Pr\left(T(0)\land T(1)=t \mid \bar{L}^r(1)=\bar{l}^r, L^0=l^0\right)\\
		=& \frac{\Pr\left(T(0)=t\mid L^0\right)}{\Pr\left(S^t(1)=1\mid L^0\right)}\Pr\left( S^{t}(1)=1 \mid \bar{L}^r(1)=\bar{l}^r, L^0=l^0\right).
	\end{align*}

	Under Assumptions \ref{ass:S-ig-2} and \ref{ass:postivity}, 
	\begin{align*}
		\Pr\left(T(0)\land T(1)=t \mid \bar{L}^r(1)=\bar{l}^r, L^0=l^0\right) = \frac{\Pr\left(T=t\mid Z=0, L^0\right)}{\Pr\left(S^t=1\mid Z=1,L^0\right)}\Pr\left( S^{t}=1 \mid \bar{L}^r=\bar{l}^r, Z=1, L^0=l^0\right).
	\end{align*}
\end{proof}

\begin{proof}[Proof of Theorem \ref{thm:identi:subA}]
	Under Assumption \ref{ass:Monoto-2}, we decompose $\mathbb{E}(Y^r| Z=1, S^r=1,\bar{L}^r,L^0)$ as follows:
	\begin{align}\label{eq:mixeddis}
		&\mathbb{E}(Y^r \mid Z=1, S^r=1, \bar{L}^r, X, A=a)\nonumber\\
		=&\mathbb{E}(Y^r \mid Z=1,G^r=LL,\bar{L}^r, X, A=a) \Pr(G^r=LL \mid S^r=1, Z=1, \bar{L}^r, X, A=a)\nonumber\\
		&+ \mathbb{E}(Y^r \mid Z=1, G^r=LD,\bar{L}^r, X, A=a) \Big( 1-\Pr(G^r=LL \mid S^r=1, Z=1, \bar{L}^r, X, A=a) \Big).
	\end{align}
	
	We have 
	\begin{align*}
		&\Pr(G^r=LL \mid S^r=1, Z=1, \bar{L}^r, X, A=a) \\
		=&   \Pr(G^r=LL \mid S^r(1)=1, \bar{L}^r(1), X, A=a)\\
		=&\Pr(S^r(0)=1 \mid S^r(1)=1, X, A=a)\\
		=&\frac{\Pr(S^r(0)=1 \mid X, A=a)}{\Pr(S^r(1)=1 \mid X, A=a)}\\=&\frac{\Pr(S^r=1 \mid Z=0,X, A=a)}{\Pr(S^r=1 \mid Z=1, X, A=a)},
	\end{align*}
	where the first equation follows Assumption \ref{ass:S-ig-2}, the second equation follows Assumptions \ref{ass:Monoto-2} and \ref{ass:surprin-ig}, and the last equation follows Assumptions \ref{ass:S-ig-2} and \ref{ass:postivity}.
	Furthermore, it follows from Assumptions \ref{ass:Monoto-2} and \ref{ass:no-inter} that we have the following two equations for $a_0, a_1\in \mathcal{A}$ such that $a_0 \neq a_1$,
	\begin{align}
		&\mathbb{E}(Y^r \mid Z=1,G^r=LL,\bar{L}^r, X, A=a_1) - \mathbb{E}(Y^r \mid Z=1,G^r=LL,\bar{L}^r, X, A=a_0)\nonumber\\
		=&\mathbb{E}(Y^r \mid Z=0,S^r=1,\bar{L}^r, X, A=a_1) - \mathbb{E}(Y^r \mid Z=0,S^r=1,\bar{L}^r, X, A=a_0),\label{eq:non-ineq1}\\
		&\mathbb{E}(Y^r \mid Z=1,G^r=LD,\bar{L}^r, X, A=a_1) - \mathbb{E}(Y^r \mid Z=1, G^r=LD, \bar{L}^r,X, A=a_0)\nonumber\\
		=&\mathbb{E}(Y^r \mid Z=0,S^r=1,\bar{L}^r, X, A=a_1) - \mathbb{E}(Y^r \mid Z=0,S^r=1,\bar{L}^r, X, A=a_0).\label{eq:non-ineq2}
	\end{align}
	Therefore, when $A$ has at least two levels and Assumption \ref{ass:SubRel-2} holds, \eqref{eq:mixeddis}, \eqref{eq:non-ineq1} and \eqref{eq:non-ineq2} form a system of four equations with four unknown expectations $\mathbb{E}(Y^r | Z=1, G^r=LL,\bar{L}^r, X, A=a_1)$, $\mathbb{E}(Y^r | Z=1, G^r=LL, \bar{L}^r,X, A=a_0)$, $\mathbb{E}(Y^r | Z=1,G^r=LD,\bar{L}^r, X, A=a_1)$ and $\mathbb{E}(Y^r | Z=1,G^r=LD, \bar{L}^r,X, A=a_0)$. Consequently, $\mathbb{E}(Y^r | Z=1, G^r=LL,\bar{L}^r, X, A)$ is identifiable from observed distribution.

	We next derive the identification formulas in \eqref{eq:identi-0} and \eqref{eq:identi-1}. For $z=0$,
	\begin{align*}
	&\mathbb{E}\!\left[
	\mathbb{I}\{T(0)\wedge T(1)=t\}Y^r(0)
	\right] =
	\mathbb{E}\!\left[
	\mathbb{I}\{T(0)=t\}Y^r(0)
	\right] \\
	=&
	\sum_{\bar{l}^r,l^0}
	\mathbb{E}\!\left[
	Y^r(0)
	\mid
	T(0)=t,\bar{L}^r(0)=\bar{l}^r,L^0=l^0
	\right] \times
	\Pr\!\left(
	T(0)=t,\bar{L}^r(0)=\bar{l}^r
	\mid
	L^0=l^0
	\right)
	p_{L^0}(l^0) \\
	=&
	\sum_{\bar{l}^r,l^0}
	\mathbb{E}\!\left[
	Y^r
	\mid
	T=t,\bar{L}^r=\bar{l}^r,Z=0,L^0=l^0
	\right] \times
	\Pr\!\left(
	T=t,\bar{L}^r=\bar{l}^r
	\mid
	Z=0,L^0=l^0
	\right)
	p_{L^0}(l^0) \\
	=&
	\sum_{l^0}
	\mathbb{E}\!\left[
	\mathbb{I}\{T=t\}Y^r
	\mid
	Z=0,L^0=l^0
	\right]
	p_{L^0}(l^0) \\
	=&
	\mathbb{E}_{L^0}\!\left[
	\mathbb{E}\!\left\{
	\mathbb{I}\{T=t\}Y^r
	\mid
	Z=0,L^0
	\right\}
	\right] \\
	=&
	\mathbb{E}\!\left[
	\frac{\mathbb{I}\{T=t,Z=0\}}{\Pr(Z=0\mid L^0)}
	Y^r
	\right],
	\end{align*}
	where the first equation follows Assumption \ref{ass:Monoto-2}, the third equation follows Assumptions \ref{ass:S-ig-2} and \ref{ass:Y-ig-2}, and the last equation follows from Assumption \ref{ass:postivity}.

	For $z=1$, it follows from Assumptions \ref{ass:Y-ig-2} and \ref{ass:noYtoS} that
    \begin{align*}
    &\mathbb{E}\!\left[
	Y^r(1)
	\mid T(0)=t,\bar{L}^r(1)=\bar{l}^r,L^0=l^0
	\right]\\
    =& \mathbb{E}\!\left[
	Y^r
	\mid T(0)=t,\bar{L}^r=\bar{l}^r, Z=1, L^0=l^0
	\right]\\
    =& \mathbb{E}\!\left[
	Y^r
	\mid G^r=LL,\bar{L}^r=\bar{l}^r, Z=1, L^0=l^0
	\right]=\mathcal{M}^r(\bar{l}^r,LL,a,x),
    \end{align*}
    where $l^=(a,x^\top)^\top$. Thus, the preceding identification of $\mathcal{M}^r(\bar{L}^r,LL,1,A,X)$, together with Lemma \ref{lem::CPS}, gives
	\begin{align*}
	&\mathbb{E}\!\left[
	\mathbb{I}\{T(0)\wedge T(1)=t\}Y^r(1)
	\right]\\
	=& \mathbb{E}\!\left[
	\mathbb{E}\!\left\{Y^r(1)\mid T(0)\land T(1)=t,\bar{L}^r(1),L^0\right\}
	\Pr\left\{T(0)\land T(1)=t\mid \bar{L}^r(1),L^0\right\}
	\right]\\
	=&
	\sum_{\bar{l}^r,l^0}
	\mathbb{E}\!\left[
	Y^r(1)
	\mid
	T(0)=t,\bar{L}^r(1)=\bar{l}^r,L^0=l^0
	\right] \times
	\Pr\!\left(
	T(0)=t,\bar{L}^r(1)=\bar{l}^r
	\mid
	L^0=l^0
	\right)
	p_{L^0}(l^0) \\
	=& \sum_{\bar{l}^r,l^0} \mathcal{M}^r(\bar{l}^r,LL,1,a,x)\frac{\Pr\left(T=t\mid Z=0, L^0=l^0\right)}{\Pr\left(S^t=1\mid Z=1,L^0=l^0\right)}\Pr\left( S^{t}=1, \bar{L}^r=\bar{l}^r \mid Z=1, L^0=l^0\right)p_{L^0}(l^0).
	\end{align*}
	We also have 
	\begin{align*}
	&\mathbb{E}\Bigg[
	\frac{\mathbb{I}(Z=1)}{\Pr(Z=1\mid L^0)}
	\mathbb{I}(S^t=1)
	\frac{
		\Pr(T=t\mid Z=0,L^0)
	}{
		\Pr(S^t=1\mid Z=1,L^0)
	}
	\mathcal{M}^r(\bar{L}^r,LL,1,A,X)
	\Bigg]\\
	=& \mathbb{E}\Bigg[
	\frac{1}{\Pr(Z=1\mid L^0)}
	\frac{
		\Pr(T=t\mid Z=0,L^0)
	}{
		\Pr(S^t=1\mid Z=1,L^0)
	}
	\mathbb{E}\left\{\mathbb{I}(Z=1,S^t=1)\mathcal{M}^r(\bar{L}^r,LL,1,A,X)\mid L^0\right\}
	\Bigg]\\
	=& \mathbb{E}\Bigg[
	\frac{
		\Pr(T=t\mid Z=0,L^0)
	}{
		\Pr(S^t=1\mid Z=1,L^0)
	}
	\mathbb{E}\left\{\mathbb{I}(S^t=1)\mathcal{M}^r(\bar{L}^r,LL,1,A,X)\mid Z=1, L^0\right\}
	\Bigg]\\
	=& \mathbb{E}\Bigg[
	\frac{
		\Pr(T=t\mid Z=0,L^0)
	}{
		\Pr(S^t=1\mid Z=1,L^0)
	}
	\left\{\sum_{\bar{l}^r}\mathcal{M}^r(\bar{l}^r,LL,1,A,X) \Pr(S^t=1,\bar{L}^r=\bar{l}^r\mid Z=1,L^0)\right\}
	\Bigg]\\
	=&\sum_{\bar{l}^r,l^0} \mathcal{M}^r(\bar{l}^r,LL,1,a,x)\frac{\Pr\left(T=t\mid Z=0, L^0=l^0\right)}{\Pr\left(S^t=1\mid Z=1,L^0=l^0\right)}\Pr\left( S^{t}=1, \bar{L}^r=\bar{l}^r \mid Z=1, L^0=l^0\right)p_{L^0}(l^0).
	\end{align*}
	Thus, \eqref{eq:identi-0} and \eqref{eq:identi-1} hold. This completes the proof of Theorem \ref{thm:identi:subA}.
\end{proof}

\begin{remark}[Observed-data modeling constraints]\label{remark:obs-model-constraints}
The parameterizations in Section \ref{sec:subA-est} imply the following modeling constraints on the observed data distribution, which are used to derive the maximum likelihood estimators:
\begin{align*}
&\Pr(S^r=1 \mid S^{r-1}=1, Z=1, L^0)=l_1(L^0;\beta^r);\\
&\Pr(S^r=1 \mid S^{r-1}=1, Z=0, L^0)=l_1(L^0;\beta^r)l_{0/1}(L^0;\gamma^r);\\
&\mathcal{E}^r(\bar{L}^r,0,A,X)=m(\bar{L}^r,LL,0,A,X;\alpha^r);\\
&\mathcal{E}^r(\bar{L}^r,1,A,X)= \prod_{k=1}^r l_{0/1}(L^0;\gamma^k)\, m(\bar{L}^r,LL,1,A,X;\alpha^r)
+\Big(1-\prod_{k=1}^r l_{0/1}(L^0;\gamma^k)\Big)m(\bar{L}^r,LD,1,A,X;\alpha^r).
\end{align*}
\end{remark}

\begin{proof}[Proof of Theorem \ref{thm:identi-SE}]
When \(r=0\), \(Y^0\in L^0\) is measured before treatment assignment. By Assumptions \ref{ass:Z_yPI}, \ref{ass:MTA}, \ref{ass:postivity} and \ref{ass:ranIgn},
\begin{align*}
\mathbb{E}\!\left[\mathbb{I}\{T(z_S)=t\}Y^0\right]
&= \mathbb{E}\!\left[Y^0\Pr\{T(z_S)=t\mid L^0\}\right]\\
&= \mathbb{E}\!\left[Y^0\Pr(T=t\mid Z=z_S,L^0)\right]\\
&= \mathbb{E}\left[\frac{\mathbb{I}(T=t,Z=z_S)}{\Pr(Z=z_S\mid L^0)}Y^0\right].
\end{align*}
For \(1\leq r\leq t\leq t_{\max}\), we first consider
\begin{align*}
&\mathbb{E}[\mathbb{I}(T(z_Y,z_S)=t)Y^r(z_Y, z_S)]=\mathbb{E}[Y^r(z_Y, z_S)| T(z_Y,z_S)=t]\Pr(T(z_Y,z_S)=t)\\
=& \sum_{\bar{l}^r,l^0} \mathbb{E}[Y^r(z_Y,z_S)|T(z_Y,z_S)=t,\bar{L}^r(z_Y,z_S)=\bar{l}^r, L^0=l^0]\\
& \times \Pr(T(z_Y,z_S)=t,\bar{L}^r(z_Y,z_S)=\bar{l}^r, L^0=l^0).
\end{align*}
For the term $\mathbb{E}[Y^r(z_Y,z_S)|T(z_Y,z_S)=t,\bar{L}^r(z_Y,z_S)=\bar{l}^r, L^0=l^0]$, we have
\begin{align*}
&\mathbb{E}[Y^r(z_Y,z_S)|T(z_Y,z_S)=t,\bar{L}^r(z_Y,z_S)=\bar{l}^r, L^0=l^0]\\
=&\mathbb{E}[Y^r(z_Y,z_S)|T(z_Y,z_S)=t,\bar{L}^r(z_Y,z_S)=\bar{l}^r, Z_Y(\mathcal{C})=z_Y,  Z_S(\mathcal{C})=z_S, L^0=l^0]\\
=&\mathbb{E}[Y^r(\mathcal{C})|T(\mathcal{C})=t,\bar{L}^r(\mathcal{C})=\bar{l}^r, Z_Y(\mathcal{C})=z_Y,  Z_S(\mathcal{C})=z_S, L^0=l^0]\\
=&\mathbb{E}[Y^r(\mathcal{C})|S^r(\mathcal{C})=1,\bar{L}^r(\mathcal{C})=\bar{l}^r, Z_Y(\mathcal{C})=z_Y,  Z_S(\mathcal{C})=z_Y, L^0=l^0]\\
=&\mathbb{E}[Y^r(z_Y,z_Y)|S^r(z_Y,z_Y)=1,\bar{L}^r(z_Y,z_Y)=\bar{l}^r, L^0=l^0]\\
=&\mathbb{E}[Y^r(z=z_Y)|S^r(z=z_Y)=1,\bar{L}^r(z=z_Y)=\bar{l}^r, L^0=l^0]\\
=&\mathbb{E}[Y^r|S^r=1,\bar{L}^r=\bar{l}^r, Z=z_Y , L^0=l^0],
\end{align*}
where the first equation follows from the fact that the treatment components $(Z_Y(\mathcal{C}), Z_S(\mathcal{C}))$ are randomly assigned, the third equation follows from Assumptions \ref{ass:Z_S-DCC} and \ref{ass:S-markov}, the fifth equation follows from Assumption \ref{ass:MTA}, and the last equation follows from Assumptions \ref{ass:postivity} and \ref{ass:ranIgn}.

It follows from the law of total probability that
\begin{align*}
&\Pr(T(z_Y,z_S)=t,\bar{L}^r(z_Y,z_S)=\bar{l}^r, L^0=l^0)\\
=& \sum_{\bar{l}^{t+1}\backslash \bar{l}^{r+1}} \Pr(S^{t+1}(z_Y,z_S)=0|S^{t}(z_Y,z_S)=1, \bar{L}^{t+1}(z_Y,z_S)=\bar{l}^{t+1}, L^0=l^0)\\
&\times\Pr({L}^{t+1}(z_Y,z_S)={l}^{t+1}|S^{t}(z_Y,z_S)=1, \bar{L}^{t}(z_Y,z_S)=\bar{l}^{t}, L^0=l^0)\\
&\times \Pi_{i=1}^t \Big\{  \Pr(S^{i}(z_Y,z_S)=1|S^{i-1}(z_Y,z_S)=1, \bar{L}^{i}(z_Y,z_S)=\bar{l}^{i}, L^0=l^0)\\
&\times\Pr({L}^{i}(z_Y,z_S)={l}^{i}|S^{i-1}(z_Y,z_S)=1, \bar{L}^{i-1}(z_Y,z_S)=\bar{l}^{i-1}, L^0=l^0)\Big\}\times p_{L^0}(l^0).
\end{align*}
Then, we have 
\begin{align*}
&\Pr(S^{t+1}(z_Y,z_S)=0|S^{t}(z_Y,z_S)=1, \bar{L}^{t+1}(z_Y,z_S)=\bar{l}^{t+1}, L^0=l^0)\\
=&\Pr(S^{t+1}(z_Y,z_S)=0|S^{t}(z_Y,z_S)=1, \bar{L}^{t+1}(z_Y,z_S)=\bar{l}^{t+1}, Z_Y(\mathcal{C})=z_Y,  Z_S(\mathcal{C})=z_S, L^0=l^0)\\
=&\Pr(S^{t+1}(\mathcal{C})=0|S^{t}(\mathcal{C})=1, \bar{L}^{t+1}(\mathcal{C})=\bar{l}^{t+1}, Z_Y(\mathcal{C})=z_Y,  Z_S(\mathcal{C})=z_S, L^0=l^0)\\
=&\Pr(S^{t+1}(\mathcal{C})=0|S^{t}(\mathcal{C})=1, \bar{L}^{t+1}(\mathcal{C})=\bar{l}^{t+1}, Z_Y(\mathcal{C})=z_S,  Z_S(\mathcal{C})=z_S, L^0=l^0)\\
=&\Pr(S^{t+1}(z_S,z_S)=0|S^{t}(z_S,z_S)=1, \bar{L}^{t+1}(z_S,z_S)=\bar{l}^{t+1}, L^0=l^0)\\
=&\Pr(S^{t+1}(z=z_S)=0|S^{t}(z=z_S)=1, \bar{L}^{t+1}(z=z_S)=\bar{l}^{t+1}, L^0=l^0)\\
=&\Pr(S^{t+1}=0|S^{t}=1, \bar{L}^{t+1}=\bar{l}^{t+1}, Z=z_S, L^0=l^0),
\end{align*}
where the first equality follows from the fact that the treatment components $(Z_Y(\mathcal{C}), Z_S(\mathcal{C}))$ are randomly assigned, the third equation follows from Assumption \ref{ass:DCC_LongSE-2}(1), the fifth equation follows from Assumption \ref{ass:MTA}, and the last equation follows from Assumption \ref{ass:ranIgn}.
Similarly, it follows from Assumptions \ref{ass:MTA}, \ref{ass:ranIgn} and \ref{ass:DCC_LongSE-2}(2) that we have 
\begin{align*}
&\Pr({L}^{t+1}(z_Y,z_S)={l}^{t+1}|S^{t}(z_Y,z_S)=1, \bar{L}^{t}(z_Y,z_S)=\bar{l}^{t}, L^0=l^0)\\
=&\Pr({L}^{t+1}={l}^{t+1}|S^{t}=1, \bar{L}^{t}=\bar{l}^{t}, Z=z_S, L^0=l^0).
\end{align*}
Furthermore, according to Assumptions \ref{ass:MTA}, \ref{ass:DCC_LongSE-2} and \ref{ass:ranIgn}, we repeat the above procedures and obtain
\begin{align*}
&\Pr(S^{i}(z_Y,z_S)=1|S^{i-1}(z_Y,z_S)=1, \bar{L}^{i}(z_Y,z_S)=\bar{l}^{i}, L^0=l^0)\\
=&\Pr(S^{i}=1|S^{i-1}=1, \bar{L}^{i}=\bar{l}^{i}, Z=z_S, L^0=l^0),\\
&\Pr({L}^{i}(z_Y,z_S)={l}^{i}|S^{i-1}(z_Y,z_S)=1, \bar{L}^{i-1}(z_Y,z_S)=\bar{l}^{i-1}, L^0=l^0)\\
=&\Pr({L}^{i}={l}^{i}|S^{i-1}=1, \bar{L}^{i-1}=\bar{l}^{i-1}, Z=z_S, L^0=l^0).
\end{align*}
Combining the above equalities, we obtain
\begin{align*}
&\Pr(T(z_Y,z_S)=t,\bar{L}^r(z_Y,z_S)=\bar{l}^r, L^0=l^0)\\
=& \sum_{\bar{l}^{t+1}\backslash \bar{l}^{r+1}} \Pr(S^{t+1}=0|S^{t}=1, \bar{L}^{t+1}=\bar{l}^{t+1}, Z=z_S,L^0=l^0)\\
&\times\Pr({L}^{t+1}={l}^{t+1}|S^{t}=1, \bar{L}^{t}=\bar{l}^{t}, Z=z_S, L^0=l^0)\\
&\times \Pi_{i=1}^t \Big\{  \Pr(S^{i}=1|S^{i-1}=1, \bar{L}^{i}=\bar{l}^{i}, Z=z_S, L^0=l^0)\\
&\times\Pr({L}^{i}={l}^{i}|S^{i-1}=1, \bar{L}^{i-1}=\bar{l}^{i-1}, Z=z_S, L^0=l^0)\Big\}\times p_{L^0}(l^0)\\
=&\Pr(T=t,\bar{L}^r=\bar{l}^r|Z=z_S,L^0=l^0)p_{L^0}(l^0).
\end{align*}

Finally, combining all the above results, we have
\begin{align*}
&\mathbb{E}[\mathbb{I}(T(z_Y,z_S)=t)Y^r(z_Y, z_S)]\\
=&\sum_{\bar{l}^r} \mathbb{E}(Y^r | S^r=1, \bar{L}^r=\bar{l}^r, Z=z_Y, L^0=l^0)\Pr(T=t, \bar{L}^r=\bar{l}^r | Z=z_S, L^0=l^0)p_{L^0}(l^0)\\
=& \mathbb{E}\Big \{ \mathbb{E}\Big[ \mathbb{E}(Y^r | S^r=1, \bar{L}^r, Z=z_Y, L^0)\mathbb{I}(T=t)\Big | Z=z_S, L^0\Big] \Big\}.
\end{align*}
This completes the proof of Theorem \ref{thm:identi-SE}.
\end{proof}

\begin{proof}[Proof of Proposition \ref{prop:connection}]
For $z=0,1$ and $r\leq t$, it follows from Assumption \ref{ass:MTA} that we have
\begin{align*}
\mathbb{E}[\mathbb{I}(T(z)=t)Y^r(z)]=
\mathbb{E}[\mathbb{I}(T(z_Y=z,z_S=z)=t)Y^r(z_Y=z,z_S=z)],
\end{align*}
which implies $\mu(0)=\Gamma(0,0)$ if $w_t^r=\psi_t^r$.
Then, consider
\begin{align*}
&\mathbb{E}[\mathbb{I}(T(0)\land T(1)=t)Y^r(1)] \\
=&\mathbb{E}[\mathbb{I}(T(0)=t)Y^r(1)]\\
=& \mathbb{E}[\mathbb{I}(T(z_Y=0,z_S=0)=t)Y^r(z_Y=1, z_S=1)]\\
=& \mathbb{E}[\mathbb{I}(T(z_Y=0,z_S=0)=t)Y^r(z_Y=1, z_S=0)]\\
=& \mathbb{E}[\mathbb{I}(T(z_S=0)=t)Y^r(z_Y=1, z_S=0)],
\end{align*}
where the first, second, third, and last equalities follow from Assumptions \ref{ass:Monoto-2}, \ref{ass:MTA}, \ref{ass:Z_sPI}, and \ref{ass:Z_yPI}, respectively, and thus there holds true that $\mu(1)=\Gamma(1,0)$ if $w_t^r=\psi_t^r$. Finally, when $\phi_t^r=\psi_t^r$, we also have $\mu^{\text{ext}}=\Gamma(1,1)-\Gamma(1,0)$. This completes the proof of Proposition \ref{prop:connection}.
 
\end{proof}

We characterize the nonparametric influence function $\textbf{IF}(\Lambda_{t}^r(z_Y,z_S))$ of $\Lambda_t^r(z_Y,z_S)$ for $r \ge 1$ without imposing restrictions on the observed data distribution \citep{bickel1993efficient, tsiatis2006semiparametric} in the following Theorem.

\begin{theorem}\label{thm:triplyRobust-SE}
	The nonparametric influence function of the estimand \eqref{eq:iden-SE} is given by
	\begin{align*}
		\textbf{IF}(\Lambda_{t}^r(z_Y,z_S)) 
		=& \mathbb{E}\left\{ \mathbb{I}(T=t)\mathbb{E}\left(Y^r \mid S^r=1, \bar{L}^r, Z=z_Y, L^0\right)  \Big | Z=z_S, L^0\right\} \\
		& + \frac{\mathbb{I}( Z=z_S)}{\Pr( Z=z_S\mid L^0)}\Big [\mathbb{I}(T=t)\mathbb{E}(Y^r \mid S^r=1, \bar{L}^r, Z=z_Y, L^0) \\
		&\qquad - \mathbb{E}\left\{\mathbb{I}(T=t)\mathbb{E}(Y^r \mid S^r=1, \bar{L}^r, Z=z_Y, L^0) \mid Z=z_S, L^0\right\} \Big]\\
		&+ \frac{\mathbb{I}(Z=z_Y)}{\Pr(Z=z_Y\mid L^0)} 
		\frac{p_{T,\bar{L}^r \mid Z,L^0}(t, \bar{L}^r \mid z_S,L^0)}{p_{S^r,\bar{L}^r \mid Z,L^0}(1,\bar{L}^r \mid z_Y,L^0)}
		S^r\left\{Y^r-\mathbb{E}(Y^r\mid S^r=1, \bar{L}^r, Z=z_Y,L^0)\right\}\\
		&+ \mathbb{E}\left\{\mathbb{I}(T=t) \mathbb{E}(Y^r \mid S^r=1, \bar{L}^r, Z=z_Y, L^0)\Big | Z=z_S, L^0\right\}
		\frac{\mathbb{I}(Z=z_S)}{\Pr(Z=z_S\mid L^0)}\\
		&\qquad \times \left\{\frac{\mathbb{I}(T=t)}{\Pr(T=t\mid Z=z_S,L^0)}-1\right\}
		- \Lambda_t^r(z_Y,z_S).
	\end{align*}
\end{theorem}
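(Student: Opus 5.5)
The plan is to derive the influence function by the standard pathwise-derivative (Gateaux) argument. First I will write the functional \eqref{eq:iden-SE} in a factorized form that isolates each nuisance piece:
\[
\Lambda_t^r(z_Y,z_S)=\int \sum_{\bar l^r} b(\bar l^r,l^0)\, p_{T,\bar L^r\mid Z,L^0}(t,\bar l^r\mid z_S,l^0)\, p_{L^0}(l^0)\,dl^0,
\]
where $b(\bar l^r,l^0)=\mathbb{E}(Y^r\mid S^r=1,\bar L^r=\bar l^r,Z=z_Y,L^0=l^0)$. I will then take a regular one-dimensional parametric submodel $p_\epsilon$ through the true law, with score $s(O)$. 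Next, I will differentiate $\Lambda_t^r$ at $\epsilon=0$ and apply the product rule. This gives three pieces, and each one will be expressed as $\mathbb{E}[\textbf{IF}\cdot s]$ using the usual score decomposition. For a conditional law $p(V\mid W)$, the derivative of $\int h\,p(v\mid w)$ equals $\mathbb{E}[\{h(V,W)-\mathbb{E}(h\mid W)\}s(O)\mid W]$, restricted to the relevant conditioning event.

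The three pieces are handled as follows.

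\emph{The marginal of $L^0$.} Its derivative yields the centered term $\mathbb{E}\{\mathbb{I}(T=t)b\mid Z=z_S,L^0\}-\Lambda_t^r$. This accounts for the first line and the final subtraction in the statement.

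\emph{The conditional law of $(T,\bar L^r)$ given $Z=z_S,L^0$.} I will first perturb the law of $(\bar L^r,T)$ given $(Z,L^0)$ and then reweight by $\mathbb{I}(Z=z_S)/\Pr(Z=z_S\mid L^0)$. This produces
\[
\frac{\mathbb{I}(Z=z_S)}{\Pr(Z=z_S\mid L^0)}\Big[\mathbb{I}(T=t)b-\mathbb{E}\{\mathbb{I}(T=t)b\mid Z=z_S,L^0\}\Big].
\]
The statement separately lists an extra term of the form $\mathbb{E}\{\cdot\}\frac{\mathbb{I}(Z=z_S)}{\Pr}\{\mathbb{I}(T=t)/\Pr(T=t\mid z_S,L^0)-1\}$. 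I read this as splitting $p(t,\bar l^r\mid\cdot)=\Pr(T=t\mid\cdot)\,p(\bar l^r\mid T=t,\cdot)$ and perturbing the two factors separately. Perturbing $\Pr(T=t\mid z_S,L^0)$ gives exactly the displayed term. Perturbing $p(\bar l^r\mid T=t)$ gives a term of the form $\mathbb{I}(Z=z_S,T=t)\{b-\mathbb{E}(b\mid T=t,z_S,L^0)\}/\Pr(Z=z_S,T=t\mid L^0)$. I will check that these two contributions recombine so as to match the statement's bookkeeping. Because $\mathbb{E}\{\mathbb{I}(T=t)b\mid z_S,L^0\}=\Pr(T=t\mid z_S,L^0)\,\mathbb{E}(b\mid T=t,z_S,L^0)$, this reduces to algebra, and I expect the grouping to follow after adding and subtracting the conditional mean.

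\emph{The outcome regression $b$.} The derivative of $b$ along the submodel is
\[
\mathbb{E}\big[S^r\mathbb{I}(Z=z_Y)\{Y^r-b\}s\mid \bar L^r,L^0\big]\big/p(S^r=1,Z=z_Y\mid \bar L^r,L^0).
\]
Integrating this against $p(t,\bar l^r\mid z_S,l^0)p(l^0)$ and applying Bayes' rule converts the density ratio into
\[
\frac{\mathbb{I}(Z=z_Y)}{\Pr(Z=z_Y\mid L^0)}\cdot \frac{p_{T,\bar L^r\mid Z,L^0}(t,\bar L^r\mid z_S,L^0)}{p_{S^r,\bar L^r\mid Z,L^0}(1,\bar L^r\mid z_Y,L^0)}\, S^r\{Y^r-b\},
\]
which is the third line of the statement. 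Positivity (Assumption \ref{ass:postivity}) ensures that the denominators are nonzero.

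Finally, I will verify that every term has mean zero, so that $\mathbb{E}[\textbf{IF}\cdot s]$ equals the pathwise derivative for all scores. Since the model is nonparametric, this identifies the unique influence function. I expect the main obstacle to be the second piece: the paper's grouping lists both a centered $\mathbb{I}(T=t)b$ term and a separate $\Pr(T=t)$-score term, so I must make sure nothing is double counted. If the terms do not match exactly, I would re-derive that piece directly via a point-mass contamination of the discrete $(T,\bar L^r)$ law and compare.
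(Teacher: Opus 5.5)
The step that fails is the one you defer: reconciling your second piece with the statement. Your three-piece derivation is correct, but carried out faithfully it cannot produce the displayed formula, because that formula double counts the perturbation of $\Pr(T=t\mid Z=z_S,L^0)$.

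Write $b=\mathbb{E}(Y^r\mid S^r=1,\bar L^r,Z=z_Y,L^0)$, $\theta=\mathbb{E}\{\mathbb{I}(T=t)b\mid Z=z_S,L^0\}$, $p_t=\Pr(T=t\mid Z=z_S,L^0)$ and $m=\mathbb{E}(b\mid T=t,Z=z_S,L^0)$, so that $\theta=p_t m$. Splitting the law of $(T,\bar L^r)$ as you propose gives two contributions:
$\mathbb{I}(Z=z_S,T=t)(b-m)/\Pr(Z=z_S\mid L^0)$, once multiplied by the factor $p_t$ that your expression omits; and
$m\,\mathbb{I}(Z=z_S)\{\mathbb{I}(T=t)-p_t\}/\Pr(Z=z_S\mid L^0)$.
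These add up exactly to the second line of the statement, which is what you already obtained by perturbing the joint law directly. The fourth line of the statement is the second contribution over again. So the stated formula equals the true influence function plus a generically nonzero, mean-zero function of $(T,Z,L^0)$ that lies in the tangent space for the law of $T$ given $(Z,L^0)$. No regrouping by adding and subtracting conditional means can remove it, since in a nonparametric model the influence function is unique.

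A minimal check: take $\bar L^r$ empty, $L^0$ degenerate and $z_Y=z_S=z$. Then $\Lambda_t^r=p_t b$, whose influence function is $b\,\mathbb{I}(Z=z)\{\mathbb{I}(T=t)-p_t\}/\Pr(Z=z)+p_t\,\mathbb{I}(Z=z)S^r(Y^r-b)/\Pr(Z=z,S^r=1)$. The stated formula contains the first summand twice.

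The paper's proof reaches the extra term through an invalid step in its computation of $T_2$. Having written that derivative as $\mathbb{E}\{g\,S(\bar L^r\mid T,Z,L^0)\}$ with $g=\mathbb{I}(Z=z_S)\{\mathbb{I}(T=t)b-\theta\}/\Pr(Z=z_S\mid L^0)$, it replaces the conditional score by the full score. That replacement requires $\mathbb{E}(g\mid T,Z,L^0)=0$. In fact $\mathbb{E}(g\mid T,Z=z_S,L^0)=m\{\mathbb{I}(T=t)-p_t\}/\Pr(Z=z_S\mid L^0)$, so the replacement silently absorbs the $S(T\mid Z,L^0)$ contribution, and $T_4$ then adds it a second time.

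Your planned final check, that every term has mean zero, would not detect this, because the spurious term is itself mean zero. What certifies an influence function is that its covariance with every score equals the pathwise derivative, and that fails for scores of $T$ given $(Z,L^0)$. What your argument actually establishes is the corrected formula: the statement with its fourth line deleted.
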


\begin{proof}[Proof of Theorem \ref{thm:triplyRobust-SE}]
Let
\begin{align*}
\nu(p)=&\mathbb{E}[\mathbb{I}(t=T(z_S))Y^r(z_Y, z_S)]\\
=& \mathbb{E}\Big \{ \mathbb{E}\Big[ \mathbb{E}(Y^r | S^r=1, \bar{L}^r, Z=z_Y, L^0)\mathbb{I}(T=t)\Big | Z=z_S, L^0\Big] \Big\}\\
=& \mathbb{E}\Big \{ \mathbb{E}\Big[ \mathbb{E}(Y^r | S^r=1, \bar{L}^r, Z=z_Y, L^0)\Big | T=t, Z=z_S, L^0\Big] \Pr(T=t| Z=z_S, L^0) \Big\},
\end{align*}
and consider 
\begin{align*}
&\frac{d \nu(p_{\epsilon})}{d \epsilon} \Big|_{\epsilon=0}\\
=& \int \mathbb{E}_\epsilon\Big[ \mathbb{E}_\epsilon(Y^r | S^r=1, \bar{L}^r, Z=z_Y, L^0)\Big | T=t, Z=z_S, L^0\Big] {\Pr}_{\epsilon}(T=t| Z=z_S, L^0) \frac{d f_\epsilon(l^0)}{d \epsilon} dl^0 \Big|_{\epsilon=0}\\
&+ \mathbb{E}_\epsilon\Big \{ \int \mathbb{E}_\epsilon(Y^r | S^r=1, \bar{L}^r, Z=z_Y, L^0) \frac{df_\epsilon(\bar{l}^r| T=t, Z=z_S, L^0 )}{d\epsilon}d\bar{l}^r {\Pr}_{\epsilon}(T=t| Z=z_S, L^0) \Big\} \Big|_{\epsilon=0}\\
&+  \mathbb{E}_\epsilon\Big \{ \mathbb{E}_\epsilon \Big[ \int y^r \frac{df_\epsilon(Y^r | S^r=1, \bar{L}^r, Z=z_Y, L^0)}{d\epsilon}dy^r \Big | T=t, Z=z_S, L^0\Big] {\Pr}_{\epsilon}(T=t| Z=z_S, L^0) \Big\}\Big|_{\epsilon=0}\\
&+ \mathbb{E}_\epsilon \Big \{ \mathbb{E}_\epsilon\Big[ \mathbb{E}_\epsilon(Y^r | S^r=1, \bar{L}^r, Z=z_Y, L^0)\Big | T=t, Z=z_S, L^0\Big] \frac{df_\epsilon(T=t| Z=z_S, L^0)}{d\epsilon} \Big\} \Big|_{\epsilon=0}\\
:=& T_1 + T_2 + T_3 + T_4.
\end{align*}
For $T_1$, $T_2$, $T_3$ and $T_4$, we derive
\begin{align*}
T_1 =& \mathbb{E}\Big \{ \mathbb{E}\Big[ \mathbb{E}(Y^r | S^r=1, \bar{L}^r, Z=z_Y, L^0)\Big | T=t, Z=z_S, L^0\Big] \Pr(T=t| Z=z_S, L^0) S(L^0)\Big\}\\
=& \mathbb{E}\Big \{ \mathbb{E}\Big[ \mathbb{E}(Y^r | S^r=1, \bar{L}^r, Z=z_Y, L^0)\Big | T=t, Z=z_S, L^0\Big] \Pr(T=t| Z=z_S, L^0) S(Y^r, T, S^r, \bar{L}^r, Z, L^0)\Big\},
\end{align*}

\begin{align*}
&T_2 \\
= &\mathbb{E}\Big \{ \mathbb{E}\Big[ \mathbb{E}(Y^r | S^r=1, \bar{L}^r, Z=z_Y, L^0) S(\bar{L}^r|T=t, Z=z_s,L^0)\Big | T=t, Z=z_S, L^0\Big] \Pr(T=t| Z=z_S, L^0) \Big\}\\
=& \mathbb{E}\Big \{ \mathbb{E}\Big[ \frac{\mathbb{I}(T=t, Z=z_S)}{\Pr(T=t, Z=z_S|L^0)}\mathbb{E}(Y^r | S^r=1, \bar{L}^r, Z=z_Y, L^0) S(\bar{L}^r|T, Z,L^0)\Big | L^0\Big] \Pr(T=t| Z=z_S, L^0) \Big\}\\
=& \mathbb{E}\Big \{ \mathbb{E}\Big[ \frac{\mathbb{I}(T=t, Z=z_S)}{\Pr( Z=z_S|L^0)}\mathbb{E}(Y^r | S^r=1, \bar{L}^r, Z=z_Y, L^0) S(\bar{L}^r|T, Z,L^0)\Big | L^0\Big]  \Big\}\\
=& \mathbb{E}\Big \{\frac{\mathbb{I}(T=t, Z=z_S)}{\Pr( Z=z_S|L^0)}\mathbb{E}(Y^r | S^r=1, \bar{L}^r, Z=z_Y, L^0) S(\bar{L}^r|T, Z,L^0) \Big\}\\
=& \mathbb{E}\Big \{\frac{\mathbb{I}( Z=z_S)}{\Pr( Z=z_S|L^0)}\Big (\mathbb{I}(T=t)\mathbb{E}(Y^r | S^r=1, \bar{L}^r, Z=z_Y, L^0) \\
&- \mathbb{E}[\mathbb{I}(T=t)\mathbb{E}(Y^r | S^r=1, \bar{L}^r, Z=z_Y, L^0) | Z=z_s, L^0] \Big ) S(\bar{L}^r|T, Z,L^0) \Big\}\\
=&\mathbb{E}\Big \{\frac{\mathbb{I}( Z=z_S)}{\Pr( Z=z_S|L^0)}\Big (\mathbb{I}(T=t)\mathbb{E}(Y^r | S^r=1, \bar{L}^r, Z=z_Y, L^0) \\
&- \mathbb{E}[\mathbb{I}(T=t)\mathbb{E}(Y^r | S^r=1, \bar{L}^r, Z=z_Y, L^0) | Z=z_s, L^0] \Big ) S(\bar{L}^r,T, Z,L^0) \Big\}\\
=&\mathbb{E}\Big \{\frac{\mathbb{I}( Z=z_S)}{\Pr( Z=z_S|L^0)}\Big (\mathbb{I}(T=t)\mathbb{E}(Y^r | S^r=1, \bar{L}^r, Z=z_Y, L^0) \\
&- \mathbb{E}[\mathbb{I}(T=t)\mathbb{E}(Y^r | S^r=1, \bar{L}^r, Z=z_Y, L^0) | Z=z_s, L^0] \Big ) S(Y^r, T, S^r,\bar{L}^r, Z,L^0) \Big\},
\end{align*}

\begin{align*}
&T_3 \\
= & \mathbb{E}\Big \{ \mathbb{E}\Big[ \mathbb{E}(Y^r S(Y^r|S^r=1, \bar{L}^r, Z=z_Y, L^0) | S^r=1, \bar{L}^r, Z=z_Y, L^0)\Big | T=t, Z=z_S, L^0\Big] \\
& \times \Pr(T=t| Z=z_S, L^0) \Big\}\\
= &  \mathbb{E}\Big \{ \mathbb{E}\Big[ \frac{\mathbb{I}(T=t, Z=z_S)}{\Pr(Z=z_S | L^0)} \mathbb{E}(Y^r S(Y^r|S^r=1, \bar{L}^r, Z=z_Y, L^0) | S^r=1, \bar{L}^r, Z=z_Y, L^0)\Big |  L^0\Big]\Big \}\\
= &  \mathbb{E}\Big \{  \frac{\mathbb{I}(T=t, Z=z_S)}{\Pr(Z=z_S | L^0)} \mathbb{E}(\frac{\mathbb{I}(S^r=1,Z=z_Y)}{\Pr(S^r=1,Z=z_Y| \bar{L}^r, L^0)}Y^r S(Y^r|S^r, \bar{L}^r, Z, L^0) |  \bar{L}^r, L^0)\Big \}\\
= & \mathbb{E}\Big \{  \frac{\Pr(T=t, Z=z_S|\bar{L}^r, L^0)}{\Pr(Z=z_S | L^0)} \frac{\mathbb{I}(Z=z_Y)S^r}{\Pr(S^r=1,Z=z_Y| \bar{L}^r, L^0)} Y^r S(Y^r|S^r, \bar{L}^r, Z, L^0) \Big \}\\
=& \mathbb{E}\Big \{  \frac{\Pr(T=t, Z=z_S|\bar{L}^r, L^0)}{\Pr(Z=z_S | L^0)} \frac{\mathbb{I}(Z=z_Y)S^r}{\Pr(S^r=1,Z=z_Y| \bar{L}^r, L^0)} \\
& \times (Y^r-\mathbb{E}(Y^r|S^r=1, \bar{L}^r,Z=z_y,L^0)) S(Y^r|S^r, \bar{L}^r, Z, L^0) \Big \}\\
=& \mathbb{E}\Big \{  \frac{\Pr(T=t, Z=z_S|\bar{L}^r, L^0)}{\Pr(Z=z_S | L^0)} \frac{\mathbb{I}(Z=z_Y)S^r}{\Pr(S^r=1,Z=z_Y| \bar{L}^r, L^0)} \\
& \times (Y^r-\mathbb{E}(Y^r|S^r=1, \bar{L}^r,Z=z_y,L^0)) S(Y^r,T,S^r, \bar{L}^r, Z, L^0) \Big \},
\end{align*}
and 
\begin{align*}
&T_4\\
=& \mathbb{E}\Big \{ \mathbb{E}\Big[ \mathbb{E}(Y^r | S^r=1, \bar{L}^r, Z=z_Y, L^0)\Big | T=t, Z=z_S, L^0\Big] \mathbb{E}(\mathbb{I}(T=t)S(T|Z=z_s,L^0)| Z=z_S, L^0) \Big\}\\
=& \mathbb{E}\Big \{ \mathbb{E}\Big[ \mathbb{E}(Y^r | S^r=1, \bar{L}^r, Z=z_Y, L^0)\Big | T=t, Z=z_S, L^0\Big] \mathbb{E}(\frac{\mathbb{I}(Z=z_S)}{\Pr(Z=z_S|L^0)}\mathbb{I}(T=t)S(T|Z,L^0)| L^0) \Big\}\\
=& \mathbb{E}\Big \{ \mathbb{E}\Big[ \mathbb{E}(Y^r | S^r=1, \bar{L}^r, Z=z_Y, L^0)\Big | T=t, Z=z_S, L^0\Big] \frac{\mathbb{I}(Z=z_S)}{\Pr(Z=z_S|L^0)}\\
&\times (\mathbb{I}(T=t)-\mathbb{E}(T=t|Z=z_S,L^0))S(T|Z,L^0)\Big\}\\
=&\mathbb{E}\Big \{ \mathbb{E}\Big[ \mathbb{E}(Y^r | S^r=1, \bar{L}^r, Z=z_Y, L^0)\Big | T=t, Z=z_S, L^0\Big] \frac{\mathbb{I}(Z=z_S)}{\Pr(Z=z_S|L^0)}\\
&\times (\mathbb{I}(T=t)-\mathbb{E}(T=t|Z=z_S,L^0))S(T,Z,L^0)\Big\}\\
=&\mathbb{E}\Big \{ \mathbb{E}\Big[ \mathbb{E}(Y^r | S^r=1, \bar{L}^r, Z=z_Y, L^0)\Big | T=t, Z=z_S, L^0\Big] \frac{\mathbb{I}(Z=z_S)}{\Pr(Z=z_S|L^0)}\\
&\times (\mathbb{I}(T=t)-\Pr(T=t|Z=z_S,L^0))S(Y,T,S^r,\bar{L}^r,Z,L^0)\Big\}.
\end{align*}

Thus, the influence function of $\nu(p)$ is 
\begin{align*}
\nu^1(p) =& \mathbb{E}\Big[ \mathbb{E}(Y^r | S^r=1, \bar{L}^r, Z=z_Y, L^0)\Big | T=t, Z=z_S, L^0\Big] \Pr(T=t| Z=z_S, L^0) - \nu(p) \\
& + \frac{\mathbb{I}( Z=z_S)}{\Pr( Z=z_S|L^0)}\Big (\mathbb{I}(T=t)\mathbb{E}(Y^r | S^r=1, \bar{L}^r, Z=z_Y, L^0) \\
&~ - \mathbb{E}[\mathbb{I}(T=t)\mathbb{E}(Y^r | S^r=1, \bar{L}^r, Z=z_Y, L^0) | Z=z_s, L^0] \Big )\\
&+   \frac{\Pr(T=t, Z=z_S|\bar{L}^r, L^0)}{\Pr(Z=z_S | L^0)} \frac{\mathbb{I}(Z=z_Y)S^r}{\Pr(S^r=1,Z=z_Y| \bar{L}^r, L^0)} (Y^r-\mathbb{E}(Y^r|S^r=1, \bar{L}^r,Z=z_y,L^0))\\
&+ \mathbb{E}\Big[ \mathbb{E}(Y^r | S^r=1, \bar{L}^r, Z=z_Y, L^0)\Big | T=t, Z=z_S, L^0\Big] \frac{\mathbb{I}(Z=z_S)}{\Pr(Z=z_S|L^0)}\\
&~\times (\mathbb{I}(T=t)-\Pr(T=t|Z=z_S,L^0))\\
=& \mathbb{E}\Big[ \mathbb{I}(T=t)\mathbb{E}(Y^r | S^r=1, \bar{L}^r, Z=z_Y, L^0)  \Big | Z=z_S, L^0\Big] - \nu(p) \\
& + \frac{\mathbb{I}( Z=z_S)}{\Pr( Z=z_S|L^0)}\Big (\mathbb{I}(T=t)\mathbb{E}(Y^r | S^r=1, \bar{L}^r, Z=z_Y, L^0) \\
&~ - \mathbb{E}[\mathbb{I}(T=t)\mathbb{E}(Y^r | S^r=1, \bar{L}^r, Z=z_Y, L^0) | Z=z_s, L^0] \Big )\\
&+    \frac{\mathbb{I}(Z=z_Y)}{\Pr(Z=z_Y|L^0)} \frac{\Pr(T=t, \bar{L}^r | Z=z_S,L^0)}{\Pr(S^r=1,\bar{L}^r| Z=z_Y,L^0)}S^r(Y^r-\mathbb{E}(Y^r|S^r=1, \bar{L}^r,Z=z_y,L^0))\\
&+ \mathbb{E}\Big[\mathbb{I}(T=t) \mathbb{E}(Y^r | S^r=1, \bar{L}^r, Z=z_Y, L^0)\Big | Z=z_S, L^0\Big] \frac{\mathbb{I}(Z=z_S)}{\Pr(Z=z_S|L^0)}\\
&~\times (\frac{\mathbb{I}(T=t)}{\Pr(T=t|Z=z_S,L^0)}-1).
\end{align*}

\end{proof}

\suppsection{Identification and estimation of while-alive estimands}{supple:while-alive}

The following Table \ref{tab:example-weight-while-alive} provides four example types of weighting schemes for the while-alive estimand $\lambda(z)$.

\begin{table}[!htbp]
  \centering
  \caption{Four example types of while-alive estimands with their corresponding weighting schemes}
  \label{tab:example-weight-while-alive} 
  \renewcommand{\arraystretch}{1.2}
  \begin{tabular}{lll}
    \toprule
    Type &  Weight \\
    \midrule
    Exit time &  $\zeta_{T(z)}^r=
\begin{cases}
1, & r=T(z),\\
0,  & r<T(z).
\end{cases}$ \\
    Average &  $\zeta_{T(z)}^r=\frac{1}{T(z)+1}$ \\
    Cumulative &  $\zeta_{T(z)}^r=1$ \\
    AUC-based &  $\zeta_{T(z)}^r= \mathbb{I}(T(z)\geq 1) \times
\begin{cases}
\frac{1}{2}(\tau_{T(z)} - \tau_{{T(z)}-1}), & r=T(z),\\
\frac{1}{2}(\tau_{r+1}-\tau_{r-1}),  & 0<r< {T(z)},\\
\frac{1}{2}\tau_1, & r=0.
\end{cases}$ \\
    \bottomrule
  \end{tabular}
\renewcommand{\arraystretch}{1}
\end{table}

Under Assumptions \ref{ass:S-ig-2}, \ref{ass:postivity} and \ref{ass:Y-ig-2}, $\lambda(z)$ is identifiable from the observed distribution. Then, the while-alive estimator for $\lambda(z)$ with $z=0,1$ is
\begin{align*}
\widehat{\lambda}(z)=\sum_{t=0}^{t_{\max}} \sum_{r=0}^t \zeta_t^r \mathbb{P}_n \Big \{\frac{\mathbb{I}(T=t, Z=z)}{zl_z(L^0;\widehat{\theta})+(1-z)(1-l_z(L^0;\widehat{\theta}))}Y^r \Big \}.
\end{align*}

\suppsection{Simulation}{supple:simulation}

We examine the numerical performance of the proposed estimators in Sections \ref{sec:subA-est} and \ref{sec:estimator-SE}, respectively. The estimators are evaluated in terms of mean bias (scaled by $10^3$) and standard error (scaled by $10^3$).

\subsection*{Substitution variable approach}

The specific data-generating mechanism is described as follows:
\begin{enumerate}[label=(\arabic*)]
\item Generate baseline covariates $X=(X_1,X_2,X_3)^\top$: $X_1$ is a discrete variable taking values $-1$ and $1$ with probabilities $\Pr(X_1=-1)=\Pr(X_1=1)=0.5$; 
\(
X_2,X_3\stackrel{\text{ind}}{\sim}\mathrm{Uniform}(-1,1);
\)

\item Generate substitution variable $A$ from a Bernoulli distribution with $\Pr(A=1 | X)= \text{expit}(\iota^\top X)$, where $\iota=(0.2,0.1,-0.1)^\top$;

\item Generate exposure variable $Z$ from a Bernoulli distribution with $\Pr(Z=1| L^0)=\text{expit}(\theta_z^\top L^0)$, where $L^0=(A, X^\top)^\top$ and $\theta_z=(0.1,0.2,-0.1,0.1)^\top$;

\item Set $t_{\max}=3$, $\tau=(0,1/4,1/2,1)$, $S^0=1$ and $Y^0=0$;

\item Generate the three independent latent processes. Using the latent-process notation introduced earlier, 
\(T_{\mathrm D}\) denotes the treatment-resistant death process, 
\(T_{\mathrm H}\) denotes the treatment-sensitive harmful process, 
and \(T_{\mathrm P}\) denotes the treatment-resistant progression process. 
For \(r=1,2,3\), let \(p_r(L^0)\), \(\rho_r(L^0)\), and \(s_r(L^0)\) be the conditional probabilities that these three processes, respectively, do not occur during interval \(r\), given that they have not occurred before interval \(r\). Specifically,
\[
p_1(L^0)=\operatorname{expit}(2.2+0.2A+0.3X_1-0.2X_2+0.1X_3),
\]
\[
p_2(L^0)=\operatorname{expit}(2.1+0.2A+0.3X_1-0.2X_2+0.1X_3),
\]
\[
p_3(L^0)=\operatorname{expit}(2.0+0.2A+0.3X_1-0.2X_2+0.1X_3),
\]
\[
\rho_1(L^0)=\rho_2(L^0)=\rho_3(L^0)
=\operatorname{expit}(1.4+0.1A-0.2X_1+0.1X_2+0.2X_3),
\]
and
\[
s_1(L^0)=\operatorname{expit}(0.5-0.2A+0.1X_1-0.2X_2+0.1X_3),
\]
\[
s_2(L^0)=\operatorname{expit}(0.4-0.2A+0.1X_1-0.2X_2+0.1X_3),
\]
\[
s_3(L^0)=\operatorname{expit}(0.3-0.2A+0.1X_1-0.2X_2+0.1X_3).
\]
Generate independent Bernoulli variables
\[
B_D^r\sim\operatorname{Bernoulli}\{p_r(L^0)\},\qquad
B_H^r\sim\operatorname{Bernoulli}\{\rho_r(L^0)\},\qquad
B_P^r\sim\operatorname{Bernoulli}\{s_r(L^0)\},
\]
and set \(R_D^0=R_H^0=R_P^0=1\). Define 
\[
R_D^r=R_D^{r-1}B_D^r,\qquad
R_H^r=R_H^{r-1}B_H^r,\qquad
R_P^r=R_P^{r-1}B_P^r, 
\] recursively.

\item Construct \(S^r(z)\), \(L^r(z)\), \(T(z)\), and the survival type \(G^r\). For \(r=1,2,3\), define
\[
S^r(0)=R_H^rR_D^r,\qquad 
S^r(1)=R_D^r,
\]
and
\[
L^r(0)=1-R_H^rR_P^rR_D^r,\qquad
L^r(1)=1-R_P^rR_D^r.
\]
Then let
\[
T(z)=\sum_{r=1}^3 S^r(z),
\]
and define \(G^r\) from the pair \((S^r(1),S^r(0))\).

\item Generate $Y^r(0)$ if $S^r(0)=1$ and $Y^r(1)$ if $S^r(1)=1$, for $r=1,2,3$. Let
\(
B(L^0)=0.5+0.2A+0.3X_1-0.2X_2+0.2X_3
\), $\eta_Z=0.5$ and \(\eta_G=0.1\). (Code $LL$ to be $1$ and $LD$ to be $0$)
\begin{itemize}
\item Generate $Y^1(1)$ from a normal distribution $N(\xi^1_1,0.5^2)$, where $\xi_1^1=B(L^0)+L^1(1)+\eta_Z+\eta_GG^1$, if $S^1(1)=1$; generate $Y^r(1)$  from a normal distribution $N(\xi^r_1,0.5^2)$, where $\xi_1^r=B(L^0)+L^r(1)+ \eta_Z+r\eta_GG^r + Y^{r-1}(1)-(r-1)\eta_G G^{r-1}$, if $S^r(1)=1$ for $r=2,3$;
\item Generate $Y^1(0)$ from a normal distribution $N(\xi^1_0,0.5^2)$, where $\xi_0^1=B(L^0)+L^1(0)$, if $S^1(0)=1$; generate $Y^r(0)$  from a normal distribution $N(\xi^r_0,0.5^2)$, where $\xi_0^r=B(L^0)+L^r(0)+ Y^{r-1}(0)$, if $S^r(0)=1$ for $r=2,3$;
\end{itemize}

\item Reveal the observed data $(S^r,L^r,Y^r)$:
\(
S^r=ZS^r(1)+(1-Z)S^r(0),
\)
and if \(S^r=1\),
\(
L^r=ZL^r(1)+(1-Z)L^r(0)\) and \(
Y^r=ZY^r(1)+(1-Z)Y^r(0)
\).
If $S^r=0$, then $L^r$ and $Y^r$ are unobserved. Finally, \(T=\sum_{r=1}^3S^r\).
\end{enumerate}

We are interested in estimating $\mu(1)-\mu(0)$ and $\mu^{\text{ext}}$ under different weighting schemes, and the corresponding true values are
presented in Table \ref{tab:trueValue_SubA}. Table \ref{tab:re_simul_SubA} summarizes the simulation results based on sample sizes of $n=500,2000$ with $500$ independent replications.

\begin{table}[htbp]
	\centering
	\caption{The true values of $\mu(1)-\mu(0)$ and $\mu^{\text{ext}}$ under different weighting schemes}
	\label{tab:trueValue_SubA}
	\begin{tabular}{lcc}
		\toprule
		weight & $\mu(1)-\mu(0)$  & $\mu^{\text{ext}}$ \\
		\midrule
			Exit time & 0.98  &  -  \\
			Average & 0.49  & -  \\
			Cumulative & 1.75   & 3.06 \\
			AUC-based &  0.46  & 0.96  \\
		\bottomrule
	\end{tabular}
\end{table}

\begin{table}[!htb]
	\centering
	\caption{Bias ($\times 10^3$) and standard error ($\times 10^3$, shown in parentheses) for estimating $\mu(1)-\mu(0)$ and $\mu^{\text{ext}}$ with different choices of weights; Results are based on sample sizes of $n=500,2000$ with $500$ independent replications}
	\label{tab:re_simul_SubA}
	\begin{tabular}{llcc}
		\toprule
		Sample size & Weight  & $\mu(1)-\mu(0)$ & $\mu^{\text{ext}}$\\
		\midrule
		\multirow{4}{*}{$n=500$} 
		& Exit time & $-$43.95 (14.09) & - \\
		& Average & $-$18.23 (6.88) & - \\
		& Cumulative & $-$64.62 (24.74) & 63.21 (28.65) \\
		& AUC-based & $-$16.45 (6.54) & 14.67 (7.99) \\
		\midrule
		\multirow{4}{*}{$n=2000$} 
		& Exit time & $-$10.96 (8.76) & - \\
		& Average & $-$4.43 (4.14) & - \\
		& Cumulative & $-$16.82 (14.87) & 11.20 (16.13)\\
		& AUC-based & $-$4.58 (3.92) & 3.37 (4.45)\\
		\bottomrule
	\end{tabular}
	
\end{table}

\subsection*{Conditional separable effect approach}
We simulate samples according to Figure \ref{fig:Longi-SE-2} and the specific data-generating mechanism is described as follows:
\begin{enumerate}[label=(\arabic*)]
	\item Generate baseline covariates $L^0=(X^\top,A)^\top$;
	\begin{itemize}
		\item $X=(X_1,X_2,X_3)^\top$ are combination of discrete and continuous variables: $X_1$ is a discrete variable taking values $-1$ and $1$ with probabilities $\Pr(X_1=-1)=\Pr(X_1=1)=0.5$; 
		\(
		X_2,X_3\stackrel{\text{ind}}{\sim}\mathrm{Uniform}(-1,1);
		\)
		\item $A$ is generated from a Bernoulli distribution with $\Pr(A=1 | X)= \text{expit}(\iota^\top X)$, where $\iota=(0.2,0.1,-0.1)^\top$;
	\end{itemize}
	\item Generate exposure variable $Z$ from a Bernoulli distribution with $\Pr(Z=1| L^0)=\text{expit}(\theta_z^\top L^0)$, where $L^0=(A, X^\top)^\top$ and $\theta_z=(0.1,0.2,-0.1,0.1)^\top$, and define $Z \equiv Z_Y \equiv Z_S$;
	
	\item Set $t_{\max}=3$, $\tau=(0,1/4,1/2,1)$, $S^0=1$ and $Y^0=0$;
	\item Generate $L^t$, $S^t$ and $Y^t$ for $t=1,...,t_{\max}$; Let $\theta=(1.0,0.5,0.5,0.5)^\top$; 
	
	If $S^{t-1}=1$,
	\begin{itemize}
		\item $L^t$ is generated from a Bernoulli distribution with $\Pr(L^t=1 |S^{t-1}=1, \bar{L}^{t-1}, Z_S)=\text{expit}(\theta^\top L^0 + L^{t-1} + Z_S)$;
		\item  $S^t$ is generated from a Bernoulli distribution with $\Pr(S^t=1|S^{t-1}=1, \bar{L}^t, Z_S)=\text{expit}(-1.1+\theta^\top L^0 + L^{t} + 0.5Z_S)$;
		\item $Y^t$ is generated from a normal distribution $N(0.5+\theta^\top L^0 + L^t + Y^{t-1} + Z_Y, 0.5^2)$ if $S^t=1$.
	\end{itemize}
\end{enumerate}

We are interested in estimating $\Gamma(1,0)-\Gamma(0,0)$ and $\Gamma(1,1)-\Gamma(0,1)$ with different choices of weighting schemes, and their true values are
provided in Table \ref{tab:trueValue_SE}. Table \ref{tab:re_simul_SE} summarizes the simulation results based on sample sizes of $n=500,2000$ with $500$ independent replications. 

\begin{table}[!htb]
	\centering
	\caption{The true values of $\Gamma(1,0)-\Gamma(0,0)$ and $\Gamma(1,1)-\Gamma(0,1)$ under different weighting schemes}
	\label{tab:trueValue_SE}
	\begin{tabular}{lcc}
		\toprule
		Weight  & $\Gamma(1,0)-\Gamma(0,0)(z_S=0)$ & $\Gamma(1,1)-\Gamma(0,1)(z_S=1)$\\
		\midrule
			Exit time & 1.03 & 1.44\\
			Average & 0.51 & 0.72\\
			Cumulative & 1.76 & 2.58\\
			AUC-based & 0.44 & 0.68\\
		\bottomrule
	\end{tabular}
\end{table}

\begin{table}[!htb]
	\centering
	\caption{Bias ($\times 10^3$) and standard error ($\times 10^3$, shown in parentheses) for estimating $\Gamma(1,0)-\Gamma(0,0)$ and $\Gamma(1,1)-\Gamma(0,1)$ under different weighting schemes; Results are based on sample sizes of $n=500, 2000$ with $500$ independent replications.}
	\label{tab:re_simul_SE}
	\begin{tabular}{llcc}
		\toprule
		Sample size & Weight & $\Gamma(1,0)-\Gamma(0,0)(z_S=0)$ & $\Gamma(1,1)-\Gamma(0,1)(z_S=1)$\\
		\midrule
		\multirow{4}{*}{$n=500$} 
		& Exit time & 8.47 (3.91) & 6.42 (4.63)\\
		& Average & 4.31 (1.95) & 3.36 (2.27)\\
		& Cumulative & 14.71 (7.53) & 11.00 (8.80)\\
		& AUC-based & 3.49 (2.18)  & 2.82 (2.54)\\
		\midrule
		\multirow{4}{*}{$n=2000$} 
		& Exit time & 0.11 (1.98) & $-$2.03 (2.28)\\
		& Average & $-$0.07 (0.99) & $-$1.19 (1.13)\\
		& Cumulative & $-$0.20 (3.73)   & $-$4.73 (4.40)\\
		& AUC-based & $-$0.21 (1.06)   & $-$1.46 (1.27)\\
		\bottomrule
	\end{tabular}
\end{table}

\clearpage
\putbib
\end{bibunit}

\end{document}